\newcommand{\eqn}[1]{(\ref{eqn:#1})}
\newcommand{\rem}[1]{\hyperref[rem:#1]{Remark~\ref*{rem:#1}}}
\newcommand{\thm}[1]{\hyperref[thm:#1]{Theorem~\ref*{thm:#1}}}
\newcommand{\cor}[1]{\hyperref[cor:#1]{Corollary~\ref*{cor:#1}}}
\newcommand{\defn}[1]{\hyperref[defn:#1]{Definition~\ref*{defn:#1}}}
\newcommand{\lem}[1]{\hyperref[lem:#1]{Lemma~\ref*{lem:#1}}}
\newcommand{\prop}[1]{\hyperref[prop:#1]{Proposition~\ref*{prop:#1}}}
\newcommand{\fig}[1]{\hyperref[fig:#1]{Figure~\ref*{fig:#1}}}
\newcommand{\tab}[1]{\hyperref[tab:#1]{Table~\ref*{tab:#1}}}
\newcommand{\algo}[1]{\hyperref[algo:#1]{Algorithm~\ref*{algo:#1}}}
\renewcommand{\sec}[1]{\hyperref[sec:#1]{Section~\ref*{sec:#1}}}
\newcommand{\append}[1]{\hyperref[append:#1]{Appendix~\ref*{append:#1}}}
\newcommand{\fac}[1]{\hyperref[fac:#1]{Fact~\ref*{fac:#1}}}
\newcommand{\lin}[1]{\hyperref[lin:#1]{Line~\ref*{lin:#1}}}
\newcommand{\fnote}[1]{\hyperref[fnote:#1]{Footnote~\ref*{fnote:#1}}}
\newcommand{\vect}[1]{\ensuremath{\mathbf{#1}}}
\newcommand{\x}{\ensuremath{\mathbf{x}}}
\newcommand{\N}{\mathbb{N}}
\newcommand{\R}{\mathbb{R}}
\renewcommand{\P}{\mathbb{P}}
\DeclareMathOperator{\poly}{poly}
\DeclareMathOperator{\diag}{diag}
\newcommand{\tnb}{\tilde{\nabla}}
\renewcommand{\d}{\mathrm{d}}
\def\Tr{\operatorname{Tr}}
\newtheorem{theorem}{Theorem}
\newtheorem{lemma}{Lemma}
\newtheorem{proposition}{Proposition}
\newtheorem{corollary}{Corollary}
\newtheorem{remark}{Remark}
\begin{document}

\title{Quantum Algorithms for Escaping from Saddle Points}

\author{Chenyi Zhang$^{*}$}
\affiliation{Institute for Interdisciplinary Information Sciences, Tsinghua University, Beijing, China}
\author{Jiaqi Leng$^{*}$}
\affiliation{Department of Mathematics and Joint Center for Quantum Information and Computer Science, University of Maryland, College Park, MD, USA}
\author{Tongyang Li$^{\dagger}$}
\affiliation{Center on Frontiers of Computing Studies, Peking University, Beijing, China}
\affiliation{Center for Theoretical Physics, Massachusetts Institute of Technology, Cambridge, MA, USA}
\affiliation{Department of Computer Science and Joint Center for Quantum Information and Computer Science, University of Maryland, College Park, MD, USA}
\maketitle

\begin{abstract}
\def\thefootnote{*}\footnotetext{Equal contribution.}
\def\thefootnote{$\dagger$}\footnotetext{Corresponding author. Email: tongyangli@pku.edu.cn}
We initiate the study of quantum algorithms for escaping from saddle points with provable guarantee. Given a function $f\colon\R^{n}\to\R$, our quantum algorithm outputs an $\epsilon$-approximate second-order stationary point using $\tilde{O}(\log^{2} (n)/\epsilon^{1.75})$\def\thefootnote{1}\footnote{The $\tilde{O}$ notation omits poly-logarithmic terms, i.e., $\tilde{O}(g)=O(g\poly(\log g))$.} queries to the quantum evaluation oracle (i.e., the zeroth-order oracle). Compared to the classical state-of-the-art algorithm by Jin et al. with $\tilde{O}(\log^{6} (n)/\epsilon^{1.75})$ queries to the gradient oracle (i.e., the first-order oracle), our quantum algorithm is polynomially better in terms of $\log n$ and matches its complexity in terms of $1/\epsilon$. Technically, our main contribution is the idea of replacing the classical perturbations in gradient descent methods by simulating quantum wave equations, which constitutes the improvement in the quantum query complexity with $\log n$ factors for escaping from saddle points. We also show how to use a quantum gradient computation algorithm due to Jordan to replace the classical gradient queries by quantum evaluation queries with the same complexity. Finally, we also perform numerical experiments that support our theoretical findings.
\end{abstract}


\section{Introduction}
Nonconvex optimization is a central research topic in optimization theory, mainly because the loss functions in many machine learning models (including neural networks) are typically nonconvex. However, finding a global optimum of a nonconvex function is NP-hard in general. Instead, many theoretical works focus on finding local optima, since there are landscape results suggesting that local optima are nearly as good as the global optima for many learning problems~\cite{ge2015escaping,ge2017optimization,ge2018learning,bhojanapalli2016global,ge2016matrix,hardt2018gradient}. On the other hand, it is known that saddle points (and local maxima) can give highly suboptimal solutions in many problems~\cite{jain2017global,sun2018geometric}. Furthermore, saddle points are ubiquitous in high-dimensional nonconvex optimization problems~\cite{dauphin2014identifying,bray2007statistics,fyodorov2007replica}.

Therefore, one of the most important problems in nonconvex optimization is to \emph{escape from saddle points}. Suppose we have a twice-differentiable function $f\colon\R^{n}\to\R$ such that
\begin{itemize}
\item $f$ is $\ell$-smooth: $\|\nabla f(\x_{1})-\nabla f(\x_{2})\|\leq\ell\|\x_{1}-\x_{2}\|\quad\forall\,\x_{1},\x_{2}\in\R^{n}$,
\item $f$ is $\rho$-Hessian Lipschitz: $\|\nabla^{2} f(\x_{1})-\nabla^{2} f(\x_{2})\|\leq\rho\|\x_{1}-\x_{2}\|\quad\forall\,\x_{1},\x_{2}\in\R^{n}$;
\end{itemize}
the goal is to find an $\epsilon$-approximate local minimum $\x_{\epsilon}$ (also known as an $\epsilon$-approximate second-order stationary point) such that\footnote{In general, we can ask for an $(\epsilon_{1},\epsilon_{2})$-approximate local minimum $\x$ such that $\|\nabla f(\x)\|\leq\epsilon_{1}$ and $\lambda_{\min}(\nabla^{2}f(\x))\geq-\epsilon_{2}$. The scaling in \eqn{eps-approx-local-min} was first adopted by~\cite{nesterov2006cubic} and is taken as a standard by subsequent works~\cite{jin2017escape,jin2018accelerated,jin2019stochastic,xu2017neon,xu2018first,carmon2018accelerated,agarwal2017finding,tripuraneni2018stochastic,fang2019sharp}.}
\begin{align}\label{eqn:eps-approx-local-min}
\|\nabla f(\x_{\epsilon})\|\leq\epsilon,\quad\lambda_{\min}(\nabla^{2}f(\x_{\epsilon}))\geq-\sqrt{\rho\epsilon}.
\end{align}
Intuitively, this means that at $\x_{\epsilon}$, the gradient is small with norm being at most $\epsilon$, and the Hessian is close to be positive semi-definite with the smallest eigenvalue being at least $-\sqrt{\rho\epsilon}$.

There have been two main focuses on designing algorithms for escaping from saddle points. First, algorithms with good performance in practice are typically dimension-free or almost dimension-free (i.e., having $\poly(\log n)$ dependence), especially considering that most machine learning models in the real world have enormous dimensions. Second, practical algorithms prefer simple oracle access to the nonconvex function. If we are given a Hessian oracle of $f$, which takes $\x$ as the input and outputs $\nabla^{2} f(\x)$, we can find an $\epsilon$-approximate local minimum by second-order methods; for instance, Ref.~\cite{nesterov2006cubic} took $O(1/\epsilon^{1.5})$ queries. However, because the Hessian is an $n\times n$ matrix, its construction takes $\Omega(n^{2})$ cost in general. Therefore, it has become a notable interest to escape from saddle points using simpler oracles.

A seminal work along this line was by Ge et al.~\cite{ge2015escaping}, which can find an $\epsilon$-approximate local minimum satisfying \eqn{eps-approx-local-min} only using the first-order oracle, i.e., gradients. Although this paper has a $\poly(n)$ dependence in the query complexity of the oracle, the follow-up work by~\cite{jin2017escape} achieved to be almost dimension-free with complexity $\tilde{O}(\log^{4} (n)/\epsilon^{2})$, and the state-of-the-art result takes $\tilde{O}(\log^{6} (n)/\epsilon^{1.75})$ queries~\cite{jin2018accelerated}. However, these results suffer from a significant overhead in terms of $\log n$, and it has been an open question to keep both the merits of using only the first-order oracle as well as being close to dimension-free~\cite{jordan2017talk}.

On the other hand, quantum computing is a rapidly advancing technology. In particular, the capability of quantum computers is dramatically increasing and recently reached ``quantum supremacy''~\cite{Preskill2018NISQ} by Google~\cite{arute2019supremacy}. However, at the moment the noise of quantum gates prevents current quantum computers from being directly useful in practice; consequently, it is also of significant interest to understand quantum algorithms from a theoretical perspective for paving its way to future applications.

In this paper, we explore quantum algorithms for escaping from saddle points. This is a mutual generalization of both classical and quantum algorithms for optimization:
\begin{itemize}
\item For classical optimization theory, since many classical optimization methods are physics-motivated, including Nesterov's momentum-based methods~\cite{nesterov1983method}, Hamiltonian Monte Carlo~\cite{gao2018global} or stochastic gradient Langevin dynamics~\cite{zhang2017hitting}, etc., the elevation from classical mechanics to quantum mechanics can potentially bring more observations on designing fast \emph{quantum-inspired classical algorithms}. In fact, quantum-inspired classical machine learning algorithms have been an emerging topic in theoretical computer science~\cite{tang2018quantum2,tang2019quantum,chia2020SDP,chia2019framework,chia2020issac,gilyen2020improved,shao2021faster}, and it is worthwhile to explore relevant classical algorithms for optimization.

\item For quantum computing, the vast majority of previous quantum optimization algorithms had been devoted to convex optimization with the focuses on semidefinite programs~\cite{brandao2016quantum,vanApeldoorn2017quantum,brandao2017SDP,vanApeldoorn2018SDP,kerenidis2020quantum} and general convex optimization~\cite{vanApeldoorn2020optimization,chakrabarti2020optimization}; these results have at least a $\sqrt{n}$ dependence in their complexities, and their quantum algorithms are far from dimension-free methods. Up to now, little is known about quantum algorithms for nonconvex optimization.

However, there are inspirations that quantum speedups in nonconvex scenarios can potentially be more significant than convex scenarios. In particular, \emph{quantum tunneling} is a phenomenon in quantum mechanics where the wave function of a quantum particle can tunnel through a potential barrier and appear on the other side with significant probability. This very much resembles escaping from poor landscapes in nonconvex optimization. Moreover, quantum algorithms motivated by quantum tunneling will be essentially different from those motivated by the Grover search~\cite{grover1996fast}, and will demonstrate significant novelty if the quantum speedup compared to the classical counterparts is more than quadratic.
\end{itemize}

\subsection{Contributions}
Our main contribution is a quantum algorithm that can find an $\epsilon$-approximate local minimum of a function $f\colon\R^{n}\to\R$ that is smooth and Hessian Lipschitz. Compared to the classical state-of-the-art algorithm by~\cite{jin2018accelerated} using $\tilde{O}(\log^{6}(n)/\epsilon^{1.75})$ queries to the gradient oracle (i.e., the first-order oracle), our quantum algorithm achieves an improvement in query complexity with $\log n$ factors. Furthermore, our quantum algorithm only takes queries to the \emph{quantum evaluation oracle} (i.e., the zeroth-order oracle), which is defined as a unitary map $U_{f}$ on $\R^{n}\otimes\R$ such that for any $\ket{\x}\in\R^{n}$,
\begin{align}\label{eqn:quantum-evaluation}
U_{f}(\ket{\x}\otimes \ket{0})=\ket{\x}\otimes \ket{f(\x)}.
\end{align}
Furthermore, for any $m\in\mathbb{N}$, $\ket{\x_{1}},\ldots,\ket{\x_{m}}\in\mathbb{R}^{n}$, and $\vect{c}\in\mathbb{C}^{m}$ such that $\sum_{i=1}^{m}|\vect{c}_{i}|^{2}=1$,
\begin{align}
U_{f}\Big(\sum_{i=1}^{m}\vect{c}_{i}\ket{\x_{i}}\otimes \ket{0}\Big)=\sum_{i=1}^{m}\vect{c}_{i}\ket{\x_{i}}\otimes \ket{f(\x_{i})}.
\end{align}
If we measure this quantum state, we get $f(\vect{x}_{i})$ with probability $|\vect{c}_{i}|^{2}$. Compared to the classical evaluation oracle (i.e., $m=1$), the quantum evaluation oracle allows the ability to query different locations in \emph{superposition}, which is the essence of speedups from quantum algorithms. In addition, if the classical evaluation oracle can be implemented by explicit arithmetic circuits, the quantum evaluation oracle in \eqn{quantum-evaluation} can be implemented by quantum arithmetic circuits of about the same size. As a result, it is the standard assumption in previous literature on quantum algorithms for various optimization problems, including quadratic forms~\cite{jordan2008quantum}, basin hopper~\cite{bulger2005quantum}, and general convex optimization~\cite{vanApeldoorn2020optimization,chakrabarti2020optimization}. Subsequently, we adopt it here for general nonconvex optimization.

\begin{theorem}[Main result, informal]\label{thm:main-intro}
Our quantum algorithm finds an $\epsilon$-approximate local minimum using $\tilde{O}(\log^{2} (n)/\epsilon^{1.75})$ queries to the quantum evaluation oracle~\eqn{quantum-evaluation}.
\end{theorem}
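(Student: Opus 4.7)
The plan is to graft a quantum escape-from-saddle subroutine onto the classical perturbed accelerated gradient descent (PAGD) framework of Jin et al., whose outer-loop analysis already delivers the $1/\epsilon^{1.75}$ dependence. In PAGD, almost all iterations are plain Nesterov momentum steps; only when the gradient norm drops below $\epsilon$ does the algorithm invoke an escape subroutine, which adds a uniform perturbation in a small ball around the current iterate and checks whether the subsequent dynamics produce sufficient function decrease. The $\log^{6} n$ overhead is localized in this subroutine, because a random direction in $\R^{n}$ has only $\Omega(1/\sqrt{n})$ expected overlap with the negative-curvature subspace, forcing polylogarithmic repetitions that are then compounded with the polylogarithmic length of each negative-curvature run. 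My target is to shrink this overhead to $\log^{2} n$ by replacing the ball perturbation with a quantum subroutine while leaving the outer analysis intact.

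The quantum subroutine would prepare a narrow Gaussian wave packet $\psi_{0}$ centered at the candidate saddle $\x_{0}$ and evolve it under a Schr\"odinger equation
\begin{align*}
i\,\partial_{t}\psi = \bigl(-\tfrac{1}{2}\Delta + f\bigr)\psi
\end{align*}
for a time $T=\tilde{O}(1/\sqrt{\rho\epsilon})$, then measure in the position basis and return the resulting displacement from $\x_{0}$ as the perturbation. In the local quadratic model $f(\x)\approx f(\x_{0})+\tfrac{1}{2}(\x-\x_{0})^{\top}H(\x-\x_{0})$ the evolution decouples along the Hessian eigenbasis: along a direction with eigenvalue $-\lambda<0$ it becomes an inverted harmonic oscillator whose Gaussian width inflates as $\cosh(\sqrt{\lambda}\,t)$, so the measured displacement aligns with the most-negative-curvature direction with $\Omega(1)$ probability after one run. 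This eliminates the $\sqrt{n}$-type amplification loss of the classical ball and strips several $\log n$ factors from the subroutine. Access to the potential $f$ in the Hamiltonian is provided directly by the quantum evaluation oracle \eqn{quantum-evaluation}, and the evolution is realized by standard Hamiltonian simulation on a fine spatial grid with $\poly\log(n,1/\epsilon)$ qubits per coordinate.

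To complete the reduction to the zeroth-order oracle, I would replace every gradient query in the outer PAGD loop and in the ancillary descent tests by Jordan's quantum gradient algorithm, which estimates $\nabla f$ at a point from $O(1)$ superposed queries to a phase oracle that is itself built from two queries to $U_{f}$ together with arithmetic and phase kickback. Under the $\ell$-smoothness and $\rho$-Hessian-Lipschitz assumptions, this yields a gradient estimate of any desired polynomial accuracy at $\poly\log(n,1/\epsilon)$ cost, so the final query bound is (number of outer iterations) $\times$ (per-iteration quantum cost) $=\tilde{O}(1/\epsilon^{1.75})\cdot\tilde{O}(\log^{2} n)$, matching the claim.

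The main obstacle I expect is twofold. First, the Schr\"odinger evolution must be analyzed not on the ideal quadratic Hessian but on the true Hessian-Lipschitz potential; the cubic error terms perturb the inverted oscillator and can dephase the expanding Gaussian, so the time $T$ and the initial Gaussian width must be tuned so that the expansion factor $e^{\sqrt{\lambda}T}$ is large enough to identify the escape direction with constant probability yet small enough that the wave packet remains in the region where the quadratic approximation controls the true dynamics; a Duhamel-type bound comparing the true propagator to the quadratic one, together with the $\rho$-Hessian-Lipschitz hypothesis, should suffice. Second, the simulation must be carried out on a discretized lattice whose grid spacing and box size are chosen so that neither spectral truncation nor aliasing spoil the escape probability, while still using only $\poly\log(n)$ qubits per dimension; combining these approximation bounds with Jordan's gradient replacement and the Jin et al.\ outer-loop bookkeeping then gives the stated complexity.
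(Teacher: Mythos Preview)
Your proposal takes essentially the same approach as the paper: graft onto the PAGD outer loop a quantum escape subroutine that evolves a Gaussian wave packet under the Schr\"odinger equation with potential $f$, measure to extract a negative-curvature direction, and replace every gradient call by Jordan's quantum gradient estimator. The Duhamel-type comparison to the quadratic model and the discretization concerns you flag are exactly what the paper addresses (their \lem{deviation-from-quadratic} and \lem{simulation}, respectively), and your observation that the $\cosh(\sqrt{\lambda}\,t)$ inflation concentrates the measured displacement along the most negative eigendirection with $\Omega(1)$ probability is precisely their \lem{negative-curvature}.

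Two quantitative points need correction before your accounting closes. First, the evolution time should be $\tilde{O}\big((\rho\epsilon)^{-1/4}\log n\big)$, not $\tilde{O}\big((\rho\epsilon)^{-1/2}\big)$: the most negative Hessian eigenvalue has magnitude at least $\sqrt{\rho\epsilon}$, so the exponential rate is $\alpha\ge(\rho\epsilon)^{1/4}$, and time $t\sim(\rho\epsilon)^{-1/4}\log n$ already makes the negative-curvature component dominate the $O(\sqrt{n})$ aggregate spread over the remaining directions. With your stated time the escape contribution to the query count would be $\tilde{O}(1/\epsilon^{2})$, not $\tilde{O}(1/\epsilon^{1.75})$, since the number of escape calls is $O(\sqrt{\rho/\epsilon^{3}})$ and $(\rho/\epsilon^{3})^{1/2}\cdot(\rho\epsilon)^{-1/2}=\epsilon^{-2}$. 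Second, ``standard Hamiltonian simulation'' on the discretized problem is too expensive: the grid Laplacian has spectral norm $\poly(n)$, so a black-box simulator would pay $\poly(n)$ queries per unit time. The paper uses interaction-picture simulation so that only the diagonal potential part (norm $O(1)$ after the $r_0$ rescaling) drives the query count, and the Laplacian enters only through a $\log n$ factor; this is where one of the two $\log n$ factors comes from, the other being the evolution time itself. With these two fixes your sketch matches the paper's proof.
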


Technically, our work is inspired by both the perturbed gradient descent (PGD) algorithm in~\cite{jin2017escape,jin2019stochastic} and the perturbed accelerated gradient descent (PAGD) algorithm in~\cite{jin2018accelerated}. To be more specific, PGD applies gradient descent iteratively until it reaches a point with small gradient. It can potentially be a saddle point, so PGD applies uniform perturbation in a small ball centered at that point and then continues the GD iterations. It can be shown that with an appropriate choice of the radius, PGD can shake the point off from the saddle and converge to a local minimum with high probability. The PAGD in~\cite{jin2018accelerated} adopts the similar perturbation idea, but the GD is replaced by Nesterov's AGD~\cite{nesterov1983method}.

Our quantum algorithm is built upon PGD and PAGD and shares their simplicity of being single-loop, but we propose two main modifications. On the one hand, for the perturbation steps for escaping from saddle points, we replace the uniform perturbation by evolving a quantum wave function governed by the Schr\"{o}dinger equation and using the measurement outcome as the perturbed result. Intuitively, the Schr\"odinger equation screens the local geometry of a saddle point through wave interference, which results in a phenomenon that the wave packet disperses rapidly along the directions with significant function value decrease. Specifically, quantum mechanics finds the negative curvature directions more efficiently than the classical counterpart: for a constant $\epsilon$, the classical PGD and PAGD take $O(\log n)$ steps to decrease the function value by $\Omega(1/\log^3 n)$ and $\Omega(1/\log^5 n)$ with high probability, respectively. Quantumly, the simulation of the Schr\"{o}dinger equation for time $t$ takes $\tilde{O}(t\log n)$ evaluation queries,\footnote{In general, the query complexity of quantum simulation depends on the properties of the Hamiltonian, i.e., norm, sparsity, etc. In our case, the Hamiltonian takes the form $H = A + B$, where $A$ is of norm $\alpha_A = \poly(n)$ but is independent of $f$, and $B$ is a diagonal matrix (so its sparsity is $1$) that encodes the evaluations of $f$. It turns out that the interaction picture simulation technique~\cite{low2018hamiltonian} is particularly suitable for this circumstance, and we only need $\tilde{O}(t\log n)$ queries to $f$. For details, see \sec{simulation}.} but simulation for time $t=O(\log n)$ suffices to decrease the function value by $\Omega(1)$ with high probability. See \prop{QS-effectiveness} and \thm{QuantumSimulationAGD}.

In addition, we replace the gradient descent steps by a quantum algorithm for computing gradients using also quantum evaluation queries. The idea was initiated by Jordan in Ref.~\cite{jordan2005fast} which computed the gradient at a point by applying the quantum Fourier transform on a mesh near the point. Prior work has applied Jordan's algorithm to general convex optimization~\cite{chakrabarti2020optimization,vanApeldoorn2020optimization}; we follow the same path by conducting a detailed analysis (see \thm{ESCGDJordan}) showing how we replace classical gradient queries by the same number of quantum evaluation queries in nonconvex optimization.

It is worth highlighting that our quantum algorithm enjoys the following two nice features:
\begin{itemize}
\item \emph{Classical-quantum hybrid:} In \algo{PAGDQS} and \algo{PGD-Jordan}, the transition between consecutive iterations is still classical, while the only quantum computing part happens inside each iteration for replacing the classical uniform perturbation. Such feature is friendly for the implementation on near-term quantum computers.

\item \emph{Robustness:} Our quantum algorithm is robust from two aspects. On the one hand, we can even escape from an approximate saddle point by evolving the Schr\"{o}dinger equation (see \prop{QS-effectiveness}). On the other hand, \thm{ESCGDJordan} essentially shows the robustness of escaping from saddle points by even noisy gradient descents, which may be of independent interest.
\end{itemize}

Finally, we perform numerical experiments that support our theoretical findings. Specifically, we observe the dispersion of quantum wave packets along the negative curvature direction in various landscapes. In a comparative study, our PGD with quantum simulation outperforms the classical PGD with a higher probability of escaping from saddle points and fewer iteration steps. We also compare the dimension dependence of classical and quantum algorithms in a model question with dimensions varying from 10 to 1000, and our quantum algorithm achieves a better dimension scaling overall.

\begin{table}[htbp]
\centering
\resizebox{0.6\columnwidth}{!}{
\begin{tabular}{ccc}
\hline
Reference & Queries & Oracle \\ \hline
\cite{nesterov2006cubic,curtis2017trust} & $O(1/\epsilon^{1.5})$ & Hessian \\ \hline
\cite{agarwal2017finding,carmon2018accelerated} & $\tilde{O}(\log (n)/\epsilon^{1.75})$ & Hessian-vector product \\ \hline
\cite{jin2017escape,jin2019stochastic} & $\tilde{O}(\log^{4}(n)/\epsilon^{2})$ & Gradient \\ \hline
\cite{jin2018accelerated} & $\tilde{O}(\log^{6}(n)/\epsilon^{1.75})$ & Gradient \\ \hline\hline
\textbf{this work} & $\tilde{O}(\log^{2} (n)/\epsilon^{1.75})$ & Quantum evaluation \\
\hline
\end{tabular}
}
\caption{A summary of the state-of-the-art results on finding approximate second-order stationary points. The query complexities are highlighted in terms of the dimension $n$ and the precision $\epsilon$.}
\label{tab:main}
\end{table}

\subsection{Related Work}
Escaping from saddle points by gradients was initiated by~\cite{ge2015escaping} with complexity $O(\poly(n/\epsilon))$. The follow-up work by~\cite{levy2016power} improved it to $O(n^{3}\poly(1/\epsilon))$, but it is still polynomial in dimension $n$. The breakthrough result by~\cite{jin2017escape,jin2019stochastic} achieves iteration complexity $\tilde{O}(\log^{4}(n)/\epsilon^{2})$ which is poly-logarithmic in $n$. The best-known result has complexity $\tilde{O}(\log^{6}(n)/\epsilon^{1.75})$ by~\cite{jin2018accelerated} (the same result in terms of $\epsilon$ was independently obtained by~\cite{allen2018neon2,xu2017neon}). Besides the gradient oracle, escaping from saddle points can also be achieved using the Hessian-vector product oracle with $\tilde{O}(\log(n)/\epsilon^{1.75})$ queries~\cite{agarwal2017finding,carmon2018accelerated}.

There has also been a rich literature on stochastic optimization algorithms for finding second-order stationary points only using the first-order oracle. The seminal work~\cite{ge2015escaping} showed that noisy stochastic gradient descent (SGD) finds approximate second-order stationary points in $O(\poly(n)/\epsilon^{4})$ iterations. This was later improved to $\tilde{O}(\poly(\log n)/\epsilon^{3.5})$~\cite{allen2018natasha2,tripuraneni2018stochastic,fang2019sharp,allen2018neon2,xu2018first}, and the current state-of-the-art iteration complexity of stochastic algorithms is $\tilde{O}(\poly(\log n)/\epsilon^{3})$ due to~\cite{fang2018spider,zhou2019stochastic}.

Quantum algorithms for nonconvex optimization with provable guarantee is a widely open topic. As far as we know, the only work along this direction is by~\cite{zhang2019quantum}, which gives a quantum algorithm for finding the negative curvature of a point in time $\tilde{O}(\poly(r,1/\epsilon))$, where $r$ is the rank of the Hessian at that point. However, the algorithm has a few drawbacks: 1) The cost is expensive when $r=\Theta(n)$; 2) It relies on a quantum data structure~\cite{kerenidis2016recommendation} which can actually be dequantized to classical algorithms with comparable cost~\cite{tang2018quantum2,tang2019quantum,chia2019framework}; 3) It can only find the negative curvature for a fixed Hessian. In all, it is unclear whether this quantum algorithm achieves speedup for escaping from saddle points.

\subsection{Open Questions}
Our paper leaves several natural open questions for future investigation:
\begin{itemize}
\item Can we give quantum-inspired classical algorithms for escaping from saddle points? Our work suggests that compared to uniform perturbation,  there exist physics-motivated methods to better exploit the randomness in gradient descent. A natural question is to understand the potential speedup of using (classical) mechanical waves.
\item Can quantum algorithms achieve speedup in terms of $1/\epsilon$? The current speedup due to quantum simulation can only improve the dependence in terms of $\log n$.
\item Beyond local minima, does quantum provide advantage for approaching global minima? Potentially, simulating quantum wave equations can not only escape from saddle points, but also escape from some poor local minima.
\end{itemize}

\subsection{Organization}
We introduce quantum simulation of the Schr\"{o}dinger equation in \sec{quantum-simulation}, and present how it provides quantum speedup for perturbed gradient descent and perturbed accelerated gradient descent in \sec{quantum-PGD} and \sec{quantum-PAGD}, respectively. We introduce how to replace classical gradient descents by quantum evaluations in \sec{gradient-Jordan}. We present numerical experiments in \sec{numerical}. Necessary tools for our proofs are given in \append{auxiliary-lemmas}.


\section{Escape from Saddle Points by Quantum Simulation}\label{sec:QSimulation}
The main contribution of this section is to show how to escape from a saddle point by replacing the uniform perturbation in the perturbed gradient descent (PGD) algorithm~\cite[Algorithm 4]{jin2019stochastic} and the perturbed accelerated gradient descent (PAGD) algorithm~\cite[Algorithm 2]{jin2018accelerated} with a distribution adaptive to the saddle point geometry. The intuition behind the classical algorithms is that without a second-order oracle, we do not know in which direction a perturbation should be added, thus a uniform perturbation is appropriate. However, quantum mechanics allows us to find the negative curvature direction without explicit Hessian information.

\subsection{Quantum Simulation of the Schr\"{o}dinger Equation}\label{sec:quantum-simulation}
We consider the most standard evolution in quantum mechanics, the Schr\"{o}dinger equation:
\begin{align}\label{eqn:Schrodinger}
i\frac{\partial}{\partial t}\Phi=\Big[-\frac{1}{2}\Delta+f(\vect{x})\Big]\Phi,
\end{align}
where $\Phi$ is a wave function in $\R^{n}$, $\Delta$ is the Laplacian operator, and $f$ can be regarded as the potential of the evolution. In the one-dimensional case, we can prove that $\Phi$ enjoys an explicit form below if $f$ is a quadratic function:
\begin{restatable}{lemma}{standQSimulation}\label{lem:1_dim_standard_QSimulation}
Suppose a quantum particle is in a one-dimensional potential field $f(x)=\frac{\lambda}{2}x^{2}$ with initial state $\Phi(0,x)=(\frac{1}{2\pi})^{1/4}\exp(-x^{2}/4)$; in other words, the initial position of this quantum particle follows the standard normal distribution $\mathcal{N}(0,1)$. The time evolution of this particle is governed by \eqref{eqn:Schrodinger}. Then, at any time $t \ge 0$, the position of the quantum particle still follows normal distribution $\mathcal{N}\left(0,\sigma^2(t;\lambda)\right)$, where the variance $\sigma^2(t;\lambda)$ is given by
\begin{equation} \label{eqn:variance_standard}
     \sigma^2(t;\lambda) = \begin{cases}
     1 + \frac{t^2}{4} & (\lambda = 0),\\
     \frac{(1+4\alpha^2)-(1-4\alpha^2)\cos 2\alpha t}{8\alpha^2} & (\lambda > 0, \alpha = \sqrt{\lambda}),\\
     \frac{(1-e^{2\alpha t})^2 + 4\alpha^2(1+e^{2\alpha t})^2}{16\alpha^2 e^{2\alpha t}} & (\lambda < 0, \alpha = \sqrt{-\lambda}).\\
\end{cases}
\end{equation}
\end{restatable}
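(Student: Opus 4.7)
The plan is to exploit the fact that under Schr\"odinger evolution with a quadratic potential, an initially Gaussian wave packet remains Gaussian, so the PDE reduces to a small system of ODEs. I would use the ansatz
\[
\Phi(t,x) = A(t)\exp\!\bigl(-B(t)x^2/2\bigr)
\]
with $A(t), B(t)\in\mathbb{C}$, which is self-consistent because the Hamiltonian $-\tfrac{1}{2}\partial_x^2+\tfrac{\lambda}{2}x^2$ is a polynomial of degree at most two in $x$ and $\partial_x$; the initial data matches this form with $A(0)=(2\pi)^{-1/4}$ and $B(0)=1/2$.

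Substituting into \eqn{Schrodinger} and equating powers of $x^0$ and $x^2$ yields $\dot A/A = -iB/2$ together with the complex Riccati equation
\[
\dot B = i\bigl(B^2-\lambda\bigr),\qquad B(0)=\tfrac{1}{2}.
\]
Writing $B=p+iq$ with $p,q\in\R$ and decomposing into real and imaginary parts gives $\dot p=-2pq$ and $\dot q=p^2-q^2-\lambda$. The probability density is $|\Phi(t,x)|^2=|A(t)|^2 e^{-p(t)x^2}$, which is the density of $\mathcal{N}(0,1/(2p(t)))$ once normalized, with normalization automatic by unitarity of the Schr\"odinger flow. Hence it suffices to compute $p(t)=\mathrm{Re}(B(t))$ and set $\sigma^2(t;\lambda)=1/(2p(t))$.

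The Riccati equation is solvable in closed form in each of the three regimes. For $\lambda=0$, separating variables in $\dot B=iB^2$ gives $B(t)=1/(2-it)$, so $p(t)=2/(4+t^2)$ and $\sigma^2=1+t^2/4$. For $\lambda>0$ with $\alpha=\sqrt{\lambda}$, the factorization $B^2-\lambda=(B-\alpha)(B+\alpha)$ and partial fractions reduce the equation to $(B-\alpha)/(B+\alpha)=\beta_0 e^{2i\alpha t}$ with $\beta_0=(1-2\alpha)/(1+2\alpha)\in\R$; solving for $B$, rationalizing the denominator, and simplifying the real part using the identities $1\pm\beta_0^2=2(1\pm 4\alpha^2)/(1+2\alpha)^2$ yields the middle formula in \eqn{variance_standard}. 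The case $\lambda<0$ with $\alpha=\sqrt{-\lambda}$ is the analytic continuation $\alpha\mapsto i\alpha$ of the previous case; direct solution of $\dot B=i(B^2+\alpha^2)$ gives exponential rather than trigonometric dependence on $t$, producing the hyperbolic formula in the third case.

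The main obstacle is algebraic bookkeeping: in each case one has to massage $p(t)=\mathrm{Re}(B(t))$ into exactly the form stated in \eqn{variance_standard}, in particular recognizing the coefficients $(1\pm 4\alpha^2)$ as coming from $1\pm\beta_0^2$ after clearing denominators. An alternative route that avoids complex arithmetic is to substitute $p=1/(2\sigma^2)$ and $q=\dot\sigma/\sigma$ into the real system, which collapses to the Ermakov--Pinney equation $\ddot\sigma=1/(4\sigma^3)-\lambda\sigma$ with $\sigma(0)=1$ and $\dot\sigma(0)=0$; the three candidate expressions in \eqn{variance_standard} can then be verified directly by differentiation.
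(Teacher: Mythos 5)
Your approach---insert a Gaussian ansatz, reduce the PDE to a complex Riccati ODE for the width parameter, and solve that ODE explicitly in each regime of $\lambda$---is essentially the paper's proof. The paper parametrizes the Gaussian by $y(t)=\delta^2(t)$, which is the reciprocal of your $B(t)$, and it solves the Riccati via the logarithmic-derivative substitution $y=-(i/\lambda)\,u'/u$ (reducing to $u''+\lambda u=0$) rather than by partial fractions, but these are interchangeable standard techniques; your Ermakov--Pinney remark is a pleasant real-variable alternative.

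Two small slips to correct. First, matching the $x^2$ coefficients after substituting $\Phi=A\,e^{-Bx^2/2}$ into $i\Phi_t=-\tfrac12\Phi_{xx}+\tfrac{\lambda}{2}x^2\Phi$ gives $-i\dot B=\lambda-B^2$, i.e.\ $\dot B=-i(B^2-\lambda)$, the opposite sign to what you wrote. This is harmless in the end: the wrong-sign solution is exactly the complex conjugate of the correct one (both satisfy the same real initial condition $B(0)=1/2$), and since $\sigma^2(t;\lambda)=1/(2\operatorname{Re}B(t))$ depends only on the real part, all three formulas in \eqn{variance_standard} still emerge correctly---but the intermediate expressions such as $(B-\alpha)/(B+\alpha)=\beta_0 e^{2i\alpha t}$ should carry $e^{-2i\alpha t}$. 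Second, the identity you quote is off: $1-\beta_0^2=8\alpha/(1+2\alpha)^2$, while $1+\beta_0^2=2(1+4\alpha^2)/(1+2\alpha)^2$ and $2\beta_0=2(1-4\alpha^2)/(1+2\alpha)^2$; the $(1-4\alpha^2)$ coefficient in \eqn{variance_standard} arises from $2\beta_0$, not from $1-\beta_0^2$.
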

\lem{1_dim_standard_QSimulation} shows that the wave function will disperse when the potential field is of negative curvature, i.e., $\lambda < 0$, and the dispersion speed is exponentially fast. Furthermore, we prove in \append{QSimulation-append} that this ``escaping-at-negative-curvature'' behavior of the wave function still emerges given a quadratic potential field $f(\vect{x}) = \frac{1}{2}\vect{x}^T \mathcal{H} \vect{x}$ in any finite dimension.

To turn this idea into a quantum algorithm, we need to use quantum simulation. In fact, quantum simulation in real spaces is a classical problem and has been studied back in the 1990s~\cite{wiesner1996simulations,zalka1998efficient,zalka1998simulating}. There is a rich literature on the cost of quantum simulation~\cite{childs2017note,low2019hamiltonian,berry2015hamiltonian,low2017optimal,lloyd1996universal,berry2007efficient}; it is typically linear in the evolution time, which is formally known as the ``no---fast---forwarding theorem'', see Theorem 3 of~\cite{berry2007efficient}, and Theorem 3 of~\cite{childs2010limitations}. In \sec{simulation}, we prove the following lemma about the cost of simulating the the Schr\"odinger equation using the quantum evaluation oracle in \eqn{quantum-evaluation}:
\begin{lemma}\label{lem:simulation}
Let $f(x)\colon\R^n \to \R$ be a real-valued function with a saddle point at $x=0$ and $f(0) = 0$. Consider the (scaled) Schr\"odinger equation
\begin{align}\label{eqn:Schrodinger-appendix}
	i\frac{\partial}{\partial t}\Phi=\Big[-\frac{r_{0}^2}{2}\Delta+\frac{1}{r_{0}^2}f(\vect{x})\Big]\Phi
\end{align}
defined on the domain $\Omega = \{x\in \R^n: \|x\| \le M\}$ (where $M>0$ is the diameter that will be specified later) with periodic boundary condition.\footnote{Actually, we need to put this $\Omega$ in a flat $n$-torus $\mathbb{T}$, i.e., an $n$-dimensional hybercube with periodic boundary condition, because the flat torus is readily dealt with the finite difference method (FDM). Given the truncation of the function $f(x)$ on $\Omega$, we may slightly ``mollify'' the edge of $f|_{\Omega}$ to observe the periodicity. This mollification will not have a significant impact for optimization because our simulation time is quite short and the wave function rarely has a chance to hit the boundary $\partial \Omega$.} Given the quantum evaluation oracle $U_{f}(\ket{\x}\otimes \ket{0})=\ket{\x}\otimes \ket{f(\x)}$ in \eqn{quantum-evaluation} and an arbitrary initial state at time $t=0$, the evolution of \eqn{Schrodinger-appendix} for time $t>0$ can be simulated using $\tilde{O}\big(t\log n\log^2(\frac{t}{\epsilon})\big)$ queries to $U_{f}$, where $\epsilon$ is the precision.
\end{lemma}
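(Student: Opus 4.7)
The plan is to discretize \eqref{eqn:Schrodinger-appendix} on a uniform grid over the $n$-torus and then invoke the interaction-picture simulation algorithm of~\cite{low2018hamiltonian}. Writing $H = A + B$ with $A = -\tfrac{r_{0}^{2}}{2}\Delta$ and $B = r_{0}^{-2} f(\x)$, and taking $N$ points per dimension (so we work with $n\log N$ qubits on a Hilbert space of dimension $N^{n}$), the kinetic part $A$ is diagonal in the Fourier basis, so $e^{-iAs}$ can be applied by a QFT, a diagonal phase, and an inverse QFT---all independent of $f$. The potential $B$ is diagonal in position basis; one call to $U_{f}$, a controlled phase $e^{-is f(\x)/r_{0}^{2}}$, and an uncomputation of the oracle register yield a unit-query block-encoding of $B$. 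The asymmetry that $A$ has large norm but is $f$-free, while $B$ has bounded norm and is the only $f$-dependent piece, is exactly the setting in which the interaction-picture method pays off.

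First I would handle spatial truncation and discretization. Taking $N = \poly(tM/(r_{0}\epsilon))$ points per dimension with a Fourier pseudospectral scheme---together with the mollification of $f|_{\Omega}$ near $\partial\Omega$ promised in the footnote to the statement---makes the discrete generator $H_{N}$ agree with the continuous $H$ in operator norm on the subspace where the wave packet is supported, to within $\epsilon/(2t)$, so that $\|e^{-iH_{N} t}-e^{-iHt}\|\le\epsilon/2$ by standard stability estimates for smooth Hamiltonians on a compact torus. These steps only contribute gate overhead; no queries to $U_{f}$ are incurred. I would then apply the truncated-Dyson-series interaction-picture simulator of~\cite{low2018hamiltonian}, which simulates $e^{-iHt}$ to precision $\epsilon/2$ using
\begin{equation*}
Q \;=\; O\!\left(\|B\|\,t \cdot \frac{\log^{2}(\|B\|t/\epsilon)}{\log\log(\|B\|t/\epsilon)}\right)
\end{equation*}
queries to the block-encoding of $B$, and hence to $U_{f}$. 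The crucial feature is that $Q$ depends on $\|B\|$, not on $\|A\|$, even though the latter is $\poly(n,N)$.

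Bounding $\|B\|$ closes the argument. Since $B$ is diagonal, $\|B\| = \|f\|_{L^{\infty}(\Omega)}/r_{0}^{2}$ on the truncated and mollified domain, and the parameter choices made elsewhere in the paper---matching the length scale $r_{0}$ to the negative-curvature width and the diameter $M$ to how far the packet can travel in time $t$---make this quantity $O(\log n)$; substituting into $Q$ then yields the claimed $\tilde{O}\bigl(t\log n\log^{2}(t/\epsilon)\bigr)$ query complexity. I expect the main obstacle to be the discretization-plus-mollification step: one must arrange the boundary behavior of $f$ so that simultaneously (a) $f$ remains smooth enough that the Fourier truncation error decays rapidly, (b) the norm $\|B\|$ stays $O(\log n)$, and (c) the wave packet, whose group velocity is controlled by $r_{0}$, has negligible amplitude at $\partial\Omega$ throughout $[0,t]$. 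Once this is in place, the remainder is a direct invocation of off-the-shelf quantum simulation.
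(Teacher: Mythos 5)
Your overall strategy matches the paper's: discretize the scaled Schr\"odinger equation on a grid over the torus, decompose the discretized Hamiltonian as $H=A+B$ with $A$ the (large-norm, $f$-free) kinetic part and $B$ the (bounded, $f$-encoding) diagonal potential, and invoke the interaction-picture simulator of Low--Wiebe so that queries to $U_f$ scale with $\|B\|$ rather than $\|A\|$. The paper uses the finite-difference discretization of Childs--Liu--Ostrander where you suggest a Fourier pseudospectral scheme; that substitution is not an obstacle, since either choice gives $\alpha_A=\poly(n)\cdot\poly(\log(1/\epsilon))$ with no additional $U_f$ queries.

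The genuine gap is in where the $\log n$ factor comes from. You assert that the parameter choices make $\|B\|=\|f\|_{L^\infty(\Omega)}/r_0^2=O(\log n)$ and obtain $\log n$ by substituting this into the query count. That is incorrect: the paper sets $M/r_0$ equal to an absolute constant $1/C_r$ (see the discussion around Eq.~\eqref{eqn:M}), and by $\ell$-smoothness one has $\|f\|_{L^\infty(\Omega)}\le\ell M^2$, so $\alpha_B\le\ell(M/r_0)^2=O(1)$, with no $n$-dependence. If you plug $\|B\|=O(1)$ into your quoted cost $O(\|B\|t\,\log^2(\|B\|t/\epsilon)/\log\log)$, you get $\tilde{O}(t\log^2(t/\epsilon))$ with no $\log n$ at all, which contradicts the lemma. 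The actual source of the $\log n$ in the paper is the factor $\log\!\big((\alpha_A t+\alpha_B)/\epsilon\big)$ appearing in the interaction-picture cost, which is $\Theta(\log n+\log(t/\epsilon))$ because $\alpha_A=\poly(n)$: the norm of $A$ enters only logarithmically, but it does enter, and that is precisely where the dimension dependence lives. Your write-up states this factor "depends on $\|B\|$, not on $\|A\|$," which omits exactly the logarithmic $\alpha_A$-dependence that the final bound relies on. To fix the argument, replace the claim $\|B\|=O(\log n)$ with $\|B\|=O(1)$, and carry the $\log\alpha_A=O(\log n)$ term through the Low--Wiebe complexity formula explicitly.
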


Because we have assumed that $f$ is Hessian-Lipschitz, we can use the second-order Taylor expansion to approximate the function value of $f$ near a saddle point $\tilde{x}$. Such an approximation is more accurate on a ball centered at $\tilde{x}$ with radius $r_{0}$ small enough. Regarding this, we scale the initial distribution as well as the Schr\"{o}dinger equation to be more localized in terms of $r_{0}$, which results in \algo{QuantumSimulation}.

\begin{algorithm}[htbp]
\caption{QuantumSimulation($\tilde{\vect{x}},r_{0},t_{e},f(\cdot)$).}
\label{algo:QuantumSimulation}
Put a Gaussian wave packet into the potential field $f$, with its initial state being:
\begin{align}\label{eqn:ground_state_Phi0}
\Phi_{0}(\vect{x})=\Big(\frac{1}{2\pi}\Big)^{n/4}\frac{1}{r_{0}^{n/2}}\exp(-(\vect{x}-\tilde{\vect{x}})^{2}/4r_{0}^{2});
\end{align}
Simulate its evolution in potential field $f$ with the Schr\"odinger equation for time $t_{e}$\;
Measure the position of the wave packet and output the measurement outcome.
\end{algorithm}

\algo{QuantumSimulation} is the main building block of our quantum algorithms for escaping from saddle points, and also the main resource of our quantum speedup.

\subsubsection{Quantum Query Complexity of Simulating the Schr\"odinger Equation}\label{sec:simulation}
We prove \lem{simulation} in this subsection. Before doing that, we want to briefly discuss the reason why we simulate the scaled Schr\"{o}dinger equation \eqn{Schrodinger-appendix} instead of the common version of non-relativistic Schr\"odinger equation in \eqn{Schrodinger}, rewritten below:
\begin{align}
i\frac{\partial}{\partial t}\Phi=\Big[-\frac{1}{2}\Delta+f(\vect{x})\Big]\Phi.
\end{align}

In real-world problems, we are likely to encounter an objective function $f(x)$ with a saddle point at $x_0$ but is not a quadratic form. In this situation, a quadratic approximation is only valid within a small neighborhood of the first-order stationary point $x_0$, say $\Omega$ defined in \lem{simulation}. Regarding this issue, it is necessary to scale the spatial variable in order to make the wave packet more localized. However, the scaling in the spatial variable will simultaneously cause a scaling in the time variable under Eq.~\eqn{Schrodinger}. This is not preferable because the scaling in time can dramatically change the variance $\sigma(t;\lambda)$ in \eqn{variance_standard}, which can cause troubles when bounding the time complexity in the analysis of algorithms. To leave the time scale invariant, we introduce a modified Schr\"odinger equation \eqn{Schrodinger-appendix}, in which the quantum simulation is restricted on a domain of diameter $O(r_0)$: this localization guarantees that the quantum wave packet captures the saddle point geometry while not to be significantly affected by other features on the landscape of $f(x)$, thus simplifying our further analysis. We may justify our construction of \eqn{Schrodinger-appendix} in three aspects:

\begin{itemize}
    \item \textbf{Geometric aspect:} Eq.~\eqn{Schrodinger-appendix} is obtained by considering a spatial dilation in the wave function $\Phi(t,x) \longmapsto \Phi(t,x/r)$ without changing the time scale. This property guarantees the variance of the Gaussian distribution corresponding to $\Phi(t,x/r)$ is just $r^2$ times the original variance $\sigma^2(t;\lambda)$ (we will prove this in \prop{QSimulation}). Mathematically, this time-invariant property means the dispersion speed is now an intrinsic quantity as it is mostly determined by the saddle point geometry.

    \item \textbf{Physical aspect:} When the wave function is too localized in the position space, due to the uncertainty principle, the momentum variable will spread on a large domain in the frequency space. To reconcile this imparity, we want to introduce a small $r^2$ factor for the kinetic energy operator $-\frac{1}{2}\Delta$ in order to balance between position and momentum.

    \item \textbf{Complexity aspect:} The circuit complexity of simulating Schr\"odinger equation is linear in the operator norm of the Hamiltonian. Our scaling in \eqn{Schrodinger-appendix} drags down the operator norm of the Laplacian (we will discretize it when doing simulation) while leaves the operator norm of the potential field remain $O(\|\mathcal{H}\|)$ in a $O(r_0)$-ball. This normalization effect will help reducing the circuit complexity.
\end{itemize}

Complexity bounds of quantum simulation is a well-established research topic; see e.g.~\cite{childs2017note,low2019hamiltonian,berry2015hamiltonian,low2017optimal,lloyd1996universal,berry2007efficient} for detailed results and proofs. In this paper, we apply quantum simulation under the interaction picture~\cite{low2018hamiltonian}. In particular, we use the following result:
\begin{theorem}[{\cite[Lemma 6]{low2018hamiltonian}}]\label{thm:interaction}
Let $A,B\in\mathbb{C}^{d\times d}$ be time-independent Hamiltonians that are promised to obey $\|A\|\leq \alpha_A$ and $\|B\|\leq\alpha_B$, where $\|\cdot\|$ represents the spectral norm. Then the time-evolution operator $e^{-i(A+B)t}$ can be simulated up to error $\epsilon$ by using
\begin{align*}
    O\Big(\alpha_B t \frac{\log (\alpha_B t/\epsilon)}{\log\log (\alpha_B t/\epsilon)}\Big)
\end{align*}
queries to the unitary oracle $O_B$.\footnote{In fact, Lemma 6 in \cite{low2018hamiltonian} gives an upper bound for the number of queries to the unitary oracle HAM-T. Note that the construction of HAM-T only needs 1 query to $O_B$ (see Theorem 7), we directly give the query complexity in terms of $O_B$.}
\end{theorem}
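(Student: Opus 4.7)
The plan is to establish the bound by the interaction-picture reduction underlying the Low--Wiebe result: transform the target evolution into a rotating frame in which the effective Hamiltonian has spectral norm $\alpha_B$ (independent of $\alpha_A$), and then simulate the rotated system by a truncated Dyson series whose implementation queries only $O_B$.

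Concretely, introduce the interaction-picture Hamiltonian $\tilde H(s) := e^{iAs}\, B\, e^{-iAs}$, so that $\|\tilde H(s)\| = \|B\| \le \alpha_B$ because $\tilde H(s)$ is unitarily conjugate to $B$. A direct computation gives the factorization $e^{-i(A+B)t} = e^{-iAt}\cdot T\exp\!\bigl(-i\int_{0}^{t} \tilde H(s)\, ds\bigr)$, where $T$ denotes time-ordering, so it suffices to simulate $\tilde U(t) := T\exp\!\bigl(-i\int_{0}^{t} \tilde H(s)\,ds\bigr)$; the prefactor $e^{-iAt}$ is synthesized separately and calls $O_B$ zero times. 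I would then partition $[0,t]$ into $L = \lceil 2\alpha_B t\rceil$ equal segments of length $\tau = 1/(2\alpha_B)$, approximate $\tilde U$ on each segment by a Dyson series truncated at order $K$, and realize the truncation as a linear combination of unitaries followed by oblivious amplitude amplification. The elementary building block is a block-encoding of $\tilde H(s)$ at an arbitrary instant $s$; it is produced by sandwiching a HAM-T block-encoding of $B$ between $e^{\pm iAs}$, which by the footnote to the theorem costs only $O(1)$ queries to $O_B$ per use---the $e^{\pm iAs}$ contribute to circuit depth but not to the query count.

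Putting the pieces together: per segment one spends $O(K)$ queries to $O_B$, and the truncation error per segment is bounded by $(\alpha_B\tau)^{K+1}/(K+1)! \le 2^{-(K+1)}$; choosing $K = \Theta\!\bigl(\log(\alpha_B t/\epsilon)/\log\log(\alpha_B t/\epsilon)\bigr)$ drives the aggregated error below $\epsilon$ and yields the advertised total of $L\cdot K = O\!\bigl(\alpha_B t \log(\alpha_B t/\epsilon)/\log\log(\alpha_B t/\epsilon)\bigr)$ queries. The main obstacle is the sharp analysis of the Dyson-series truncation that yields the $\log/\log\log$ rate rather than a plain $\log$: one must exploit the super-exponential $1/K!$ decay of the truncation tail, together with the quadratic speedup afforded by oblivious amplitude amplification, to shave the $\log\log$ factor. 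A subsidiary subtlety is verifying that the time-dependent block-encoding of $\tilde H(s)$ really can be invoked with a \emph{constant} number of $O_B$ calls, so that the time-variation induced by $e^{\pm iAs}$ does not smuggle an $\alpha_A$-dependence back into the query count; this is the crux of why the interaction picture saves the day.
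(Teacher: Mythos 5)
You have correctly identified that this statement is not proved in the paper at all: it is imported verbatim from Lemma 6 of Low and Wiebe (with the footnote reducing HAM-T queries to $O_B$ queries), and your sketch — rotating-frame Hamiltonian $\tilde H(s)=e^{iAs}Be^{-iAs}$ with $\|\tilde H(s)\|\le\alpha_B$, segments of length $\Theta(1/\alpha_B)$, truncated Dyson series implemented by LCU with oblivious amplitude amplification, $O(1)$ calls to $O_B$ per block-encoding of $\tilde H(s)$ — is exactly the argument of that cited reference, so it follows the same route as the source the paper relies on. One small correction: the $\log(\alpha_B t/\epsilon)/\log\log(\alpha_B t/\epsilon)$ rate comes entirely from the factorial ($1/(K+1)!$) decay of the Dyson-series tail, exactly as in the truncated-Taylor-series method; oblivious amplitude amplification only makes each segment's block-encoded operation implementable with near-unit success probability and does not contribute to shaving the $\log\log$ factor.
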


Our \lem{simulation} is inspired by~\cite{costa2019quantum} which gives a quantum algorithm for simulating the Schr\"{o}dinger equation but without the potential function $f$. It discretizes the space into grids with side-length $a$; in this case, $-\frac{1}{2}\Delta$ reduces to $-\frac{1}{2a^{2}}L$ where $L$ is the Laplacian matrix of the graph of the grids (whose off-diagonal entries are $-1$ for connected grids and zero otherwise; the diagonal entries are the degree of the grids). For instance, in the one-dimensional case,
\begin{align}\label{eqn:spatial-discretization}
-\frac{1}{a^{2}}[L\phi]_{j}=\frac{\phi_{j-1}-2\phi_{j}+\phi_{j+1}}{a^{2}},
\end{align}
where $\phi_{j}$ is the value on the $j^{\text{th}}$ grid. When $a\to 0$, this becomes the second derivative of $\phi$; in practice, as mentioned above, it suffices to take $1/a=\poly(\log(1/\epsilon))$ such that the overall precision is bounded by $\epsilon$.

The discretization method used in~\cite{costa2019quantum} is just a special example of the finite difference method (FDM), which is a common method in applied mathematics to discretize the space of ODE or PDE problems such that their solution is tractable numerically. To be more specific, the continuous space is approximated by discrete grids, and the partial derivatives are approximated by finite differences in each direction. There are higher-order approximation methods for estimating the derivatives by finite difference formulas~\cite{li2005general}, and it is known that the number of grids in each coordinate can be as small as $\poly(\log(1/\epsilon))$ by applying the high-order approximations to the FDM adaptively~\cite{babuvska1987hp}. See also Section 3 of~\cite{childs2020PDE} which gave quantum algorithms for solving PDEs that applied FDM with this $\poly(\log(1/\epsilon))$ complexity for the grids.

We are now ready to prove \lem{simulation}.
\begin{proof}
There are two steps in the quantum simulation of \eqn{Schrodinger-appendix}: (1) discretizing the spatial domain using \eqn{spatial-discretization} so that the Schr\"odinger equation \eqn{Schrodinger-appendix} is reduced to an ordinary differential equation \eqn{Schrodinger-discretization}; (2) simulating \eqn{Schrodinger-discretization} under the interaction picture. In each step, we fix the error tolerance as $\epsilon/2$. By the triangle inequality, the overall error is $\epsilon$.

First, we consider the $k$-th order finite difference method in Section 3 of~\cite{childs2020PDE} (the discrete Laplacian will be denoted as $L_k$). With the spacing between grid points being $a$, if we choose the mesh number along each direction as $1/a=\poly(n)\poly(\log(2/\epsilon))$, the finite difference error will be of order $\epsilon/2$. Then the Schr\"{o}dinger equation in \eqn{Schrodinger-appendix} becomes
\begin{align}\label{eqn:Schrodinger-discretization}
i\frac{\partial}{\partial t}\Phi=\Big(-\frac{r^2_0}{2a^{2}}L_k + B\Big)\Phi,
\end{align}
where $L_k$ is the Laplacian matrix associated to the $k$-th order finite difference method (discretization of the hypercube $\Omega$) and $B$ is a diagonal matrix such that the entry for the grid at $\vect{x}$ is $\frac{1}{r^2_0}f(\vect{x})$. Here, the function evaluation oracle $U_f$ is trivially encoded in the matrix evaluation oracle $O_B$. By~\cite{childs2020PDE}, the spectral norm of $L_k$ is of order $O(n/a^2) = \poly(n)\poly(\log(2/\epsilon))$, where $n$ is the spatial dimension of the Sch\"odinger equation.

We simulate the evolution of \eqn{Schrodinger-discretization} by \thm{interaction} and taking $A=-\frac{r^2_0}{2a^{2}}L_k$ therein. Recall that $\|L_k\| \le \poly(n)\poly(\log(2/\epsilon))$. By the $\ell$-smooth condition, we have $\|\nabla f(\vect{x})\| \le \ell M$ for $\vect{x} \in \Omega$ so that the maximal absolute value of function $f(x)$ on $\Omega$ is bounded by $\ell M^2$ by the Poincar{\'e} inequality. Therefore, we have $\alpha_A \le Cr^2_0\poly(n)\poly(\log(2/\epsilon))$ where $C> 0$ is an absolute constant, and $\alpha_B \le \ell (M/r_0)^2$. It follows from \thm{interaction} that, to simulate the time evolution operator $e^{-i(A+B)t}$ for time $t > 0$, the total quantum queries to $O_B$ (or equivalently, to $U_f$) is
\begin{align*}
O\bigg(\ell (M/r_0)^2 t \big(\log\big( t(Cr^2_0\poly(n)\poly(\log(2/\epsilon)))+\ell (M/r_0)^2 \big)/\epsilon\big)\frac{\log (\ell (M/r_0)^2 \| t/\epsilon)}{\log\log (\ell (M/r_0)^2 t/\epsilon)}\bigg).
\end{align*}
The radius $M$ of the simulation region is chosen large enough such that the wavepacket does not hit the boundary during simulation. Intuitively, the value of $M$ should be proportional to the initial variance $r_0$. Quantitatively, it is shown in \sec{quantum-PGD} such that under our choice of parameters, $M/r_0$ equals some constant $1/C_r$. Absorbing all poly-logarithmic constants in the big $\tilde{O}$ notation, the total quantum queries to $f$ reduces to $\tilde{O}\big(t\log n\log^2(\frac{t}{\epsilon})\big)$ as claimed in \lem{simulation}.
\end{proof}

\begin{remark}
In our scenario of escaping from saddle points, the initial state is a Gaussian wave packet $\big(\frac{1}{2\pi}\big)^{n/4}\frac{1}{r_{0}^{n/2}}\exp(-(\vect{x}-\tilde{\vect{x}})^{2}/4r_{0}^{2})$ as in \algo{QuantumSimulation}. It is well-known that a Gaussian state can be efficiently prepared on quantum computers~\cite{kitaev2008wavefunction}; Gaussian states are also ubiquitous in the literature of continuous-variable quantum information~\cite{weedbrook2012gaussian}. However, although when $f$ is quadratic the evolution of the Schr\"odinger equation keeps the state being a Gaussian wave packet by \lem{standard_multidim_QSimulation}, it intrinsically has dependence on $f$ and it is not totally clear how to prepare the Gaussian wave packet at time $t$ directly by continuous-variable quantum information. It seems that the quantum simulation above using the quantum evaluation oracle $U_{f}$ in \eqn{quantum-evaluation} is necessary for our purpose.
\end{remark}

\subsection{Perturbed Gradient Descent with Quantum Simulation}\label{sec:quantum-PGD}
We now introduce a modified version of perturbed gradient descent. We start with gradient descents until the gradient becomes small. Then, we perturb the point by applying \algo{QuantumSimulation} for a time period $t_{e}=\mathscr{T}'$, perform a measurement on all the coordinates (which gives an output $\vect{x}_{0}$), and continue with gradient descent until the algorithm runs for $T$ iterations. This is summarized as \algo{PGD+QS}.

\begin{algorithm}[htbp]
\caption{Perturbed Gradient Descent with Quantum Simulation.}
\label{algo:PGD+QS}
\For{$t=0,1,...,T$}{
\If{$\|\nabla f(\vect{x}_{t})\|\leq\epsilon$}{
$\xi\sim$QuantumSimulation$\big(\vect{x}_{t},r_{0},\mathscr{T}',f(\vect{x})-\langle\nabla f(\vect{x}_t),\vect{x}-\vect{x}_t\rangle\big)$\label{lin:QuantumSimulation}\;
$\Delta_t\leftarrow\frac{2\xi}{3\|\xi\|}\sqrt{\frac{\rho}{\epsilon}}$\;
$\vect{x}_{t}\leftarrow\mathop{\arg\min}_{\zeta\in\left\{\vect{x}_t+\Delta_t,\vect{x}_t-\Delta_t\right\}}f(\zeta)$\;
}
$\vect{x}_{t+1}\leftarrow\vect{x}_{t}-\eta\nabla f(\vect{x}_{t})$\;
}
\end{algorithm}
Intuitively, in \algo{PGD+QS} QuantumSimulation is applied to find negative curvature of saddle points. Hence in \lin{QuantumSimulation} we simulate the wavepacket under the potential $f(\vect{x})-\langle\nabla f(\vect{x}_t),\vect{x}-\vect{x}_t\rangle$ instead of $f$ itself, since the first order term in the Taylor expansion of $f$ at $\vect{x}_t$ is not relevant to the negative curvature, which is characterized by the second-order Hessian matrix. After negative curvature is specified, we can add a perturbation $\Delta_t$ in that direction to decrease the function value and escape from saddle points.

\subsubsection{Effectiveness of the Perturbation by Quantum Simulation}
We show that our method of quantum wave packet simulation is significantly better than the classical method of uniform perturbation in a ball. To be more specific, we focus on the scenarios with $\epsilon\leq l^{2}/\rho$ (this is the standard assumption adopted in~\cite{jin2018accelerated}); intuitively, this is the case when the local landscape is ``flat'' and the Hessian has a small spectral radius. Under this circumstance, the classical gradient descent may move slowly, but the quantum Gaussian wavepacket still disperses fast, i.e., the variance of the probability distribution corresponding to the wavepacket still has a large increasing rate. Hence, if we let this wavepacket evolve for a long enough time period, it is drastically stretched in the directions with negative curvature. As a result, if we measure its position at this time, with high probability the output vector indicates a negative curvature direction, or equivalently, a direction along which we can decrease the function value. We can thus add a large perturbation along that direction to escape from the saddle point. Formally, we prove:

\begin{restatable}{proposition}{QSeffectiveness}\label{prop:QS-effectiveness}
For any constant $\delta_0>0$, we specify our choices for the parameters and constants that we use:
\begin{align}
    \mathscr{T}'   &:= \frac{8}{(\rho\epsilon)^{1/4}}\log\Big(\frac{\ell}{\delta_0\sqrt{\rho\epsilon}}(n+2\log(3/\delta_0))\Big)     & \mathscr{F}' &:= \frac{2}{81}\sqrt{\frac{\epsilon^3}{\rho}}
    \label{eqn:parameter-choice-1}\\
    r_0  &:= \frac{4C_r^3}{9\mathscr{T}'^4}\Big(\frac{\delta_0}{3}\cdot\frac{1}{n^{3/2}+2C_0n\ell(\log \mathscr{T}')^{\alpha}}\Big)^2
    & \eta  &:= \frac{1}{\ell}
    \label{eqn:parameter-choice-2}
\end{align}
where $C_r$, $C_0$, $\alpha$ are absolute constants, and the value of $\alpha$ is specified in \lem{deviation-from-quadratic}. Let $f\colon \R^n \to \R$ be an $\ell$-smooth, $\rho$-Hessian Lipschitz function. For an approximate saddle point $\vect{\tilde{x}}$ satisfying $\|\nabla f(\vect{\tilde{x}})\| \leq \epsilon$ and $\lambda_{\min}(\nabla^{2}f(\vect{\tilde{x}}))\leq -\sqrt{\rho \epsilon}$, \algo{PGD+QS} adds a perturbation by QuantumSimulation with the radius $M$ of the simulation region being set to $M=r_0/C_r$, and decreases the function value for at least $\mathscr{F}'$, with probability at least $1-\delta_0$.
\end{restatable}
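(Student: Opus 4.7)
The plan is to first reduce the evolution under the actual potential $g(\vect{x}):=f(\vect{x})-f(\vect{\tilde{x}})-\langle\nabla f(\vect{\tilde{x}}),\vect{x}-\vect{\tilde{x}}\rangle$ to the purely quadratic potential $q(\vect{x}):=\tfrac{1}{2}(\vect{x}-\vect{\tilde{x}})^{T}\mathcal{H}(\vect{x}-\vect{\tilde{x}})$ with $\mathcal{H}:=\nabla^{2}f(\vect{\tilde{x}})$, and then read off the behaviour of the evolved Gaussian wave packet from the multi-dimensional extension of \lem{1_dim_standard_QSimulation} proved in \append{QSimulation-append}. Since $q$ is diagonal in the eigenbasis $\mathcal{H}=\sum_{i}\lambda_{i}\vect{e}_{i}\vect{e}_{i}^{T}$, the scaled Schr\"odinger evolution factorizes and the wave packet remains a product of one-dimensional Gaussians with variances $r_{0}^{2}\sigma^{2}(\mathscr{T}';\lambda_{i})$ given by \eqn{variance_standard}. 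Because $\lambda_{\min}(\mathcal{H})\le -\sqrt{\rho\epsilon}$, the variance along the corresponding eigendirection grows like $r_{0}^{2}e^{2\sqrt{|\lambda_{\min}|}\mathscr{T}'}/(16|\lambda_{\min}|)$, while the marginals along the non-negative eigendirections remain bounded by $r_{0}^{2}(1+\mathscr{T}'^{2}/4)$. The tuning $\mathscr{T}'=8(\rho\epsilon)^{-1/4}\log(\cdots)$ in \eqn{parameter-choice-1} is chosen precisely so that the first quantity dominates the second by a factor polynomial in $n/\delta_{0}$. Consequently, a position measurement on the ideal quadratic evolution yields, with probability at least $1-\delta_{0}/3$, an outcome $\xi$ whose decomposition $\xi=\xi_{-}+\xi_{+}$ into $(\le 0)$- and $(>0)$-eigenspace components satisfies $\|\xi_{-}\|\ge \tfrac{2}{3}\|\xi\|$ and $\|\xi_{+}\|^{2}\lesssim \sqrt{\rho\epsilon}/\ell\cdot\|\xi_{-}\|^{2}$.

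The second step is to justify this reduction. By $\rho$-Hessian-Lipschitzness, $|g(\vect{x})-q(\vect{x})|\le \tfrac{\rho}{6}\|\vect{x}-\vect{\tilde{x}}\|^{3}$ on the simulation domain $\Omega=\{\|\vect{x}-\vect{\tilde{x}}\|\le M\}$ with $M=r_{0}/C_{r}$. The Duhamel identity then gives
\begin{align*}
\bigl\|\bigl(e^{-i(A+g)\mathscr{T}'}-e^{-i(A+q)\mathscr{T}'}\bigr)\Phi_{0}\bigr\|\;\lesssim\;\mathscr{T}'\,\rho\,M^{3}+\mathrm{(Gaussian\ leakage\ outside\ \Omega)},
\end{align*}
and the leakage term is controlled by a sub-Gaussian tail bound applied to the per-direction variances from the first paragraph. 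The explicit scaling $r_{0}\asymp (C_{r}^{3}/\mathscr{T}'^{4})\bigl(\delta_{0}/(n^{3/2}+C_{0}n\ell(\log\mathscr{T}')^{\alpha})\bigr)^{2}$ in \eqn{parameter-choice-2} is what forces both contributions below $\delta_{0}/3$ simultaneously, and I expect \lem{deviation-from-quadratic} to package this cubic-remainder plus tail bookkeeping into a single total-variation bound between the true and the ideal quadratic measurement distributions.

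Conditioning on the high-probability event of the first two paragraphs, the measured displacement $\xi$ satisfies $\xi^{T}\mathcal{H}\xi\le -\Omega(\sqrt{\rho\epsilon})\,\|\xi\|^{2}$. The algorithm sets $\Delta_{t}$ to have magnitude $\Theta(\sqrt{\epsilon/\rho})$ along $\xi/\|\xi\|$, and a Taylor expansion of $f$ around $\vect{\tilde{x}}$ yields
\begin{align*}
f(\vect{\tilde{x}}\pm\Delta_{t})-f(\vect{\tilde{x}})\;=\;\pm\langle\nabla f(\vect{\tilde{x}}),\Delta_{t}\rangle+\tfrac{1}{2}\Delta_{t}^{T}\mathcal{H}\Delta_{t}+O\bigl(\rho\|\Delta_{t}\|^{3}\bigr).
\end{align*}
Taking the better of the two symmetric candidates $\vect{\tilde{x}}\pm\Delta_{t}$ cancels the linear term (already small because $\|\nabla f(\vect{\tilde{x}})\|\le\epsilon$), leaves a quadratic contribution bounded above by $-\Omega(\sqrt{\epsilon^{3}/\rho})$, and incurs a cubic loss of magnitude $\tfrac{\rho}{6}\|\Delta_{t}\|^{3}=O(\sqrt{\epsilon^{3}/\rho})$ with a strictly smaller constant. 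Matching the explicit constants delivers the claimed net decrease of at least $\mathscr{F}'=\tfrac{2}{81}\sqrt{\epsilon^{3}/\rho}$.

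The main obstacle I anticipate is the joint control of the two error sources in the second paragraph. The wave packet must remain essentially inside $\Omega$ throughout the simulation time $\mathscr{T}'$, \emph{and} the cubic deviation $g-q$ must not corrupt the signal-to-noise ratio between negative- and non-negative-curvature eigendirections at time $\mathscr{T}'$. These requirements pull $r_{0}$ and $C_{r}$ in opposite directions: smaller $r_{0}$ improves the quadratic approximation but the absolute spatial extent $r_{0}\sigma(\mathscr{T}';\lambda_{\min})$ of the dispersed Gaussian contains an exponential factor, which forces $C_{r}$ to be correspondingly small so that $M=r_{0}/C_{r}$ still accommodates the wave packet. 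Reconciling this delicate balance is exactly what dictates the intricate $r_{0}\propto (1/\mathscr{T}'^{4})(\delta_{0}/n^{3/2})^{2}$ scaling in \eqn{parameter-choice-2}, and I expect the bulk of the technical work to lie in propagating these bounds cleanly through \lem{deviation-from-quadratic} rather than in any of the three macroscopic steps above.
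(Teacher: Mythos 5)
Your macro-structure matches the paper's: solve the quadratic case exactly via the $n$-dimensional analogue of \lem{1_dim_standard_QSimulation} (\prop{QSimulation}), control the non-quadratic deflation through a Duhamel/Trotter-type argument packaged in \lem{deviation-from-quadratic}, and then Taylor-expand $f$ about $\tilde{\vect{x}}$ to turn $\xi^{T}\mathcal{H}\xi \le -\Omega(\sqrt{\rho\epsilon})\|\xi\|^{2}$ into a function-value drop of $\mathscr{F}'$. The third step is essentially identical to the paper's, and the second step correctly defers to the right lemma (though the actual bound has a $t_e^{2}$ scaling and involves the Sobolev-norm growth factor $(\log t_e)^{\alpha}$ from Bourgain's theorem rather than a $\mathscr{T}'\rho M^{3}$ cubic remainder, and it is phrased as a total-variation bound on measurement distributions rather than a vector-norm bound on states).

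There is, however, a genuine gap in the way you extract the negative-curvature conclusion from the ideal quadratic evolution. You decompose $\xi$ by the \emph{sign} of the eigenvalues of $\mathcal{H}$ into a $(\le 0)$-eigenspace component $\xi_{-}$ and a $(>0)$-eigenspace component $\xi_{+}$, and claim only that $\|\xi_{-}\|\ge \tfrac{2}{3}\|\xi\|$ and $\|\xi_{+}\|^{2}\lesssim (\sqrt{\rho\epsilon}/\ell)\|\xi_{-}\|^{2}$. These two facts alone do \emph{not} imply $\xi^{T}\mathcal{H}\xi\le -\Omega(\sqrt{\rho\epsilon})\|\xi\|^{2}$: the $\le 0$-eigenspace can contain many directions whose eigenvalue is zero or only mildly negative, and $\xi_{-}^{T}\mathcal{H}\xi_{-}$ can be arbitrarily close to $0$ if $\xi_{-}$ concentrates there. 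Moreover, the variance bound $r_0^{2}(1+\mathscr{T}'^{2}/4)$ that you invoke holds only for non-negative eigendirections and does not cover these mildly negative ones, whose variance also grows exponentially (just more slowly). The paper sidesteps both issues by splitting at the \emph{quantitative} threshold $-\sqrt{\rho\epsilon}/2$: the subspace $\mathfrak{S}_{\parallel}'$ of eigendirections with $\lambda_i \le -\sqrt{\rho\epsilon}/2$ automatically contributes $\le -\tfrac{1}{2}\sqrt{\rho\epsilon}\|\hat\xi_{\parallel'}\|^{2}$ to the quadratic form, while for $\mathfrak{S}_{\perp}'$ (eigenvalues $> -\sqrt{\rho\epsilon}/2$) the variances obey $\sigma_i^{2}\lesssim e^{(4\rho\epsilon)^{1/4}\mathscr{T}'}/(\rho\epsilon)$, leaving an exponent gap $(2-\sqrt{2})(\rho\epsilon)^{1/4}\mathscr{T}'$ against the single most negative direction $\sigma_1^2 \gtrsim e^{2(\rho\epsilon)^{1/4}\mathscr{T}'}/(\rho\epsilon)$. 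The choice $\mathscr{T}'\sim 8(\rho\epsilon)^{-1/4}\log(\cdots)$ is engineered precisely so that this exponent gap overwhelms the $n$-dimensional tail-bound losses. To fix your argument you would need to adopt such a threshold-based split and bound $\|\xi_{\perp'}\|/|\xi_1|$ (the perpendicular mass against the \emph{single} fastest direction), as the paper does; otherwise the curvature conclusion in your third paragraph does not follow from the stated measurement statistics.
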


Compared to the provable guarantee from classical perturbation~\cite[Lemma 22]{jin2019stochastic}, speaking only in terms of $n$, classically it takes $\mathscr{T}=O(\log n)$ steps to decrease the function value by $\mathscr{F}=\Omega(1/\log^3 n)$, whereas our quantum simulation with time $\mathscr{T}'=O(\log n)$ together with also $\mathscr{T}'$ subsequent GD iterations decrease the function value by $\mathscr{F}'=\Omega(1)$ with high success probability.

Intuitively, the proof of \prop{QS-effectiveness} is composed of two parts. If the potential $f$ is quadratic, we can use \lem{1_dim_standard_QSimulation} to prove \prop{QSimulation} (both proof details are given in \append{QSimulation-append}), which demonstrates the exponential rate for quantum simulation to escape along the negative eigen-directions of the Hessian of $f$. However, the objective function $f$ is rarely a standard quadratic form in reality, and we cannot expect the quantum wave packet to preserve its shape as a Gaussian distribution. Nevertheless, we are able to show that the quantum wave packets do not differ significantly from a perfect Gaussian distribution in the course of quantum simulation, which preserves our quantum speedup in the general case.

Formally, we introduce the following lemma to bound the deviation from perfect Gaussian in quantum evolution. Before proceeding to its details, we first specify our choice for the constant $C_r$. As shown in the statement of \prop{QS-effectiveness}, $C_r$ stands for the ratio between the initial wavepacket variance and the radius of the simulation region. We choose a small enough constant $C_r$, such that the simulation region would be much larger than the range of the wavepacket, during the entire simulation process. Since the function $f$ is $\ell$-smooth, the spectral norm of its Hessian matrix at any point is upper bounded by constant $\ell$. Hence, the small enough constant $C_r$ is independent of $f$. Then, the radius $M$ of the simulation region satisfies
\begin{align}\label{eqn:M}
    M=r_{0}/C_r=\frac{4C_r^2}{9\mathscr{T}'^4}\Big(\frac{\delta_0}{3}\cdot\frac{1}{n^{3/2}+2C_0n\ell(\log \mathscr{T}')^{\alpha}}\Big)^2\leq 1.
\end{align}

\begin{restatable}{lemma}{DeviationFQuadratic}\label{lem:deviation-from-quadratic}
		Let $\mathcal{H}$ be the Hessian matrix of $f$ at a saddle point $\tilde{\vect{x}}$, and define $f_{q}(\vect{x}) := f(\tilde{\vect{x}}) + \frac{1}{2} (\vect{x} - \tilde{\vect{x}})^T \mathcal{H} (\vect{x} - \tilde{\vect{x}})$ to be the quadratic approximation of the function $f$ near $\tilde{\vect{x}}$. Denote the measurement outcome from the quantum simulation (see \algo{QuantumSimulation}) with potential field $f$ and evolution time $t_e$ as random variable $\xi$, and the measurement outcome from the quantum simulation with potential field $f_{q}$ and the same evolution time $t_e$ as another random variable $\xi'$. Let the law of $\xi$ (or $\xi'$, resp.) be $\mathbb{P}_{\xi}$  (or $\mathbb{P}_{\xi'}$, resp.). If the quantum wave packet is confined to a hypercube with edge length $M$, then
		\begin{align}
			TV(\mathbb{P}_{\xi}, \mathbb{P}_{\xi'}) \le \left(\frac{\sqrt{n}\rho}{2} + \frac{2C_f \ell}{\sqrt{r_0}} (\log t_e)^\alpha \right)\frac{n M t^2_e}{2},
		\end{align}
		where $TV(\cdot, \cdot)$ is the total variation distance between measures, $\alpha$ is an absolute constant, and $C_f$ is an $f$-related constant.
\end{restatable}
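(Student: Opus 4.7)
The plan is to translate the TV distance between measurement outcomes into an $L^{2}$ distance between wave functions, then apply Duhamel's principle combined with the Hessian-Lipschitz estimate on $f-f_q$ to bound the wave-function deviation in terms of the time-integrated potential mismatch against the exactly-solvable Gaussian $\Phi'$.

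First I would invoke the Born rule $\mathbb{P}_{\xi}=|\Phi(t_e)|^{2}$ and $\mathbb{P}_{\xi'}=|\Phi'(t_e)|^{2}$ together with the factorization
\[
\bigl\||\Phi|^{2}-|\Phi'|^{2}\bigr\|_{L^{1}}\le \bigl\||\Phi|-|\Phi'|\bigr\|_{L^{2}}\,\bigl\||\Phi|+|\Phi'|\bigr\|_{L^{2}}\le 2\|\Phi-\Phi'\|_{L^{2}},
\]
which (using unit normalization of the wave functions) reduces the claim to an $L^{2}$ bound on $\Phi(t_e)-\Phi'(t_e)$.

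Next, since $H=-\tfrac{r_{0}^{2}}{2}\Delta+\tfrac{1}{r_{0}^{2}}f$ and $H_q=-\tfrac{r_{0}^{2}}{2}\Delta+\tfrac{1}{r_{0}^{2}}f_q$ share their kinetic term, the generator difference is $H-H_q=\tfrac{1}{r_{0}^{2}}(f-f_q)$. Duhamel's formula together with unitarity of the two evolution operators then yields
\[
\|\Phi(t_e)-\Phi'(t_e)\|_{L^{2}}\le \frac{1}{r_{0}^{2}}\int_{0}^{t_e}\|(f-f_q)\,\Phi'(s)\|_{L^{2}}\,ds,
\]
reducing the problem to bounding $\|(f-f_q)\Phi'(s)\|_{L^{2}}$ at each time $s$. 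By Taylor's theorem under the $\rho$-Hessian-Lipschitz hypothesis, $|f(\vect{x})-f_q(\vect{x})|\le \tfrac{\rho}{6}\|\vect{x}-\tilde{\vect{x}}\|^{3}$ pointwise. Because $\Phi'$ evolves under the quadratic Hamiltonian $H_q$, it remains a multivariate Gaussian wave packet at all times, with covariance controlled direction-by-direction by $\sigma^{2}(s;\lambda_{i})$ from \lem{1_dim_standard_QSimulation} and its multidimensional extension in \append{QSimulation-append}. To sharpen the crude estimate $|f-f_q|\lesssim \rho(M\sqrt{n})^{3}$, I would split the integrand into a bulk region $\{\|\vect{x}-\tilde{\vect{x}}\|\le R(s)\}$ and a tail region, with $R(s)$ chosen as a polylogarithmic multiple of the running standard deviation. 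The cubic remainder on the bulk produces the first summand proportional to $\sqrt{n}\rho$, while on the tail the Gaussian decay $\exp(-R(s)^{2}/\sigma^{2}(s))$ of $|\Phi'(s)|^{2}$ combined with the $\ell$-smoothness bound $|f-f_q|\lesssim \ell\|\vect{x}-\tilde{\vect{x}}\|^{2}$ produces the second summand carrying $\ell/\sqrt{r_{0}}$ together with the polylogarithmic factor $(\log t_e)^{\alpha}$.

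The main obstacle will be the tail analysis responsible for the second summand. Along the negative-curvature eigen-directions of $\mathcal{H}$, the variance $\sigma^{2}(s;\lambda_{i})$ grows exponentially (see \eqn{variance_standard}), so a dimension-free Gaussian tail bound is too lossy; one must diagonalize $\mathcal{H}$ and track each eigen-direction independently, then choose $R(s)$ to balance the cubic-in-$R(s)$ bulk contribution against the $\ell$-smoothness-weighted tail contribution, verifying that the balancing threshold is only polylogarithmic in $t_e$. The factor $1/\sqrt{r_{0}}$ originates from the $r_{0}^{-n/2}$ prefactor in the $L^{2}$-normalization of $\Phi_{0}$ appearing in the tail estimate, the absolute exponent $\alpha$ emerges from that balancing, and the $f$-dependent constant $C_f$ absorbs eigenvalue-dependent constants of $\mathcal{H}$ relative to $\ell$.
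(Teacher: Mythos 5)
Your Duhamel strategy is a genuinely different route from the paper, and its first two reductions are sound: the Born-rule/Cauchy--Schwarz step correctly bounds $TV(\mathbb{P}_\xi,\mathbb{P}_{\xi'})$ by $\|\Phi(t_e)-\Phi'(t_e)\|_{L^2}$, and Duhamel with unitarity correctly gives $\|\Phi(t_e)-\Phi'(t_e)\|_{L^2}\le r_0^{-2}\int_0^{t_e}\|(f-f_q)\Phi'(s)\|_{L^2}\,\d s$. In fact, combined with the crude Hessian-Lipschitz bound $|f-f_q|\le\frac{\rho}{6}\|\vect{x}-\tilde{\vect{x}}\|^3$ and the confinement to a hypercube of diameter $\sqrt{n}M$, this already yields a valid, linear-in-$t_e$ bound $\tilde{O}(\rho\, n^{3/2}M^3 t_e/r_0^2)$ that is plenty strong enough for the downstream use in \lem{negative-curvature} under the choices $M=r_0/C_r$ in \eqn{parameter-choice-2}. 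However, the rest of the plan --- the bulk/tail balancing that is supposed to reproduce the exact statement --- does not work, so you have not actually proved the lemma as stated.

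The core problem is the $(\log t_e)^\alpha$ factor. In the paper's proof this does not come from any Gaussian-tail balancing: it is imported wholesale from Bourgain's theorem on polylogarithmic growth of Sobolev norms for Schr\"odinger evolution with periodic potential (\thm{sobolev-norm-growth} / \cor{growth}), which controls $\|\nabla\Phi'(\tau)\|_2$. The reason $\nabla\Phi'$ even appears is that the paper does \emph{not} use Duhamel; it uses the Trotter-type variation-of-parameters identity (\lem{vector-norm-bound}, from \cite{childs2019theory}) comparing $e^{-iE t}e^{-iH' t}$ with $e^{-iHt}$, which produces a commutator $[-\Delta,\,f-f_q]\Phi' = -(\Delta(f-f_q))\Phi' - 2\nabla(f-f_q)\cdot\nabla\Phi'$. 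That commutator is what converts a bound on $|f-f_q|$ (cubic in the domain size) into bounds on $\Delta(f-f_q)$ and $\nabla(f-f_q)$ (linear in $M$, via Hessian-Lipschitz and smoothness), at the cost of a $\|\nabla\Phi'\|_2$ term that Bourgain then controls as $O(\sqrt{n/r_0}(\log t_e)^\alpha)$. Your Duhamel bound never sees $\nabla\Phi'$, so there is no place for Bourgain's theorem to enter, and the $(\log t_e)^\alpha$ cannot emerge from a bulk/tail split: choosing $R(s)$ as a polylog multiple of the running standard deviation is also inconsistent with the target, since matching the bulk term $\rho R(s)^3 t_e/r_0^2$ to the stated $\sqrt{n}\rho\cdot nM t_e^2$ forces $R(s)\sim (n^{3/2}Mr_0^2 t_e)^{1/3}$, a power of $t_e$ rather than a polylog. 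Similarly, the $1/\sqrt{r_0}$ factor in the lemma is not a normalization artifact; it is precisely $\|\nabla\Phi_0\|_2=\Theta(\sqrt{n/r_0})$ from \cor{growth}. So your approach would prove a \emph{different}, and arguably cleaner, deviation bound --- but not the one in the lemma, and the claimed mechanism for the $\ell(\log t_e)^\alpha/\sqrt{r_0}$ term is not there.
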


The proof of \lem{deviation-from-quadratic} is deferred to \append{evol_deviation}. This lemma shows that the true perturbation given by quantum simulation $\xi \sim \mathbb{P}_{\xi}$ only deviates from the Gaussian random vector $\xi' \sim \mathbb{P}_{\xi'}$ at a magnitude of $\tilde{O}(Mn^{3/2}t^2_e)$ when $t_{e}=\mathscr{T}'= O(\log n)$ in \algo{PGD+QS}.  Such a deviation is non-material compared to our choice of $M$ in \eqn{M}. Therefore, we may estimate the performance of our quantum simulation subroutine using a quadratic approximation function and then bound the error caused by the non-quadratic part, as in the following lemma:

\begin{lemma}\label{lem:negative-curvature}
Under the setting of \prop{QS-effectiveness}, let $\tilde{\mathcal{H}}$ be the Hessian matrix of $f$ at point $\tilde{\vect{x}}$. Then, the output of QuantumSimulation$(\tilde{\vect{x}},r_0,\mathscr{T}')$ by applying \algo{QuantumSimulation}, denoted as $\xi$, satisfies
\begin{align}
    \frac{\xi^T \tilde{\mathcal{H}}\xi}{\|\xi\|^2}\leq -\frac{\sqrt{\rho\epsilon}}{3},
\end{align}
with probability at least $1-\delta_0$.
\end{lemma}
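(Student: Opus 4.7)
The plan is to reduce to the quadratic approximation of $f$ at $\tilde{\vect{x}}$, use the explicit Gaussian form of the wavepacket under quadratic evolution, and carry out a tail bound to show that the negative-curvature directions dominate the measurement outcome. Concretely, let $f_q(\vect{x}) := f(\tilde{\vect{x}}) + \tfrac{1}{2}(\vect{x}-\tilde{\vect{x}})^T\tilde{\mathcal{H}}(\vect{x}-\tilde{\vect{x}})$ and let $\xi'$ denote the outcome of \algo{QuantumSimulation} if we replace $f$ by $f_q$. By \lem{deviation-from-quadratic} together with our parameter choices in \eqn{parameter-choice-1}--\eqn{parameter-choice-2} and the bound \eqn{M} on $M=r_0/C_r$, the total-variation distance $TV(\mathbb{P}_\xi,\mathbb{P}_{\xi'})$ is at most $\delta_0/3$: the $r_0$ in \eqn{parameter-choice-2} was tailored precisely to cancel the $nMt_e^2$ factor. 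It therefore suffices to prove the claim for $\xi'$ with failure probability $\le 2\delta_0/3$ and invoke a TV coupling.

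Diagonalize $\tilde{\mathcal{H}}=Q\Lambda Q^T$ with $\lambda_1 \le \cdots \le \lambda_n$; the saddle-point hypothesis gives $\lambda_1 \le -\sqrt{\rho\epsilon}$, and $\ell$-smoothness gives $|\lambda_i|\le \ell$. Under $f_q$ the Schr\"odinger equation separates in the eigenbasis, so by the multi-dimensional analogue of \lem{1_dim_standard_QSimulation} (\prop{QSimulation} of \append{QSimulation-append}) the coordinates $v_i$ of $\xi'$ in this basis are independent Gaussians with variances $r_0^2\sigma^2(\mathscr{T}';\lambda_i)$ given by \eqn{variance_standard}. In particular, $\sigma^2(\mathscr{T}';\lambda_1)\ge \sinh^2(\sqrt{|\lambda_1|}\mathscr{T}')/(4|\lambda_1|) \ge e^{2\sqrt{|\lambda_1|}\mathscr{T}'}/(16\ell)$ grows exponentially, whereas for every direction with $\lambda_i > -\sqrt{\rho\epsilon}$ the formula \eqn{variance_standard} yields $\sigma^2(\mathscr{T}';\lambda_i) \le 1 + \mathscr{T}'^2/4$, a $\mathrm{polylog}(n/\delta_0)$ quantity under \eqn{parameter-choice-1}.

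Now I estimate the quadratic form via $\xi'^T\tilde{\mathcal{H}}\xi'/\|\xi'\|^2 = \sum_i \lambda_i v_i^2/\sum_i v_i^2$ on a joint high-probability event. Gaussian anti-concentration gives $|v_1|\ge (\delta_0/6)\,r_0\,\sigma(\mathscr{T}';\lambda_1)$ with probability $\ge 1-\delta_0/6$. Since $\sum_{i\ge 2} v_i^2$ is a sum of independent sub-exponentials with total mean $\le r_0^2 n(1+\mathscr{T}'^2/4)$, a $\chi^2$ tail bound gives $\sum_{i\ge 2} v_i^2 \le 2r_0^2 n(1+\mathscr{T}'^2/4)\log(6/\delta_0)$ with probability $\ge 1-\delta_0/6$. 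The choice \eqn{parameter-choice-1} of $\mathscr{T}'$ is tuned precisely so that on this event
\[
\frac{\sum_{i\ge 2}v_i^2}{v_1^2} \;\le\; \frac{72\,n\,\log(6/\delta_0)\,(1+\mathscr{T}'^2/4)\,\ell}{\delta_0^2\,e^{2\sqrt{|\lambda_1|}\mathscr{T}'}} \;\le\; \frac{2\sqrt{\rho\epsilon}}{3\ell},
\]
and consequently, using $\lambda_i \le \ell$ for the (at most) positive directions,
\[
\frac{\xi'^T\tilde{\mathcal{H}}\xi'}{\|\xi'\|^2} \;\le\; \frac{\lambda_1 v_1^2 + \ell\sum_{i\ge 2}v_i^2}{v_1^2} \;\le\; -\sqrt{\rho\epsilon} + \ell\cdot\frac{2\sqrt{\rho\epsilon}}{3\ell} \;=\; -\frac{\sqrt{\rho\epsilon}}{3}.
\]
A union bound with the TV estimate of the first paragraph then yields total failure probability $\le \delta_0$.

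The main technical hurdle is the delicate calibration between the two regimes: the exponential growth factor $e^{2\sqrt{|\lambda_1|}\mathscr{T}'}$ must overtake both the $n\cdot\mathrm{polylog}$ accumulated over non-escaping directions and the $1/\delta_0^2$ loss from Gaussian anti-concentration, while simultaneously $\mathscr{T}'$ stays short enough that the TV deviation from \lem{deviation-from-quadratic} remains below $\delta_0/3$. The specific logarithmic form of $\mathscr{T}'$ in \eqn{parameter-choice-1} and the quadratic dependence on $\delta_0/n^{3/2}$ in $r_0$ of \eqn{parameter-choice-2} are calibrated to close this gap, and carefully tracking those constants through the Gaussian tail calculations will constitute the bulk of the remaining work.
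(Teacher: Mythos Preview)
Your overall strategy---reduce to the quadratic potential via \lem{deviation-from-quadratic}, diagonalize, use Gaussian anti-concentration on the most-negative direction and a tail bound on the rest---matches the paper. But there is a genuine gap in the variance estimate for the ``non-escaping'' directions.

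You assert that for every $\lambda_i > -\sqrt{\rho\epsilon}$ the formula \eqn{variance_standard} gives $\sigma^2(\mathscr{T}';\lambda_i)\le 1+\mathscr{T}'^2/4$. This is false: the bound $1+t^2/4$ is the $\lambda=0$ value, and \lem{h-tail} shows it is a \emph{lower} bound for all $\lambda<0$. For $\lambda_i$ in the range $(-\sqrt{\rho\epsilon},0)$ the variance still grows like $e^{2\sqrt{|\lambda_i|}\,\mathscr{T}'}$, which for $\lambda_i$ near $-\sqrt{\rho\epsilon}$ is comparable to $\sigma^2(\mathscr{T}';\lambda_1)$. Consequently your tail bound $\sum_{i\ge 2}v_i^2\le 2r_0^2 n(1+\mathscr{T}'^2/4)\log(6/\delta_0)$ fails whenever there are eigenvalues in $(-\sqrt{\rho\epsilon},0)$, and your ratio estimate collapses. (Even the split itself is problematic: if $\lambda_2\le -\sqrt{\rho\epsilon}$ as well, you lump $v_2$ into $\sum_{i\ge 2}$ with coefficient $\ell$ rather than exploiting its negativity.)

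The paper fixes this by splitting at the threshold $-\sqrt{\rho\epsilon}/2$ rather than at $-\sqrt{\rho\epsilon}$. Directions with $\lambda_i\le -\sqrt{\rho\epsilon}/2$ go into the ``parallel'' subspace $\mathfrak{S}'_\parallel$ and contribute at most $-\tfrac{1}{2}\sqrt{\rho\epsilon}\,\|\hat\xi_{\parallel'}\|^2$ to the quadratic form, so their possibly large variance \emph{helps}. Directions with $\lambda_i>-\sqrt{\rho\epsilon}/2$ still have exponentially growing variance, but with exponent at most $(4\rho\epsilon)^{1/4}\mathscr{T}'=\sqrt{2}(\rho\epsilon)^{1/4}\mathscr{T}'$, strictly smaller than the exponent $2(\rho\epsilon)^{1/4}\mathscr{T}'$ for $\sigma_1$. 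It is this constant-factor gap in the exponents---not a polynomial-vs-exponential dichotomy---that makes the ratio $\|\xi_{\perp'}\|/\|\xi_{\parallel'}\|$ small, and the coefficient $8$ in $\mathscr{T}'$ is tuned against $(2-\sqrt{2})$ rather than against $2$. Your argument can be repaired by adopting this two-level split and carrying the exponential bounds on both sides, as the paper does.
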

\begin{proof}
Without loss of generality, assume $\nabla f(\vect{x}_t)=\vect{0}$. First consider the case where the potential $f$ is purely quadratic, and add the estimate the error term caused by the non-quadratic deflation afterwards.

First note that the Hessian matrix $\tilde{\mathcal{H}}$ admits the following eigen-decomposition:
\begin{align}\label{eqn:Hessian-decomposition}
\tilde{\mathcal{H}}=\sum_{i=1}^{n}\lambda_i\vect{u}_{i}\vect{u}_i^{T},
\end{align} where the set $\{\vect{u}_i\}_{i=1}^{n}$ forms an orthonormal basis of $\mathbb{R}^n$. Without loss of generality, we assume that the eigenvalues $\lambda_1,\lambda_2,\ldots,\lambda_n$ corresponding to $\vect{u}_1,\vect{u}_2,\ldots,\vect{u}_n$ satisfy
\begin{align}
\lambda_1\leq\lambda_2\leq\cdots\leq\lambda_n,
\end{align}
in which $\lambda_1\leq-\sqrt{\rho\epsilon}$. If $\lambda_n\leq-\sqrt{\rho\epsilon}/2$, \lem{negative-curvature} holds directly. Hence, we only need to prove the case where $\lambda_n>-\sqrt{\rho\epsilon}/2$, in which there exists some $p$, $p'$ with
\begin{align}
\lambda_p\leq -\sqrt{\rho\epsilon}< \lambda_{p+1},\quad\lambda_{p'}\leq -\sqrt{\rho\epsilon}/2< \lambda_{p'+1}.
\end{align}
We use $\mathfrak{S}_{\parallel}$, $\mathfrak{S}_{\perp}$ to respectively denote the subspace of $\mathbb{R}^{n}$ spanned by
\begin{align}
\left\{\vect{u}_1,\vect{u}_2,\ldots,\vect{u}_p\right\}, \quad
\left\{\vect{u}_{p+1},\vect{u}_{p+2}, \ldots, \vect{u}_{n}\right\},
\end{align}
and use $\mathfrak{S}_{\parallel}'$, $\mathfrak{S}_{\perp}'$ to respectively denote the subspace of $\mathbb{R}^{n}$ spanned by
\begin{align}
    \left\{\vect{u}_1,\vect{u}_2,\ldots,\vect{u}_{p'}\right\},\quad \left\{\vect{u}_{p'+1},\vect{u}_{p+2},\ldots,\vect{u}_{n}\right\}.
\end{align}
Furthermore, we define $\xi_{\parallel}:=\sum_{i=1}^p\langle\vect{u}_i,\xi\rangle\vect{u}_i$,
$\xi_{\perp}:=\sum_{i=p}^n\langle\vect{u}_i,\xi\rangle\vect{u}_i$,
$\xi_{\parallel'}:=\sum_{i=1}^{p'}\langle\vect{u}_i,\xi\rangle\vect{u}_i$,
$\xi_{\perp'}:=\sum_{i=p'}^n\langle\vect{u}_i,\xi\rangle\vect{u}_i$
respectively to denote the component of $\xi$ in the subspaces $\mathfrak{S}_{\parallel}$, $\mathfrak{S}_{\perp}$, $\mathfrak{S}_{\parallel}'$, $\mathfrak{S}_{\perp}'$. Also, we define $\xi_1:=\langle\vect{u}_1,\xi\rangle\vect{u}_1$ to be the component of $\xi$ along $\vect{u}_1$, the most negative eigen-direction.

Under the basis $\left\{\vect{u}_1,\ldots,\vect{u}_n\right\}$, by \prop{QSimulation}, the time evolution of the initial wave function is governed by \eqref{eqn:Schrodinger-appendix}. Then, at $t_e=\mathscr{T}'$, the wave function still follows multivariate Gaussian distribution $\mathcal{N}(0, r_0^2\Sigma)$, with the covariance matrix being
\begin{align}
    \Sigma =  \diag(\sigma^2(\mathscr{T}';\lambda_1), ..., \sigma^2(\mathscr{T}';\lambda_n)),
\end{align}
where the variance $\sigma(\mathscr{T}';\lambda_i)$ is defined in \eqn{variance_standard}. Denote $\sigma_i:=\sigma(\mathscr{T}';\lambda_i)$ for each $i\in[n]$. Then, for any $i\in[n]$ with $\vect{u}_i\in\mathfrak{S}_{\perp}'$, since $\lambda_i\geq-\sqrt{\rho\epsilon}/2$, we have
\begin{align}
    \sigma_i^2\leq\frac{(1-e^{(4\rho\epsilon)^{1/4}\mathscr{T}'})^2 + \rho\epsilon(1+e^{(4\rho\epsilon)^{1/4}\mathscr{T}'})^2}{4\rho\epsilon \cdot e^{(4\rho\epsilon)^{1/4}\mathscr{T}'}}.
\end{align}
Due to our choice of the parameter $\mathscr{T}'$, we can further derive that
\begin{align}
    \sigma_i^2
    \leq\frac{(1+2\rho\epsilon) e^{2(4\rho\epsilon)^{1/4}\mathscr{T}'}}{4\rho\epsilon \cdot e^{(4\rho\epsilon)^{1/4}\mathscr{T}'}}
    \leq\frac{ e^{(4\rho\epsilon)^{1/4}\mathscr{T}'}}{2\rho\epsilon}.
\end{align}
Denote $\sigma_{\perp}':=\frac{ e^{(4\rho\epsilon)^{1/4}\mathscr{T}'}}{2\rho\epsilon}$. We define an $(n-p')$-dimensional Gaussian distribution $\tilde{p}(\cdot)$ in $\mathfrak{S}_{\perp}'$:
\begin{align}
    \tilde{p}(\vect{y})=\Big(\frac{1}{2\pi}\Big)^{(n-p')/2}\Big(\frac{\sqrt{n-p'}}{\sigma_{\perp}'r_0}\Big)\exp\Big(-\frac{(n-p')\|\vect{y}\|^2}{2\sigma_{\perp}'{}^2 r_0^2}\Big),
\end{align}
then the actual distribution of $\|\xi_{\perp'}\|$ is upper bounded by the distribution of $\|\vect{y}\|$ under the probability density function $\tilde{p}(\vect{y})$. Furthermore, by \lem{positive-tail} in \append{h-tail}, with probability at least $1-\delta_0/3$ we have
\begin{align}
    \|\xi_{\perp'}\|^2/r_0^2
    &\leq \sum_{i=p'+1}^{n}\sigma_i^2+2\sqrt{\log(3/\delta_0)\sum_{i=p'+1}^{n}\sigma_i^4}+2\max_{p'+1\leq i\leq n}\sigma_i^2 \log(3/\delta_0)\\
    &\leq (n-p')\sigma_{\perp}'{}^2+2\big(\sqrt{(n-p')\log(3/\delta_0)}+\log(3/\delta_0)\big)\sigma_{\perp}'{}^2\\
    &\leq 2(n+2\log(3/\delta_0))\sigma_{\perp}'{}^2.
\end{align}

On the other hand, on the most negative direction $i=1$, by $\lambda_1\leq-\sqrt{\rho\epsilon}$, we can derive that
\begin{align}
    \sigma_1^2
    &\geq\frac{(1-e^{2(\rho\epsilon)^{1/4}\mathscr{T}'})^2 + 4\rho\epsilon(1+e^{2(\rho\epsilon)^{1/4}\mathscr{T}'})^2}{16\rho\epsilon e^{2(\rho\epsilon)^{1/4}\mathscr{T}'}}\\
    &\geq\frac{e^{4(\rho\epsilon)^{1/4}\mathscr{T}'}/2 + 4\rho\epsilon e^{4(\rho\epsilon)^{1/4}\mathscr{T}'}}{16\rho\epsilon e^{2(\rho\epsilon)^{1/4}\mathscr{T}'}}\\
    &\geq \frac{e^{2(\rho\epsilon)^{1/4}\mathscr{T}'}}{32\rho\epsilon}.
\end{align}
Hence, after we measure the wavepacket, $\xi_1$ satisfies
\begin{align}
    \Pr\left\{|\xi_1|\geq \frac{\delta_0\sigma_1 r_0}{2} \right\}&=\int_{-\delta_0\sigma_1 r_0/2}^{\delta_0\sigma_1 r_0/2}\Big(\frac{1}{2\pi}\Big)^{1/2}\cdot\frac{1}{r_0\sigma_1}\exp\Big(-\frac{\theta^2}{2r_0^2\sigma_1^2}\Big)\d \theta\\
    &\geq\Big(\frac{1}{2\pi}\Big)^{1/2}\cdot\frac{2}{r_0\sigma_1}\cdot\frac{\delta_0\sigma_1 r_0}{2}\geq\frac{\delta_0}{3}.
\end{align}
By the union bound, with probability at least $1-2\delta_0/3$, the output $\xi$ would satisfy:
\begin{align}
    \frac{\|\xi_{\perp'}\|}{\|\xi_{\parallel'}\|}&\leq\frac{\|\xi_{\perp}\|}{|\xi_1|}
    \leq \frac{\sqrt{2(n+2\log(3/\delta_0))}}{\delta_0/2}\cdot\frac{\sigma_{\perp'}'}{\sigma_1}\\
    &\leq \frac{3\sqrt{(n+2\log(3/\delta_0))}}{\delta_0}\cdot\frac{ e^{(4\rho\epsilon)^{1/4}\mathscr{T}'/2}}{\sqrt{2\rho\epsilon}}\cdot\frac{\sqrt{32\rho\epsilon}}{e^{(\rho\epsilon)^{1/4}\mathscr{T}'}}\\
    &\leq\frac{12\sqrt{(n+2\log(3/\delta_0))}}{\delta_0}\cdot \exp\big(-(1-\sqrt{2}/2)(\rho\epsilon)^{1/4}\mathscr{T}'\big)\\
    &\leq \frac{\sqrt{\rho\epsilon}}{12\ell}.
\end{align}
Considering the fact that the function $f$ is not purely quadratic, by \lem{deviation-from-quadratic} the inequality above may be violated with probability at most
\begin{align}
    \frac{2}{3}\delta_0+TV(\mathbb{P}_{\xi}, \mathbb{P}_{\xi'}) \leq \frac{2}{3}\delta_0+ \left( \sqrt{n}\rho + \frac{2C \ell}{\sqrt{r_0}} (\log \mathscr{T}')^\alpha \right)\frac{n M \mathscr{T}'^2}{2},
\end{align}
in which $M=r_0/C_r$ due to our parameter choice. Choose the constant $C_0$ in $r_0$ large enough to satisfy $C_0\geq C$. Then with probability at least $1-\delta_0$, we can still have
\begin{align}
    \frac{\|\xi_{\perp'}\|}{\|\xi_{\parallel'}\|}\leq \frac{\sqrt{\rho\epsilon}}{12\ell},
\end{align}
after counting in the deviation from pure quadratic field. Under this circumstance, use $\hat{\xi}$ to denote $\xi/\|\xi\|$. Observe that
\begin{align}
\hat{\xi}^{T}\tilde{\mathcal{H}}\hat{\xi}=(\hat{\xi}_{\perp'}+\hat{\xi}_{\parallel'})^{T}\tilde{\mathcal{H}}(\hat{\xi}_{\perp'}+\hat{\xi}_{\parallel'})=\hat{\xi}_{\perp'}^{T}\tilde{\mathcal{H}}\hat{\xi}_{\perp'}+\hat{\xi}_{\parallel'}^{T}\tilde{\mathcal{H}}\hat{\xi}_{\parallel'}
\end{align}
since $\tilde{\mathcal{H}}\hat{\xi}_{\perp'}\in\mathfrak{S}_{\perp}'$ and $\tilde{\mathcal{H}}\hat{\xi}_{\parallel'}\in\mathfrak{S}_{\parallel}'$. Due to the $\ell$-smoothness of the function, all eigenvalue of the Hessian matrix has its absolute value upper bounded by $\ell$. Thus we have,
\begin{align}
\hat{\xi}_{\perp'}^{T}\tilde{\mathcal{H}}\hat{\xi}_{\perp'}\leq\ell\|\hat{\xi}_{\perp'}^{T}\|_{2}^2=\rho\epsilon/(144\ell^2).
\end{align}
Further according to the definition of $\mathfrak{S}_{\parallel}$, we have
\begin{align}
\hat{\xi}_{\parallel'}^{T}\tilde{\mathcal{H}}\hat{\xi}_{\parallel'}\leq-\sqrt{\rho\epsilon}\|\hat{\xi}_{\parallel'}\|^2/2.
\end{align}
Combining these two inequalities together, we can obtain
\begin{align}
\hat{\xi}^{T}\tilde{\mathcal{H}}\hat{\xi}&=\hat{\xi}_{\perp'}^{T}\tilde{\mathcal{H}}\hat{\xi}_{\perp'}+\hat{\xi}_{\parallel'}^{T}\tilde{\mathcal{H}}\hat{\xi}_{\parallel'}
\leq-\sqrt{\rho\epsilon}\|\hat{\xi}_{\parallel'}\|^2/2+\rho\epsilon/(144\ell^2)\leq-\sqrt{\rho\epsilon}/3.
\end{align}
\end{proof}

Now we are ready to prove \prop{QS-effectiveness}.
\begin{proof}
Without loss of generality, we assume $\tilde{\vect{x}}=\vect{0}$. By \lem{negative-curvature}, with probability at least $1-\delta_0$, the output $\xi$ of QuantumSimulation would be in a negative curvature direction, or quantitatively,
\begin{align}
    \frac{\xi^T \tilde{\mathcal{H}}\xi}{\|\xi\|^2}\leq -\sqrt{\rho\epsilon}/3.
\end{align}
Since we choose the one with smaller function value from $\left\{\Delta_t,-\Delta_t\right\}$ to be the perturbation result, without loss of generality we can assume $\langle \nabla f(\vect{0}), \Delta_t\rangle\leq 0$. Then,
\begin{align}
    f(\Delta_t)-f(\vect{0})&=\int_{0}^{1}\langle \nabla f(\theta\Delta_t), \Delta_t\rangle \d \theta,
\end{align}
where the gradient $\nabla f(\theta\Delta_t)$ can be expressed as
\begin{align}
    \nabla f(\theta\Delta_t)=\nabla f(\vect{0})+\int_{0}^{\theta}\mathcal{H}(\nu\Delta_t)\Delta_t\d \nu,
\end{align}
which leads to
\begin{align}
    f(\Delta_t)-f(\vect{0})&=\langle\nabla f(\vect{0}),\Delta_t\rangle+\int_{0}^{1}\d\theta\Big\langle\int_{0}^{\theta}\mathcal{H}(\nu\Delta_t)\Delta_t\d \nu,\Delta_t\Big\rangle\\
    &\leq\int_{0}^{1}\d\theta\int_{0}^{\theta}\langle\mathcal{H}(\nu\Delta_t)\Delta_t,\Delta_t\rangle\d\nu.
\end{align}
Here, $\mathcal{H}(\nu,\Delta_t)$ satisfies
\begin{align}
    \|\mathcal{H}(\nu\Delta_t)-\tilde{\mathcal{H}}\|\leq \rho\|\nu\Delta_t\|
\end{align}
due to the $\rho$-Hessian Lipschitz property of $f$, which indicates
\begin{align}
    \langle\mathcal{H}(\nu\Delta_t)\Delta_t,\Delta_t\rangle
    &=\langle\tilde{\mathcal{H}}\Delta_t,\Delta_t\rangle+\langle(\mathcal{H}(\nu\Delta_t)-\tilde{\mathcal{H}})\Delta_t,\Delta_t\rangle\\
    &\leq\langle\tilde{\mathcal{H}}\Delta_t,\Delta_t\rangle +\|\mathcal{H}(\nu\Delta_t)-\tilde{\mathcal{H}}\|\cdot\|\Delta_t\|^2\\
    &\leq\langle\tilde{\mathcal{H}}\Delta_t,\Delta_t\rangle +\rho\|\Delta_t\|^3\nu,\quad\forall \nu>0.
\end{align}
Hence,
\begin{align}
    f(\Delta_t)-f(\vect{0})&\leq\int_{0}^{1}\d\theta\int_{0}^{\theta}\langle\mathcal{H}(\nu\Delta_t)\Delta_t,\Delta_t\rangle\d\nu\\
    &\leq\int_{0}^{1}\d\theta\int_{0}^{\theta}\langle\tilde{\mathcal{H}}\Delta_t,\Delta_t\rangle\d\nu+\int_{0}^{1}\d\theta\int_{0}^{\theta}\rho\|\Delta_t\|^3\nu\d\nu\\
    &\leq-\frac{\sqrt{\rho\epsilon}}{6}\cdot\|\Delta_t\|^2+\frac{\rho}{6}\cdot\|\Delta_t\|^3\\
    &=-\frac{\sqrt{\rho\epsilon}}{6}\cdot\frac{4\epsilon}{9\rho}+\frac{\rho}{6}\cdot\frac{8\epsilon^{3/2}}{27\rho^{3/2}}=-\mathscr{F}'.
\end{align}
\end{proof}

\subsubsection{Proof of Our Quantum Speedup}
We now prove the following theorem using \prop{QS-effectiveness}:
\begin{restatable}{theorem}{PGDQScomplexity}\label{thm:PGD+QS-Complexity}
For any $\epsilon$, $\delta>0$, \algo{PGD+QS} with parameters chosen in \prop{QS-effectiveness} satisfies that at least one half of its iterations of will be $\epsilon$-approximate local minima, using
\begin{align*}
\tilde{O}\Big(\frac{(f(\vect{x}_{0})-f^{*})}{\epsilon^{2}}\cdot\log^{2}n\Big)
\end{align*}
queries to $U_{f}$ in \eqn{quantum-evaluation} and gradients with probability $\geq1-\delta$, where $f^{*}$ is the global minimum of $f$.
\end{restatable}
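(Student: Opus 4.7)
My plan is a standard accounting argument on the per-iteration decrease in $f$. I would partition the indices $t \in \{0, 1, \ldots, T\}$ into three types according to the state of $\vect{x}_t$: (A) $\|\nabla f(\vect{x}_t)\| > \epsilon$ (large gradient); (B) $\|\nabla f(\vect{x}_t)\| \leq \epsilon$ but $\lambda_{\min}(\nabla^{2} f(\vect{x}_t)) \leq -\sqrt{\rho\epsilon}$ (an approximate saddle); or (C) $\vect{x}_t$ is already an $\epsilon$-approximate local minimum. The goal is to show that, with probability $\geq 1-\delta$, at most $T/2$ indices fall in types (A) or (B), so the remaining $\geq T/2$ are of type (C).

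For type (A), the standard $\ell$-smooth gradient descent bound with step size $\eta=1/\ell$ gives $f(\vect{x}_{t+1}) - f(\vect{x}_t) \leq -\|\nabla f(\vect{x}_t)\|^{2}/(2\ell) < -\epsilon^{2}/(2\ell)$. For type (B), \prop{QS-effectiveness} guarantees that with probability $\geq 1-\delta_0$ the perturbation $\Delta_t$ produced by \algo{QuantumSimulation} achieves $f(\vect{x}_t \pm \Delta_t) - f(\vect{x}_t) \leq -\mathscr{F}'$ for $\mathscr{F}' = (2/81)\sqrt{\epsilon^{3}/\rho}$, and the subsequent gradient step only decreases $f$ further by $\ell$-smoothness. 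Under the standard assumption $\epsilon \leq \ell^{2}/\rho$ one has $\epsilon^{2}/(2\ell) \leq \mathscr{F}'$ up to constants, so every type-(A) or type-(B) iteration decreases $f$ by at least $\epsilon^{2}/(2\ell)$ on the success event. Since the total decrease cannot exceed $f(\vect{x}_0) - f^{*}$, the number of such iterations is at most $2\ell(f(\vect{x}_0) - f^{*})/\epsilon^{2}$; choosing $T := 4\ell(f(\vect{x}_0) - f^{*})/\epsilon^{2}$ then forces at least $T/2$ type-(C) indices.

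Failures occur only at the at most $T$ activations of \algo{QuantumSimulation}, so I would set $\delta_0 := \delta/T$ and apply a union bound; by \eqn{parameter-choice-1}, $\mathscr{T}'$ depends only poly-logarithmically on $1/\delta_0$, so this substitution is absorbed into the $\tilde{O}$ notation. For the query complexity, each iteration uses one gradient evaluation, and each of the $\leq T$ \algo{QuantumSimulation} invocations consumes $\tilde{O}(\mathscr{T}' \log n) = \tilde{O}(\log^{2} n)$ queries to $U_{f}$ by \lem{simulation} (since $\mathscr{T}' = \tilde{O}(1)$ after absorbing $\log n$ and $\log(1/\epsilon)$ factors into $\tilde{O}$). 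Summing over the $T$ iterations gives $T$ gradient queries plus $\tilde{O}(T \log^{2} n) = \tilde{O}\big((f(\vect{x}_0) - f^{*})\log^{2}(n)/\epsilon^{2}\big)$ queries to $U_f$, which matches the claimed bound.

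The main care I expect to need is in composing the probabilistic guarantees across iterations, since the output of each \algo{QuantumSimulation} step feeds adaptively into the subsequent trajectory. The clean way is to apply \prop{QS-effectiveness} conditionally on the entire past history at each type-(B) index (which is valid because the proposition only requires $\vect{x}_t$ to be an approximate saddle at the moment of invocation) and then union-bound over the activation events. The $\arg\min$ between $\vect{x}_t + \Delta_t$ and $\vect{x}_t - \Delta_t$ in \algo{PGD+QS} is harmless, since the proof of \prop{QS-effectiveness} already produces the $-\mathscr{F}'$ drop for the better sign.
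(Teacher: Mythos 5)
Your proposal is correct and takes essentially the same accounting route as the paper: partition iterations into large-gradient, approximate-saddle, and approximate-local-minimum types, bound the first two by a function-value-decrease argument, apply \prop{QS-effectiveness} for the saddle steps, and union-bound the failure events over the at most $T$ activations of \algo{QuantumSimulation}. The only cosmetic differences are your choice $\delta_0 := \delta/T$ (the paper instead picks $\delta_0$ proportional to $\mathscr{F}'/(f(\vect{x}_0)-f^*)$ and union-bounds over the at most $(f(\vect{x}_0)-f^*)/\mathscr{F}'$ activations, and also picks $T$ as a max of two terms rather than invoking $\epsilon\le\ell^2/\rho$ to collapse them into one) and your slightly more explicit remark about conditioning on the adaptive history before each invocation, which the paper leaves implicit; neither changes the substance or the resulting complexity bound.
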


\begin{proof}
Set $\delta_0=\frac{2}{81(f(\vect{x}_0)-f^{*})}\sqrt{\frac{\epsilon^3}{\rho}}$, let the parameters be chosen according to \eqn{parameter-choice-1} and \eqn{parameter-choice-2}, and set the total iteration steps $T$ to be:
\begin{align}
T=4\max\left \{ \frac{(f(\vect{x_{0}})-f^{*})}{\mathscr{F'}},\frac{(f(\vect{x_{0}})-f^{*})}{\eta \epsilon^{2}} \right \}
=\tilde{O}\Big(\frac{(f(\vect{x}_{0})-f^{*})}{\epsilon^{2}}\cdot\log n\Big),
\end{align}
similar to the classical GD algorithm.  We first assume that for each $\vect{x}_t$ we apply QuantumSimulation (\algo{QuantumSimulation}),
we can successfully obtain an output $\xi$ with $\xi^{T}\mathcal{H}\xi/\|\xi\|^2\leq-\sqrt{\rho\epsilon}/3$, as long as $\lambda_{\min}(\mathcal{H}(\vect{x}_t))\leq-\sqrt{\rho\epsilon}$. The error probability of this assumption is provided later.

Under this assumption, \algo{QuantumSimulation} can be called for at most $\frac{81(f(\vect{x_{0}})-f^{*})}{2}\sqrt{\frac{\rho}{\epsilon^3}}\leq \frac{T}{4}$ times, for otherwise the function value decrease will be greater than $f(\vect{x_{0}})-f^{*}$, which is not possible. Then, the error probability that some calls to \algo{QuantumSimulation} fail to indicate a negative curvature is upper bounded by
\begin{align}
\frac{81(f(\vect{x_{0}})-f^{*})}{2}\sqrt{\frac{\rho}{\epsilon^3}}\cdot\delta_0=\delta.
\end{align}

Excluding those iterations that QuantumSimulation is applied, we still have at least $3T/4$ steps left. They are either large gradient steps, $\|\nabla f(\vect{x}_{t})\|\geq \epsilon$, or $\epsilon$-approximate second-order stationary points. Within them, we know that the number of  large gradient steps cannot be more than $T/4$ because otherwise, by \lem{descent-lemma} in \append{existing-lemma}:
\begin{align}
f(\vect{x}_{T})\leq f(\vect{x}_{0})-T\eta\epsilon^{2}/4<f^{*},
\end{align}
a contradiction. Therefore, we conclude that at least $T/2$ of the iterations must be $\epsilon$-approximate second-order stationary points with probability at least $1-\delta$.

The number of queries can be viewed as the sum of two parts, the number of queries needed for gradient descent, denoted by $T_{1}$, and the number of queries needed for quantum simulation, denoted by $T_{2}$. Then with probability at least $1-\delta$,
\begin{align}
T_{1}=T=\tilde{O}\Big(\frac{(f(\vect{x}_{0})-f^{*})}{\epsilon^{2}}\cdot\log n\Big).
\end{align}
As for $T_{2}$, with probability at least $1-\delta$ quantum simulation is called for at most $\frac{4(f(\vect{x_{0}})-f^{*})}{\mathscr{F'}}$ times, and by \lem{simulation} it takes $\tilde{O}\big(\mathscr{T}'\log n \log^{2}(\mathscr{T}'^{2}/\epsilon)\big)$ queries to carry out each simulation. Therefore,
\begin{align}
T_{2}=\frac{4(f(\vect{x_{0}})-f^{*})}{\mathscr{F'}}\cdot\tilde{O}\big(\mathscr{T}'\log n \log^{2}(\mathscr{T}'^{2}/\epsilon)\big)=\tilde{O}\Big(\frac{(f(\vect{x}_{0})-f^{*})}{\epsilon^{1.75}}\cdot\log^2 n\Big).
\end{align}
As a result, the total query complexity $T_{1}+T_{2}$ is
\begin{align}
\tilde{O}\Big(\frac{(f(\vect{x}_{0})-f^{*})}{\epsilon^{2}}\cdot\log^2 n\Big).
\end{align}
\end{proof}

\subsection{Perturbed Accelerated Gradient Descent with Quantum Simulation}\label{sec:quantum-PAGD}
In \thm{PGD+QS-Complexity}, the $1/\epsilon^{2}$ term is a bottleneck of the whole algorithm, but~\cite{jin2018accelerated} improved it  to $1/\epsilon^{1.75}$ by replacing the GD with the accelerated GD by~\cite{nesterov1983method}. We next introduce a hybrid quantum-classical algorithm (\algo{PAGDQS}) that reflects this intuition. We make the following comparisons to~\cite{jin2018accelerated}:
\begin{itemize}
\item \textbf{Same:} When the gradient is large, we both apply AGD iteratively until we reach a point with small gradient. If the function becomes ``too nonconvex'' in the AGD, we both reset the momentum and decide whether to exploit the negative curvature at that point.

\item \textbf{Difference:} At a point with small gradient, we apply quantum simulation instead of the classical uniform perturbation. Speaking only in terms of $n$,~\cite{jin2018accelerated} takes $O(\log n)$ steps to decrease the Hamiltonian $f(\vect{x})+\frac{1}{2\eta}\|\vect{v}\|^{2}$ by $\Omega(1/\log^5 n)$ with high probability, whereas our quantum simulation for time $\mathscr{T}'=O(\log n)$ decreases the Hamiltonian by $\Omega(1)$ with high probability.
\end{itemize}

\begin{algorithm}[htbp]
\caption{Perturbed Accelerated Gradient Descent with Quantum Simulation.}
\label{algo:PAGDQS}
$\vect{v}_{0}\leftarrow 0$\;
\For{$t=0,1,\ldots,T$}{
\If{$\|\nabla f(\vect{x}_{t})\|\leq \epsilon$}{
$\xi\sim$QuantumSimulation$\big(\vect{x}_{t},r_{0},\mathscr{T}',f(\vect{x})-\langle\nabla f(\vect{x}_t),\vect{x}-\vect{x}_t\rangle\big)$\;
$\Delta_t\leftarrow\frac{2\xi}{3\|\xi\|}\sqrt{\frac{\rho}{\epsilon}}$\;
$\vect{x}_{t}\leftarrow\mathop{\arg\min}_{\zeta\in\left\{\vect{x}_t+\Delta_t,\vect{x}_t-\Delta_t\right\}}f(\zeta)$\;
$\vect{v}_t\leftarrow\vect{0}$\;
}
\Else{
$\vect{y}_{t}\leftarrow \vect{x}_{t}+(1-\theta)\vect{v}_{t}$, $\vect{x}_{t+1}\leftarrow \vect{y}_{t}-\eta' f(\vect{y}_{t})$, and $\vect{v}_{t+1}\leftarrow \vect{x}_{t+1}-\vect{x}_{t}$\;
\If{$f(\vect{x}_{t})\leq f(\vect{y}_{t})+\left \langle \nabla f(\vect{y}_{t}),\vect{x}_{t}-\vect{y}_{t} \right \rangle -\frac{\gamma}{2}\|\vect{x}_{t}-\vect{y}_{t}\|$}{
$(\vect{x}_{t+1},\vect{v}_{t+1})\leftarrow$Negative-Curvature-Exploitation$(\vect{x}_{t},\vect{v}_{t},s)$\label{lin:NCE}\;
}
}
}
\end{algorithm}

The following theorem provides the complexity of this algorithm:
\begin{theorem}\label{thm:QuantumSimulationAGD}
Suppose that the function $f$ is $\ell$-smooth and $\rho$-Hessian Lipschitz. We choose the parameters appearing in \algo{PAGDQS} as follows:
\begin{align}
    \delta_0 &:=\frac{2}{81(f(\vect{x}_0)-f^{*})}\sqrt{\frac{\epsilon^3}{\rho}} &\chi&:=1   & \eta  &:= \frac{1}{\ell}\\
    \mathscr{T}'   &:= \frac{8}{(\rho\epsilon)^{1/4}}\log\Big(\frac{\ell}{\delta_0\sqrt{\rho\epsilon}}(n+2\log(3/\delta_0))\Big)
    &\eta'       &:=\frac{1}{4\ell}
    & \mathscr{F}' &:= \frac{2}{81}\sqrt{\frac{\epsilon^3}{\rho}} \\
    r_{0}  &:= \frac{4C_r^3}{9\mathscr{T}'^4}\Big(\frac{\delta_0}{3}\cdot\frac{1}{n^{3/2}+2C_0n\ell(\log \mathscr{T}')^{\alpha}}\Big)^2     &\kappa & := \frac{\ell}{\sqrt{\rho\epsilon}}   & \theta &:=\frac{1}{4\sqrt{\kappa}}\\         \gamma &:=\frac{\theta^{2}}{\eta}     &s&:=\frac{\gamma}{4\rho}     &\mathscr{T}&:=\sqrt{\kappa}\cdot c_{A}
\end{align}
where $c_{A}$ is chosen large enough to satisfy the condition in \lem{AGD-large-gradient}, $C_0$ and $C_r$ are constants specified in \prop{QS-effectiveness}. Then, for any $\delta>0$, $\epsilon\leq\frac{\ell^{2}}{\rho}$, if we run \algo{PAGDQS} with choice of parameters specified above, then with probability at least $1-\delta$ one of the iterations $\vect{x}_{t}$ will be an $\epsilon$-approximate second-order stationary point, using the following number of queries to $U_{f}$ in \eqn{quantum-evaluation} and classical gradients:
\begin{align}
\tilde{O}\Big(\frac{(f(\vect{x}_{0})-f^{*})}{\epsilon^{1.75}}\cdot \log^{2}n\Big).
\end{align}
\end{theorem}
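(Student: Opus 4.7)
The plan is to follow the same template as \thm{PGD+QS-Complexity} but replace the GD descent argument with the PAGD Hamiltonian-descent analysis of~\cite{jin2018accelerated}. Each iteration of \algo{PAGDQS} falls into one of three mutually exclusive types: (i) \emph{small-gradient iterations} that invoke \algo{QuantumSimulation} and then perturb along $\Delta_t$ (with momentum reset); (ii) \emph{NCE-triggered iterations} where the quadratic upper bound fails and \lin{NCE} runs Negative-Curvature-Exploitation; and (iii) \emph{plain AGD iterations} with $\|\nabla f(\vect{x}_t)\|>\epsilon$ and sufficient local convexity. For each type I quantify the guaranteed decrease in the Hamiltonian $E_t := f(\vect{x}_t)+\tfrac{1}{2\eta'}\|\vect{v}_t\|^2$ (which is bounded below by $f^\ast$), then sum telescopically.

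First, set $\delta_0 := \tfrac{2}{81(f(\vect{x}_0)-f^\ast)}\sqrt{\epsilon^3/\rho}$ as in the theorem so that a union bound over the at most $(f(\vect{x}_0)-f^\ast)/\mathscr{F}'$ possible QuantumSimulation calls gives total failure probability $\le\delta$. By \prop{QS-effectiveness}, every such call either certifies $\lambda_{\min}(\nabla^2 f(\vect{x}_t))\ge-\sqrt{\rho\epsilon}$ (so $\vect{x}_t$ already satisfies \eqn{eps-approx-local-min} and we are done), or produces $\Delta_t$ with $f(\vect{x}_t+\Delta_t)-f(\vect{x}_t)\le-\mathscr{F}'$; the simultaneous reset $\vect{v}_t\leftarrow\vect{0}$ only helps, so the Hamiltonian $E_t$ also drops by at least $\mathscr{F}'$. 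Next, for NCE-triggered iterations, I would reuse verbatim the NCE analysis of~\cite{jin2018accelerated}, which shows a Hamiltonian drop of at least $\mathscr{F}$ (as nothing about NCE is quantum). Finally, for plain AGD iterations, I would invoke the Hamiltonian monotonicity of~\cite{jin2018accelerated}: in any consecutive block of $\mathscr{T}=\sqrt{\kappa}\,c_A$ plain AGD steps with persistently large gradient, \lem{AGD-large-gradient} guarantees $E_{t+\mathscr{T}}-E_t\le-\mathscr{F}$; otherwise, one of the three branches must have fired during that block.

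From these three per-type decreases, the total number of QS-trigger and NCE events is at most $(f(\vect{x}_0)-f^\ast)/\min(\mathscr{F},\mathscr{F}')$, and each is separated by at most $\mathscr{T}=O(\sqrt{\kappa})=O(\epsilon^{-1/4})$ plain AGD steps, so taking
\begin{equation*}
T \;=\; O\!\Big(\tfrac{(f(\vect{x}_0)-f^\ast)\sqrt{\kappa}}{\mathscr{F}}\Big) \;=\; \tilde O\!\Big(\tfrac{f(\vect{x}_0)-f^\ast}{\epsilon^{1.75}}\Big)
\end{equation*}
suffices to force at least one $\vect{x}_t$ to be an $\epsilon$-approximate SOSP with probability $\ge 1-\delta$ (otherwise the accumulated Hamiltonian decrease would exceed $f(\vect{x}_0)-f^\ast$, a contradiction). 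For the query complexity, split into classical and quantum parts. The classical gradient queries number $T_1=O(T)=\tilde O((f(\vect{x}_0)-f^\ast)/\epsilon^{1.75})$. The quantum part invokes \algo{QuantumSimulation} at most $(f(\vect{x}_0)-f^\ast)/\mathscr{F}'$ times, each costing $\tilde O(\mathscr{T}'\log n\,\log^2(\mathscr{T}'/\epsilon))$ queries by \lem{simulation}; using $\mathscr{T}'=\tilde O((\rho\epsilon)^{-1/4})$ and $\mathscr{F}'=\Theta(\sqrt{\epsilon^3/\rho})$, this contributes $T_2=\tilde O((f(\vect{x}_0)-f^\ast)\log^2 n/\epsilon^{1.75})$. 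The sum $T_1+T_2$ matches the stated bound.

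The main obstacle I expect is not in summing the three contributions but in verifying that the Hamiltonian bookkeeping of~\cite{jin2018accelerated} remains intact when their uniform-ball perturbation is swapped for our quantum perturbation. Specifically, one must check that (a) the momentum reset $\vect{v}_t\leftarrow\vect{0}$ after a QS trigger does not create a spurious Hamiltonian increase—this works because $E_t$ after reset equals $f(\vect{x}_t+\Delta_t)\le f(\vect{x}_t)-\mathscr{F}'\le E_t^{\text{before}}-\mathscr{F}'$—and (b) the PAGD large-gradient argument (\lem{AGD-large-gradient}) and the NCE descent lemma are ``perturbation-agnostic'' and can be cited as black boxes, since they involve only the AGD update, the NCE subroutine, and the $\ell$-smooth/$\rho$-Hessian-Lipschitz structure of $f$. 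Everything else—the per-iteration analysis of QS via \prop{QS-effectiveness}, the $\ell$-smoothness Taylor expansion, and the \lem{simulation} simulation cost—is available off-the-shelf from the earlier sections.
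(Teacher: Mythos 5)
Your proposal follows essentially the same route as the paper: bound the number of QuantumSimulation calls via the $\mathscr{F}'$-decrease per call, invoke the Hamiltonian bookkeeping of Jin et al.~(\lem{AGD-large-gradient}, \lem{Hamiltonian-decrease}) for the remaining accelerated-descent steps, derive a contradiction from a total Hamiltonian decrease exceeding $f(\vect{x}_0)-f^*$, and then split the query count into a $\tilde O(T)$ gradient part and a $\tilde O(T\cdot\mathscr{T}'\log n)$ simulation part.

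The one place where your bookkeeping departs from the paper's is the treatment of NCE-triggered iterations. You split the non-QS steps into ``NCE-triggered'' and ``plain AGD'' and assert that each NCE event drops the Hamiltonian by a fixed amount $\mathscr{F}$. The paper instead lumps all non-QS iterations together, appealing only to \lem{Hamiltonian-decrease} (which gives $E_{\tau+1}\le E_\tau$ and nothing quantitative for a single NCE step) and to \lem{AGD-large-gradient} (which gives a $\mathscr{E}$-decrease over a full block of $\mathscr{T}$ consecutive large-gradient steps). In the original Jin--Netrapalli--Jordan analysis, a single NCE call does not in general decrease the Hamiltonian by a constant-per-call amount; the quantitative decrease is only certified over $\mathscr{T}$-length windows. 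Your version would therefore either need to re-derive a per-NCE drop (which is more than the cited lemmas give you) or fall back to the paper's block accounting, which is what you implicitly do when you invoke \lem{AGD-large-gradient} on the blocks between special events. As written this is a minor imprecision rather than a fatal gap, since you could simply merge the NCE and plain-AGD cases as the paper does and rely on monotonicity plus the block decrease; the union bound, the role of $\delta_0$, the telescoping argument, and the $T_1,T_2$ complexity split are all correct and match the paper's proof.
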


\begin{proof}
We use $T$ to denote total number of iterations and specify our choice for $T$ as:
\begin{align}
T=3\max\left \{ \frac{(f(\vect{x_{0}})-f^{*})}{\mathscr{F'}},\frac{(f(\vect{x_{0}})-f^{*})\mathscr{T}}{\mathscr{E}} \right \},
\end{align}
where $\mathscr{E}=\sqrt{\frac{\epsilon^{3}}{\rho}}\cdot c_{A}^{-7}$, the same as our choice for $\mathscr{E}$ in \lem{AGD-large-gradient}. Similar to \prop{QS-effectiveness}, we set the radius $M$ of the simulation region to be $r_0/C_r$. We assume the contrary, i.e., the outputs of all of the iterations are not $\epsilon$-approximate second-order stationary points.

Similar to our analysis in the proof of \thm{PGD+QS-Complexity}, we first assume that for each $\vect{x}_t$ we apply QuantumSimulation (\algo{QuantumSimulation}),
we can successfully obtain an output $\xi$ with $\xi^{T}\mathcal{H}\xi/\|\xi\|^2\leq-\sqrt{\rho\epsilon}/3$, as long as $\lambda_{\min}(\mathcal{H}(\vect{x}_t))\leq-\sqrt{\rho\epsilon}$. The error probability of this assumption is provided later.

Under this assumption, \algo{QuantumSimulation} can be called for at most $\frac{81(f(\vect{x_{0}})-f^{*})}{2}\sqrt{\frac{\rho}{\epsilon^3}}\leq \frac{T}{3}$ times, for otherwise the function value decrease will be greater than $f(\vect{x_{0}})-f^{*}$, which is not possible. Then, the error probability that some calls to \algo{QuantumSimulation} fails to indicate a negative curvature is upper bounded by
\begin{align}
\frac{81(f(\vect{x_{0}})-f^{*})}{2}\sqrt{\frac{\rho}{\epsilon^3}}\cdot\delta_0=\delta.
\end{align}

Excluding those iterations that QuantumSimulation is applied, we still have at least $2T/3$ steps left, which are all accelerated gradient descent steps.

Since from $\epsilon\leq\ell^{2}/\rho$ we have $\mathscr{T}'\geq\mathscr{T}$, then we can found at least $\frac{T}{3\mathscr{T}}$ disjoint time periods, each of time interval $\mathscr{T}$. From \lem{AGD-large-gradient}, during these time intervals the Hamiltonian will decrease in total at least:
\begin{align}
\frac{T}{3\mathscr{T}}\times\mathscr{E}=f(\vect{x_{0}})-f^{*},
\end{align}
which is impossible due to \lem{Hamiltonian-decrease}, the Hamiltonian decreases monotonically for every step where quantum simulation is not called, and the overall decrease cannot be greater than $f(\vect{x_{0}})-f^{*}$.

Note that the iteration numbers $T$ satisfies:
\begin{align}
T=3\max\left \{ \frac{(f(\vect{x_{0}})-f^{*})}{\mathscr{F'}},\frac{(f(\vect{x_{0}})-f^{*})\mathscr{T}}{\mathscr{E}} \right \}=\tilde{O}\Big(\frac{(f(\vect{x}_{0})-f^{*})}{\epsilon^{1.75}}\cdot \log n\Big).
\end{align}
As for the number of queries, it can be viewed as the sum of two parts, the number of queries needed for accelerated gradient descent, denoted by $T_{1}$, and the number of queries needed for quantum simulation, denoted by $T_{2}$. Then with probability at least $1-\delta$,
\begin{align}
T_{1}=T=\tilde{O}\Big(\frac{(f(\vect{x}_{0})-f^{*})}{\epsilon^{1.75}}\cdot\log n\Big).
\end{align}
For $T_{2}$, with probability at least $1-\delta$ quantum simulation is called for at most $\frac{(f(\vect{x_{0}})-f^{*})}{\mathscr{F'}}$ times, and by \lem{simulation} it takes $\tilde{O}\big(\mathscr{T}'\log n \log^{2}(\mathscr{T}'^{2}/\epsilon)\big)$ queries to carry out each simulation. Therefore,
\begin{align}
T_{2}=\frac{3(f(\vect{x_{0}})-f^{*})}{\mathscr{F}'}\cdot\tilde{O}\big(\mathscr{T}'\log n \log^{2}(\mathscr{T}'^{2}/\epsilon)\big)=\tilde{O}\Big(\frac{(f(\vect{x}_{0})-f^{*})}{\epsilon^{1.75}}\cdot\log^2 n\Big).
\end{align}
As a result, the total query complexity $T_{1}+T_{2}$ is
\begin{align}
\tilde{O}\Big(\frac{(f(\vect{x}_{0})-f^{*})}{\epsilon^{1.75}}\cdot\log^{2}n\Big).
\end{align}
\end{proof}

\begin{remark}
Although the theorem above only guarantees that one of the iterations is an $\epsilon$-approximate second-order stationary point, it can be easily accessed by adding a proper termination condition: once the quantum simulation is called, we keep track of the point $\tilde{\vect{x}}$ prior to quantum simulation, and compare the function value at $\tilde{\vect{x}}$ with that of $\vect{x}_t$ after the perturbation. If the function value decreases by at least $\mathscr{F}'$, then the algorithm has made progress, otherwise with probability at least $1-\delta$, $\tilde{\vect{x}}$ is an $\epsilon$-approximate second-order stationary point. Doing so will add an extra register for saving the point but does not increase the asymptotic complexity.
\end{remark}


\section{Gradient Descent by the Quantum Evaluation Oracle}\label{sec:gradient-Jordan}
Another important contribution of this paper is to show how to replace the classical gradient queries by quantum evaluation queries. This is shown in the case of convex optimization~\cite{chakrabarti2020optimization,vanApeldoorn2020optimization}, and we generalize the same result to nonconvex optimization.

The idea was initiated by~\cite{jordan2005fast}. Classically, with only an evaluation oracle, the best way to construct a gradient oracle is probably to walk along each direction a little bit and compute the finite difference in each coordinate. Quantumly, a clever approach is to take the uniform superposition on a mesh around the point, query the quantum evaluation oracle (in superposition) in phase,\footnote{This can be achieved by a standard technique called phase kickback. See more details at~\cite{gilyen2019optimizing} and \cite{chakrabarti2020optimization}.} and apply the quantum Fourier transform (QFT). Due to Taylor expansion,
\begin{align}
\sum_{\vect{x}}e^{if(\vect{x})}\vect{x}\approx\sum_{\vect{x}}\bigotimes_{k=1}^{n}e^{i\frac{\partial f}{\partial x_{k}}\vect{x}_{k}}\vect{x}_{k},
\end{align}
the QFT can recover all the partial derivatives simultaneously. In this paper, we refer to Lemma 2.2 of~\cite{chakrabarti2020optimization} for a precise version of Jordan's algorithm:
\begin{restatable}{lemma}{JordanGrad}\label{lem:quantum-grad}
Let $f\colon \R^n \to \R$ be an $\ell$-smooth function specified by the evaluation oracle in \eqn{quantum-evaluation} with accuracy $\delta_{q}$, i.e., it returns a value $\tilde{f}(x)$ such that $|\tilde{f}(x)-f(x)|\leq \delta_{q}$. For any $\x\in\R^{n}$, there is a quantum algorithm that uses one query to \eqn{quantum-evaluation} and returns a vector $\tnb f(\x)$ s.t.
\begin{align}\label{eqn:quantum-grad-1}
\P\left[\|\tnb f(\x)-\nabla f(\x)\|_{2}>400\omega n \sqrt{\delta_{q} \ell}\right]<\min\Big\{\frac{n}{\omega-1},1\Big\},\qquad\forall \omega>1.
\end{align}
\end{restatable}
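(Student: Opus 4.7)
The plan is to invoke Jordan's quantum gradient algorithm \cite{jordan2005fast} in the refined form of \cite{chakrabarti2020optimization}. First I would prepare, using Hadamard-like gates, a uniform superposition $\frac{1}{\sqrt{N^n}}\sum_{\vect{y}\in G}\ket{\vect{y}}$ over a hypercubic mesh $G$ of $N^n$ points centered at the origin with spacing $h$. A single call to $U_f$ at $\x+\vect{y}$, combined with phase kickback onto an ancilla register prepared in a QFT basis state, produces $\frac{1}{\sqrt{N^n}}\sum_{\vect{y}\in G} e^{2\pi i \tilde f(\x+\vect{y})M}\ket{\vect{y}}$, where $M$ is a scaling factor chosen so that the relevant phases fit inside $[0,1)$.

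Next I would use $\ell$-smoothness to Taylor expand $\tilde f(\x+\vect{y}) = f(\x) + \langle \nabla f(\x),\vect{y}\rangle + R(\vect{y})$ with $|R(\vect{y})|\leq \tfrac{1}{2}\ell\|\vect{y}\|^2 + \delta_q$. The constant $f(\x)$ only contributes a global phase and drops out. Were the remainder $R$ absent, the phase would factorize across the $n$ coordinate registers as $\bigotimes_{k=1}^n e^{2\pi i \partial_k f(\x) M y_k}$, and an inverse QFT on each register would return a sample concentrated around the integer nearest to $Mh\cdot\partial_k f(\x)$, from which $\partial_k f(\x)$ is recovered by rescaling.

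The key quantitative step is to choose $N$, $h$, $M$ so as to jointly control three error sources: the intrinsic QFT discretization error of order $1/(Mh)$, the Taylor remainder $\ell h^2 M$ treated as a phase perturbation, and the oracle-noise phase $\delta_q M$. Propagating these perturbations through the standard sampling bound for phase estimation yields a per-coordinate error of order $\omega\sqrt{\delta_q \ell}$ with failure probability at most $1/(\omega-1)$; the parameter $\omega$ widens the confidence interval associated with the QFT readout so that the sample lies within $\omega$ bins of the true value. A union bound over the $n$ coordinates then gives a total $\ell_2$ error of order $\omega n \sqrt{\delta_q \ell}$ with failure probability $\min\{n/(\omega-1),1\}$, matching the lemma up to the absolute constant $400$.

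The main obstacle, relative to Jordan's original noiseless analysis, is propagating the evaluation-oracle error $\delta_q$ cleanly through the phase-estimation argument; it is exactly in optimizing the trade-off between $h$ (which governs the Taylor remainder) and the phase resolution needed to beat the perturbations from $\delta_q$ that the $\sqrt{\delta_q\ell}$ scaling (rather than a linear $\delta_q$) emerges. Because the lemma is cited essentially verbatim from \cite{chakrabarti2020optimization}, my plan is to appeal to their analysis rather than reconstruct the precise numerical constant.
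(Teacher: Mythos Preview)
Your proposal is correct and matches the paper's treatment: the paper does not prove this lemma at all but simply imports it as Lemma~2.2 of~\cite{chakrabarti2020optimization}, exactly as you conclude in your final paragraph. The sketch you give of Jordan's algorithm (uniform superposition, phase kickback, inverse QFT, balancing the Taylor remainder against the oracle noise to produce the $\sqrt{\delta_q\ell}$ scaling) is accurate background but goes beyond what the paper itself supplies.
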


The main technical contribution of this section is to replace the gradient descent steps in \sec{QSimulation} by \lem{quantum-grad}. We give error bounds of gradient computation steps in \sec{error-gradient}, and give the proof details of escaping from saddle points in \sec{GD-gradient}.

\subsection{Error Bounds of Gradient Computation Steps}\label{sec:error-gradient}
We first give the following bound on gradient descent using \lem{quantum-grad}:
\begin{lemma}\label{lem:descent-lemma-19}
Let $f\colon \R^n \to \R$ be an $\ell$-smooth, $\rho$-Hessian Lipschitz function, and let $\eta\leq 1/\ell$. Then the gradient outputted by \lem{quantum-grad} satisfies that for any fixed constant $c$, with probability at least $1-\frac{n}{\frac{1}{A_{q}}\sqrt{\frac{2c}{\eta}}-1}$, any specific step of the gradient descent sequence $\{\vect{x}_{t}: \vect{x}_{t+1}\leftarrow\vect{x}_{t}-\eta\tnb \vect{x}_{t}\}$ satisfies:
\begin{align}
f(\vect{x}_{t+1})-f(\vect{x}_{t})\leq -\eta \|\nabla f(\vect{x}_{t})\|^{2}/2+c,
\end{align}
where $A_{q}=400 n\sqrt{\delta_{q} \ell}$ in the formula stands for a constant of the accuracy of the quantum algorithm.
\end{lemma}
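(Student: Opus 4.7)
The plan is to start from the descent lemma for $\ell$-smooth functions and then absorb the error introduced by the approximate gradient $\tnb f(\x_t)$ returned by Jordan's algorithm (\lem{quantum-grad}). Applying $\ell$-smoothness at $\vect{x}_{t+1} = \vect{x}_t - \eta \tnb f(\vect{x}_t)$ yields
\begin{align*}
f(\vect{x}_{t+1}) - f(\vect{x}_t) \le -\eta \langle \nabla f(\vect{x}_t), \tnb f(\vect{x}_t)\rangle + \tfrac{\ell \eta^2}{2}\|\tnb f(\vect{x}_t)\|^2.
\end{align*}
Writing $\vect{e}_t := \tnb f(\vect{x}_t) - \nabla f(\vect{x}_t)$ and expanding both inner products in terms of $\nabla f(\vect{x}_t)$ and $\vect{e}_t$, the cross terms collect to a coefficient $-\eta(1-\ell\eta)\langle \nabla f(\vect{x}_t),\vect{e}_t\rangle$, which since $\ell\eta \le 1$ is nonnegative after Cauchy--Schwarz. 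Bounding that term via $|\langle \nabla f,\vect{e}_t\rangle|\le \tfrac{1}{2}(\|\nabla f\|^2+\|\vect{e}_t\|^2)$ and simplifying (the $\ell\eta$ terms cancel exactly) will give the clean noisy descent estimate
\begin{align*}
f(\vect{x}_{t+1}) - f(\vect{x}_t) \le -\tfrac{\eta}{2}\|\nabla f(\vect{x}_t)\|^2 + \tfrac{\eta}{2}\|\vect{e}_t\|^2.
\end{align*}

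Given this deterministic inequality, it remains to ensure $\tfrac{\eta}{2}\|\vect{e}_t\|^2 \le c$, i.e.\ $\|\vect{e}_t\| \le \sqrt{2c/\eta}$, with the claimed probability. I would invoke \lem{quantum-grad}, which says that for any $\omega>1$ the output of Jordan's algorithm satisfies $\|\vect{e}_t\| \le 400\omega n\sqrt{\delta_q \ell} = A_q \omega$ with probability at least $1 - n/(\omega-1)$. Choosing $\omega := \frac{1}{A_q}\sqrt{2c/\eta}$ makes the error threshold exactly $\sqrt{2c/\eta}$, and the corresponding failure probability is $n/\bigl(\tfrac{1}{A_q}\sqrt{2c/\eta}-1\bigr)$, matching the statement.

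There is no real obstacle here beyond the algebraic bookkeeping in the first step: the non-trivial point is verifying that, after plugging in $\tnb f = \nabla f + \vect{e}_t$, the coefficient of the cross-term $\langle \nabla f,\vect{e}_t\rangle$ has the right sign and magnitude so that applying $|\langle\nabla f,\vect{e}_t\rangle|\le \tfrac{1}{2}(\|\nabla f\|^2+\|\vect{e}_t\|^2)$ (rather than a weighted Young-type bound) recovers the clean coefficient $-\eta/2$ in front of $\|\nabla f(\vect{x}_t)\|^2$. Once this cancellation is pinned down, the probabilistic step is a direct application of \lem{quantum-grad} with the optimal choice of $\omega$ described above, and the union/complement of those two parts delivers the lemma.
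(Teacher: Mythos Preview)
Your proposal is correct and follows essentially the same approach as the paper: apply the $\ell$-smoothness descent inequality, decompose $\tnb f = \nabla f + \vect{e}_t$, reduce to $f(\vect{x}_{t+1})-f(\vect{x}_t)\le -\tfrac{\eta}{2}\|\nabla f(\vect{x}_t)\|^2 + \tfrac{\eta}{2}\|\vect{e}_t\|^2$, and then invoke \lem{quantum-grad} with $\omega = \tfrac{1}{A_q}\sqrt{2c/\eta}$. The only cosmetic difference is that the paper bounds $\tfrac{\ell\eta^2}{2}\le \tfrac{\eta}{2}$ \emph{before} expanding, so the cross terms $-\eta\langle\nabla f,\vect{e}_t\rangle + \eta\langle\nabla f,\vect{e}_t\rangle$ cancel identically and no Young-type inequality is needed.
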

Ideally speaking, $A_{q}$ can be arbitrarily small given a quantum computer that is accurate enough using more qubits for the precision $\delta_{q}$.
\begin{proof}
Considering our condition of $f$ being $\ell$-smooth, we have
\begin{align}
f(\vect{x}_{t+1})\leq f(\vect{x}_{t})+\nabla f(\vect{x}_{t})\cdot(\vect{x}_{t+1}-\vect{x}_{t})+ \frac{\ell}{2}\|\vect{x}_{t+1}-\vect{x}_{t}\|^{2}.
\end{align}
we use $\vect{g}(\vect{x})$ to denote the outcome of the quantum algorithm. Then by the definition of gradient descent, $\vect{x}_{t+1}-\vect{x}_{t}=\eta \vect{g}(\vect{x}_{t})$. Let $\delta[\vect{g}(\vect{x})]:=\vect{g}(\vect{x})-\nabla f(\vect{x})$. Then we have
\begin{align}
f(\vect{x}_{t+1}) &\leq f(\vect{x}_{t})+\nabla f(\vect{x}_{t})\cdot(\vect{x}_{t+1}-\vect{x}_{t})+ \frac{\ell}{2}\|\vect{x}_{t+1}-\vect{x}_{t}\|^{2} \\
&\leq f(\vect{x}_{t})-\eta \nabla f(\vect{x}_{t})\cdot(\nabla f(\vect{x}_{t})+\delta[\vect{g}(\vect{x}_{t})])+ \frac{\eta}{2}\|\nabla f(\vect{x}_{t})+\delta[\vect{g}(\vect{x}_{t})]\|^{2} \\
&=f(\vect{x}_{t})-\frac{\eta}{2} \|\nabla f(\vect{x}_{t})\|^{2}+\frac{\eta}{2}\|\delta[\vect{g}(\vect{x}_{t})]\|^{2}.
\end{align}
By \lem{quantum-grad}, for a fixed constant $c$, the value of $\frac{\eta}{2}\|\delta[\vect{g}(\vect{x}_{t})]\|^{2}$ is smaller than $c$ with probability at least $1-\frac{n}{\frac{1}{A_{q}}\sqrt{\frac{2c}{\eta}}-1}$, completing the proof.
\end{proof}

Now, we replace all the gradient queries in \algo{PGD+QS} by quantum evaluation queries, which results in \algo{PGD-Jordan}. We aim to show that if it starts at $\vect{x}_{0}$ and the value of the objective function does not decrease too much over iterations, then its whole iteration sequence $\left \{ \vect{x}_{\tau} \right \}_{\tau=0}^{t}$ will be located in a small neighborhood of $\vect{x}_{0}$. Intuitively, this is a robust version of the ``improve or localize'' phenomenon presented in~\cite{jin2019stochastic}.

 \begin{algorithm}[htbp]
\caption{Perturbed Gradient Descent with Quantum Simulation and Gradient Computation.}
\label{algo:PGD-Jordan}
\For{$t=0,1,\ldots,T$}{
Apply \lem{quantum-grad} to compute an estimate $\tnb f(\x)$ of $\nabla f(\x)$\;
\If{$\|\tnb f(\vect{x}_{t})\|\leq \epsilon$}{
$\xi\sim$QuantumSimulation($\vect{x}_{t},r_{0},\mathscr{T}'$)\;
$\Delta_t\leftarrow\frac{2\xi}{3\|\xi\|}\sqrt{\frac{\rho}{\epsilon}}$\;
$\vect{x}_{t}\leftarrow\mathop{\arg\min}_{\zeta\in\left\{\vect{x}_t+\Delta_t,\vect{x}_t-\Delta_t\right\}}f(\zeta)$\;
}
$\vect{x}_{t+1}\leftarrow \vect{x}_{t}-\eta\tnb f(\vect{x}_{t})$\;
}
\end{algorithm}

\begin{lemma}\label{lem:improve-or-localize-21}
Under the setting of \lem{descent-lemma-19}, for arbitrary $t>\tau>0$ and arbitrary constant $c$, with probability at least $1-\frac{nt}{\frac{1}{A_{q}}\sqrt{\frac{2c}{\eta}}-1}$ we have
\begin{align}
\|\vect{x}_{\tau}-\vect{x}_{0}\|\leq 2\sqrt{\eta t|f(\vect{x}_{0})-f(\vect{x}_{t})|}+2\eta t\sqrt{c},
\end{align}
if quantum simulation is not called during $[0,t]$.
\end{lemma}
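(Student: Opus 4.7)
The plan is to reduce the claim to a control on $\sum_{s}\|\tnb f(\vect{x}_s)\|^2$ via the standard telescoping and Cauchy--Schwarz trick, and then invoke \lem{descent-lemma-19} together with a union bound on the Jordan gradient error along the trajectory.

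First I would expand the displacement as a telescoping sum: $\vect{x}_\tau - \vect{x}_0 = -\eta\sum_{s=0}^{\tau-1}\tnb f(\vect{x}_s)$, so that by the triangle inequality and Cauchy--Schwarz,
\begin{equation*}
\|\vect{x}_\tau - \vect{x}_0\| \leq \eta\sum_{s=0}^{\tau-1}\|\tnb f(\vect{x}_s)\| \leq \eta\sqrt{\tau\sum_{s=0}^{\tau-1}\|\tnb f(\vect{x}_s)\|^2}\leq \eta\sqrt{t\sum_{s=0}^{t-1}\|\tnb f(\vect{x}_s)\|^2}.
\end{equation*}
Next I would pass from $\tnb f$ to the true gradient via $\|\tnb f(\vect{x}_s)\|^2 \leq 2\|\nabla f(\vect{x}_s)\|^2 + 2\|\delta[\vect{g}(\vect{x}_s)]\|^2$, where $\delta[\vect{g}(\vect{x}_s)]:= \tnb f(\vect{x}_s)-\nabla f(\vect{x}_s)$.

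To bound $\sum_s\|\nabla f(\vect{x}_s)\|^2$ I would re-use the single-step inequality that was derived in the proof of \lem{descent-lemma-19}, namely
\begin{equation*}
f(\vect{x}_{s+1})\leq f(\vect{x}_s)-\tfrac{\eta}{2}\|\nabla f(\vect{x}_s)\|^2+\tfrac{\eta}{2}\|\delta[\vect{g}(\vect{x}_s)]\|^2,
\end{equation*}
which holds deterministically (no failure probability, because it is just an identity plus $\ell$-smoothness, not the concentration statement). Telescoping this gives $\sum_{s=0}^{t-1}\|\nabla f(\vect{x}_s)\|^2 \leq \tfrac{2}{\eta}(f(\vect{x}_0)-f(\vect{x}_t)) + \sum_{s=0}^{t-1}\|\delta[\vect{g}(\vect{x}_s)]\|^2$.

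To control the noise terms simultaneously I would apply \lem{quantum-grad} with $\omega = \tfrac{1}{A_q}\sqrt{2c/\eta}$ at each of the $t$ steps: for this choice $400\omega n\sqrt{\delta_q\ell}=\sqrt{2c/\eta}$, so $\tfrac{\eta}{2}\|\delta[\vect{g}(\vect{x}_s)]\|^2\leq c$ fails with probability at most $n/(\omega-1)$ per step. A union bound then gives the advertised failure probability $\tfrac{nt}{\tfrac{1}{A_q}\sqrt{2c/\eta}-1}$, and on the good event $\|\delta[\vect{g}(\vect{x}_s)]\|^2 \leq 2c/\eta$ for every $s<t$. Plugging these two bounds into the earlier Cauchy--Schwarz estimate gives
\begin{equation*}
\|\vect{x}_\tau-\vect{x}_0\|\leq \sqrt{\eta t\bigl(4|f(\vect{x}_0)-f(\vect{x}_t)|+O(tc)\bigr)},
\end{equation*}
and the elementary inequality $\sqrt{a+b}\leq \sqrt{a}+\sqrt{b}$ separates the two contributions into the claimed form $2\sqrt{\eta t|f(\vect{x}_0)-f(\vect{x}_t)|}+2\eta t\sqrt{c}$ (up to absorbing the numerical constants into the second term).

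The only nontrivial point is bookkeeping: the probability budget must be paid all at once by a union bound over the $t$ iterations, and one must verify that the constant in front of $\eta t\sqrt{c}$ comes out correctly after the $\sqrt{a+b}\leq\sqrt{a}+\sqrt{b}$ split. Everything else is telescoping and Cauchy--Schwarz. Importantly, the assumption that quantum simulation is not called during $[0,t]$ is what lets me write the whole trajectory as pure noisy gradient descent with no perturbation jumps, so the telescoping of the descent inequality is valid over the entire interval.
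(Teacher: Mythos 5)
Your approach is essentially identical to the paper's: write $\vect{x}_\tau-\vect{x}_0$ as a telescoping sum, bound it with Cauchy--Schwarz, split $\|\tnb f\|^2$ into signal and noise via $(a+b)^2\le 2a^2+2b^2$, telescope the one-step descent inequality from \lem{descent-lemma-19} to control $\sum_s\|\nabla f(\vect{x}_s)\|^2$, and union-bound the noise events over $t$ steps. Your tracking of $\omega$ is actually more careful than the paper's prose: you correctly observe that $\omega=\tfrac{1}{A_q}\sqrt{2c/\eta}$ yields the advertised failure probability $\tfrac{nt}{\tfrac{1}{A_q}\sqrt{2c/\eta}-1}$, and that on the complementary good event the guarantee is $\|\delta[\vect{g}(\vect{x}_s)]\|^2\le 2c/\eta$ for all $s<t$.

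However, the final constant does not come out "up to absorbing numerical constants" as you claim, and this is worth flagging. Carrying $\|\delta\|^2\le 2c/\eta$ through the Cauchy--Schwarz step gives $\|\vect{x}_\tau-\vect{x}_0\|^2\le 4\eta t|f(\vect{x}_0)-f(\vect{x}_t)|+8\eta t^2 c$, and $\sqrt{a+b}\le\sqrt{a}+\sqrt{b}$ then yields a second term of size $2\sqrt{2\eta}\,t\sqrt{c}$, not $2\eta t\sqrt{c}$. Since $\eta=1/\ell$, the ratio $\sqrt{\eta}/\eta=\sqrt{\ell}$ is parameter-dependent, not a universal constant, so the mismatch cannot simply be absorbed. (For what it is worth, the paper's own proof displays the converse inconsistency: it conditions on $\|\delta\|^2\le c$, which does produce the $2\eta t\sqrt{c}$ term, but the cited failure probability $\tfrac{nt}{\tfrac{1}{A_q}\sqrt{2c/\eta}-1}$ corresponds to the weaker guarantee $\|\delta\|^2\le 2c/\eta$; the probability stated there matches the event $\tfrac{\eta}{2}\|\delta\|^2\le c$, not $\|\delta\|^2\le c$.) To close the gap in your write-up, either set $\omega=\sqrt{c}/A_q$ to get $\|\delta\|^2\le c$ and accept the correspondingly larger failure probability $\tfrac{nt}{\sqrt{c}/A_q-1}$, or keep the stated probability and accept the weaker displacement bound $2\sqrt{\eta t|f(\vect{x}_0)-f(\vect{x}_t)|}+2\sqrt{2\eta}\,t\sqrt{c}$; the two choices are not simultaneously compatible with the lemma as stated.
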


\begin{proof}
Observe that
\begin{align}
\|\vect{x}_{\tau}-\vect{x}_{0}\|\leq \sum_{\tau=1}^{t} \|\vect{x}_{\tau}-\vect{x}_{\tau-1}\|.
\end{align}
Using the Cauchy-Schwartz inequality, the formula above can be converted to:
\begin{align}
\|\vect{x}_{\tau}-\vect{x}_{0}\|\leq \sum_{\tau=1}^{t}\|\vect{x}_{\tau}-\vect{x}_{\tau-1}\|\leq \Big[t\sum_{\tau=1}^{t}\|\vect{x}_{\tau}-\vect{x}_{\tau-1}\|^{2}\Big]^{\frac{1}{2}},
\end{align}
in which
\begin{align}
\vect{x}_{\tau}-\vect{x}_{\tau-1}=\eta \vect{g}(\vect{x}_{\tau-1})=\eta\nabla f(\vect{x}_{\tau-1})+\eta \delta[\vect{g}(\vect{x}_{\tau-1})],
\end{align}
which results in
\begin{align}
\|\vect{x}_{\tau}-\vect{x}_{\tau-1}\|^{2}
&\leq \eta^{2}\|\nabla f(\vect{x}_{\tau-1})\|^{2}+2\eta^{2}\nabla f(\vect{x}_{\tau-1})\cdot \delta[\vect{g}(\vect{x}_{\tau-1})]+\eta^{2}\|\delta[\vect{g}(\vect{x}_{\tau-1})]\|^{2}
\\
&\leq 2\eta^{2}\|\nabla f(\vect{x}_{\tau-1})\|^{2}+2\eta^{2}\|\delta[\vect{g}(\vect{x}_{\tau-1})]\|^{2}.
\end{align}
Go back to the first inequality,
\begin{align}
\|\vect{x}_{\tau}-\vect{x}_{0}\|\leq \Big[t\sum_{\tau=1}^{t}\|\vect{x}_{\tau}-\vect{x}_{\tau-1}\|^{2}\Big]^{\frac{1}{2}}\leq\Big[2\eta^{2} t\sum_{\tau=1}^{t}(\|\nabla f(\vect{x}_{\tau-1})\|^{2}+\|\delta[\vect{g}(\vect{x}_{\tau-1})]\|^{2})\Big]^{\frac{1}{2}}.
\end{align}
Suppose during each step from $1$ to $t$, the value of $\|\delta[\vect{g}(\vect{x}_{\tau-1})]\|^{2}$ is smaller than the fixed constant $c$. From \lem{quantum-grad}, this condition can be satisfied with probability at least $1-\frac{nt}{\frac{1}{A_{q}}\sqrt{\frac{2c}{\eta}}-1}$. Then,
\begin{align}
\|\vect{x}_{\tau}-\vect{x}_{0}\|
&\leq\Big[2\eta^{2} t\sum_{\tau=1}^{t}\Big(\|\nabla f(\vect{x}_{\tau-1})\|^{2}+\|\delta[\vect{g}(\vect{x}_{\tau-1})]\|^{2}\Big)\Big]^{\frac{1}{2}} \\
&\leq\Big[2\eta^{2} t\Big(\frac{2f(\vect{x}_{0})-2f(\vect{x}_{t})}{\eta}+2t\|\delta[\vect{g}(\vect{x}_{\tau-1})]\|^{2}\Big)\Big]^{\frac{1}{2}} \\
&\leq[4\eta t(f(\vect{x}_{0})-f(\vect{x}_{t})+\eta tc)]^{\frac{1}{2}} \\
&\leq2\sqrt{\eta t|f(\vect{x}_{0})-f(\vect{x}_{t})|}+2\eta t\sqrt{c}.
\end{align}
\end{proof}

\subsection{Escaping from Saddle Points with Quantum Simulation and Gradient Computation}\label{sec:GD-gradient}
In this subsection, we prove the result below for escaping from saddle points with both quantum simulation and gradient computation. Compared to \thm{PGD+QS-Complexity}, it reduces classical gradient queries to the same number of quantum evaluation queries.

\begin{theorem}\label{thm:ESCGDJordan}
Let $f\colon \R^n \to \R$ be an $\ell$-smooth, $\rho$-Hessian Lipschitz function. Suppose that we have the quantum evaluation oracle $U_{f}$ in \eqn{quantum-evaluation} with accuracy $\delta_{q}\leq O\Big(\frac{\delta^2\epsilon^2}{\ell n^4}\Big)$. Then \algo{PGD-Jordan} finds an $\epsilon$-approximate local minimum satisfying \eqn{eps-approx-local-min}, using
\begin{align*}
\tilde{O}\Big(\frac{(f(\vect{x}_{0})-f^{*})}{\epsilon^{2}}\cdot\log^{2}n\Big)
\end{align*}
queries to $U_{f}$ with probability at least $1-\delta$, under the following parameter choices:
\begin{align}
     \mathscr{T}'   &:= \frac{8}{(\rho\epsilon)^{1/4}}\log\Big(\frac{\ell}{\delta_0\sqrt{\rho\epsilon}}(n+2\log(3/\delta_0))\Big)     & \mathscr{F}' &:= \frac{2}{81}\sqrt{\frac{\epsilon^3}{\rho}}\\
      r_0  &:= \frac{4C_r^3}{9\mathscr{T}'^4}\Big(\frac{\delta_0}{3}\cdot\frac{1}{n^{3/2}+2C_0n\ell(\log \mathscr{T}')^{\alpha}}\Big)^2
      & \eta  &:= \frac{1}{\ell}
\end{align}
where $C_0$ and $C_r$ are constants specified in \prop{QS-effectiveness}, $\x_{0}$ is the start point, and $f^{*}$ is the global minimum of $f$.
\end{theorem}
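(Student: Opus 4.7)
The plan is to emulate the proof of \thm{PGD+QS-Complexity}, replacing each classical gradient query by a single call to Jordan's algorithm (\lem{quantum-grad}); the only substantive new ingredient is the noise analysis, handled by the robust descent bound \lem{descent-lemma-19}. I would set $T$ and $\delta_0$ exactly as in the proof of \thm{PGD+QS-Complexity} (with $\delta_0 = \Theta(\delta/T)$), and choose the parameter $\omega$ in \lem{quantum-grad} so that each Jordan call returns an estimate satisfying $\|\tnb f(\vect{x}_t)-\nabla f(\vect{x}_t)\|\le\epsilon/4$ with failure probability at most $\delta/(2T)$. The hypothesis $\delta_q\le O(\delta^2\epsilon^2/(\ell n^4))$ (with $\log$ factors absorbed into $\tilde O$) is precisely what is needed to make this choice compatible with $\omega$ polynomial in $n/\delta$.

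A union bound over the $T$ gradient estimations yields, with probability at least $1-\delta/2$, the uniform bound $\|\tnb f(\vect{x}_t)-\nabla f(\vect{x}_t)\|\le\epsilon/4$ for every iteration. Conditioned on this event I would partition the iterates as follows: (a) \emph{large-gradient} steps with $\|\tnb f(\vect{x}_t)\|>\epsilon$, so that $\|\nabla f(\vect{x}_t)\|\ge 3\epsilon/4$ and, by \lem{descent-lemma-19} with $c = \Theta(\eta\epsilon^2)$, the function drops by $\Omega(\eta\epsilon^2)$; (b) \emph{escape} steps with $\|\tnb f(\vect{x}_t)\|\le\epsilon$ and $\lambda_{\min}(\nabla^2 f(\vect{x}_t))\le-\sqrt{\rho\epsilon}$, for which $\|\nabla f(\vect{x}_t)\|\le 5\epsilon/4$ places us well within the hypothesis of \prop{QS-effectiveness}, so that $f$ decreases by $\mathscr{F}'$ with probability $\ge 1-\delta_0$; (c) \emph{stationary} steps where neither trigger fires, satisfying $\|\nabla f(\vect{x}_t)\|\le 5\epsilon/4$ and $\lambda_{\min}(\nabla^2 f(\vect{x}_t))> -\sqrt{\rho\epsilon}$, which is an approximate local minimum in the sense of \eqn{eps-approx-local-min} up to a harmless constant rescaling of $\epsilon$.

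The counting argument from \thm{PGD+QS-Complexity} then shows that categories (a) and (b) together contain at most $T/2$ iterations, so at least $T/2$ iterates lie in category (c) and any of them may be returned. A second union bound over the at most $O((f(\vect{x}_0)-f^*)/\mathscr{F}')$ invocations of \algo{QuantumSimulation} contributes at most $\delta/2$ to the failure probability, for total success probability at least $1-\delta$. The query count mirrors that of \thm{PGD+QS-Complexity}: $T$ gradient estimations contribute $\tilde O((f(\vect{x}_0)-f^*)\log n/\epsilon^2)$ quantum evaluation queries, and the quantum simulations contribute $\tilde O((f(\vect{x}_0)-f^*)\log^2 n/\epsilon^{1.75})$ queries by \lem{simulation}; both are dominated by $\tilde O((f(\vect{x}_0)-f^*)\log^2 n/\epsilon^{2})$ in the $\tilde O$ sense.

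The main obstacle is the joint calibration of $\delta_q$ and $\omega$: the per-call error of Jordan's algorithm must be at most $\epsilon/4$ \emph{while} its per-call failure probability is $\delta/(2T)$, and $T$ itself grows like $\log n/\epsilon^2$. This forces $\delta_q$ to scale like $\delta^2\epsilon^2/(\ell n^4)$ after absorbing polylogarithmic factors of $n$, $1/\epsilon$, $1/\delta$ into $\tilde O$, matching the hypothesis of the theorem. One also needs to verify that \prop{QS-effectiveness}, stated for genuine approximate saddles with $\|\nabla f\|\le\epsilon$, still applies when the trigger uses the noisy gradient; this amounts to observing that its proof only uses the smallness of $\|\nabla f\|$ through the quadratic approximation near $\vect{\tilde x}$, and the additional $\epsilon/4$ slack absorbs into the constants. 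Once this calibration is fixed, the rest of the argument reduces mechanically to the PGD+QS analysis because \lem{descent-lemma-19} reproduces the classical per-step descent guarantee up to an additive noise term that we have made small enough to discard.
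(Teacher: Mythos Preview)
Your proposal is correct and takes essentially the same approach as the paper: fix $\delta_0$ so the quantum-simulation failures sum to $\delta/2$, use \lem{descent-lemma-19} to handle the noisy large-gradient steps, bound the number of iterations in each category by the total available decrease $f(\vect{x}_0)-f^*$, and add the query costs from Jordan's algorithm and from \lem{simulation}. If anything, your treatment is slightly more careful than the paper's own proof: you make the union bound over the $T$ gradient estimations explicit (securing a uniform $\epsilon/4$ error up front), and you correctly flag that \prop{QS-effectiveness} is being invoked at a point where only $\|\nabla f\|\le 5\epsilon/4$ is guaranteed---both issues the paper handles more informally.
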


Note that \thm{ESCGDJordan} essentially shows that the perturbed gradient descent method still converges with the same asymptotic bound if there is a small error in gradient queries. This robustness of escaping from saddle points may be of independent interest.

\begin{proof}
Set $\delta_0=\frac{1}{81(f(\vect{x}_0)-f^{*})}\sqrt{\frac{\epsilon^3}{\rho}}$ and set the quantum accuracy $\delta_q\leq\frac{1}{2\ell}\Big(\frac{\delta\epsilon}{1000n^2}\Big)^2$. Let total iteration steps $T$ to be:
\begin{align}
T=4\max\left \{ \frac{(f(\vect{x_{0}})-f^{*})}{\mathscr{F'}},\frac{2(f(\vect{x_{0}})-f^{*})}{\eta \epsilon^{2}} \right \}
=\tilde{O}\Big(\frac{(f(\vect{x}_{0})-f^{*})}{\epsilon^{2}}\cdot\log n\Big),
\end{align}
similar to the classical GD algorithm. The same to \prop{QS-effectiveness}, we set the radius $M$ of the simulation range to be $r_0/C_r$. First assume that for each $\vect{x}_t$ we apply QuantumSimulation (\algo{QuantumSimulation}),
we can successfully obtain an output $\xi$ with $\xi^{T}\mathcal{H}\xi/\|\xi\|^2\leq-\sqrt{\rho\epsilon}/3$, as long as $\lambda_{\min}(\mathcal{H}(\vect{x}_t))\leq-\sqrt{\rho\epsilon}$. The error probability of this assumption is provided later.

Under this assumption, \algo{QuantumSimulation} can be called for at most $\frac{81(f(\vect{x_{0}})-f^{*})}{2}\sqrt{\frac{\rho}{\epsilon^3}}\leq \frac{T}{4}$ times, for otherwise the function value decrease will be greater than $f(\vect{x_{0}})-f^{*}$, which is not possible. Then, the error probability that some calls to \algo{QuantumSimulation} fails to indicate a negative curvature is upper bounded by
\begin{align}
\frac{81(f(\vect{x_{0}})-f^{*})}{2}\sqrt{\frac{\rho}{\epsilon^3}}\cdot\delta_0=\delta/2.
\end{align}

Excluding those iterations that QuantumSimulation is applied, we still have $T/2$ steps left. They are either large gradient steps, $\|\nabla f(\vect{x}_{t})\|\geq \epsilon$, or $\epsilon$-approximate second-order stationary points. Within them, for each large gradient steps, by \lem{descent-lemma-19}, with probability at least
\begin{align}
1-\frac{n}{\frac{1}{400n}\sqrt{\frac{2}{\delta_{q}}\cdot\frac{\eta \epsilon^{2}}{4}}-1}=1-\frac{n}{\frac{\epsilon}{400n}\sqrt{\frac{1}{2\delta_{q}\ell}}-1}\leq 1-\delta/2,
\end{align}
the function value decrease is greater than $\eta\epsilon^{2}/4$, there can be at most $T/4$ steps with large gradients---otherwise the function value decrease will be greater than $f(\vect{x}_{0})-f^{*}$, which is impossible.

In summary, by the union bound we can deduce that with probability at least $1-\delta$, there are at most $T/2$ steps within $\mathscr{T}'$ steps after calling quantum simulation, and at most $T/4$ steps have a gradient greater than $\epsilon$. As a result, the rest $T/4$ steps must all be $\epsilon$-approximate second-order stationary points.

The number of queries can be viewed as the sum of two parts, the number of queries needed for gradient descent, denoted by $T_{1}$, and the number of queries needed for quantum simulation, denoted by $T_{2}$. Then with probability at least $1-\delta$,
\begin{align}
T_{1}=T=\tilde{O}\Big(\frac{(f(\vect{x}_{0})-f^{*})}{\epsilon^{2}}\cdot\log n\Big).
\end{align}
As for $T_{2}$, with probability at least $1-\delta$ quantum simulation is called for at most $\frac{4(f(\vect{x_{0}})-f^{*})}{\mathscr{F'}}$ times, and by \lem{simulation} it takes $\tilde{O}\big(\mathscr{T}'\log n \log^{2}(\mathscr{T}'^{2}/\epsilon)\big)$ queries to carry out each simulation. Therefore,
\begin{align}
T_{2}=\frac{4(f(\vect{x_{0}})-f^{*})}{\mathscr{F'}}\cdot\tilde{O}\big(\mathscr{T}'\log n \log^{2}(\mathscr{T}'^{2}/\epsilon)\big)=\tilde{O}\Big(\frac{(f(\vect{x}_{0})-f^{*})}{\epsilon}\cdot\log^{2}n\Big).
\end{align}
As a result, the total query complexity $T_{1}+T_{2}$ is
\begin{align}
\tilde{O}\Big(\frac{(f(\vect{x}_{0})-f^{*})}{\epsilon^{2}}\cdot\log^{2}n\Big).
\end{align}
\end{proof}

\thm{QuantumSimulationAGD} and \thm{ESCGDJordan} together imply the main result \thm{main-intro} of this paper.
\begin{remark}
One may notice that in \sec{gradient-Jordan}, we only demonstrated the robustness of \algo{PGD+QS} where the classical gradient oracle is replaced by the quantum evaluation oracle. We argue that the same argument holds for \algo{PAGDQS} because the difference between \algo{PGD+QS} and \algo{PAGDQS} only exists in large gradient steps, while the relative error caused by Jordan's algorithm is small since the absolute error remains to be a constant. Hence, in principle \algo{PAGDQS} satisfies the similar robustness property compared to \algo{PGD+QS} under the change of gradient oracles.
\end{remark}


\section{Numerical Experiments}\label{sec:numerical}
In this section, we provide numerical results that demonstrate the power of quantum simulation for escaping from saddle points. Due to the limitation of current quantum computers, we simulate all quantum algorithms numerically on a classical computer (with Dual-Core Intel Core i5 Processor, 8GB memory). Nevertheless, our numerical results strongly assert the quantum speedup in small to intermediate scales. All the numerical results and plots are obtained by MATLAB 2019a.

In the first two experiments, we look at the wave packet evolution on both quadratic and non-quadratic potential fields. Before bringing out numerical results and related discussions, we want to briefly discuss the leapfrog scheme~\cite{gray1996symplectic}, which is the technique we employed for numerical integration of the Schr\"odinger equation. We discretize the Schr\"odinger equation as a linear system of an ordinary differential equation (for details, see \sec{simulation}):
\begin{equation} \label{eqn:discrete_Sch}
    i \frac{\d \Psi}{\d t} = H \Psi,
\end{equation}
where $\Psi\colon [0,T] \to \mathbb{C}^N$ is a vector-valued function in time. We may have a decomposition $\Psi(t) = Q(t) + iP(t)$ for $Q,P\colon [0,T] \to \R^N$ being the real and imaginary part of $\Psi$, respectively. Then plugging the decomposition into the ODE \eqn{discrete_Sch}, we have a separable $N$-body Hamiltonian system
\begin{equation}
    \begin{cases}
    \dot{Q} = HP;\\
    \dot{P} = -HQ.
    \end{cases}
\end{equation}
The optimal integration scheme for solving this Hamiltonian system is the symplectic integrator~\cite{gray1996symplectic}, and we use a second-order leapfrog integrator for separable canonical Hamiltonian systems~\cite{mauger2020leapfrog} in this section. In all of our PDE simulations, we fix the spatial domain to be $\Omega = \{(x,y):|x|\le 3, |y|\le 3\}$ and the mesh number to be $512$ on each edge.

\subsection{Dispersion of the Wave Packet}
In \prop{QSimulation}, we showed that a centered Gaussian wave packet will disperse along the negative curvature direction of the saddle point. In the numerical simulation presented in \fig{qsimulation_quadratic}, we have a potential function $f_{1}(x,y) = -x^2/2+3y^2/2$ and the initial wave function as described in \prop{QSimulation} ($r = 0.5$). In each subplot, the Gaussian wave packet (i.e., modulus square of the wave function $\Phi(t,x)$) at a specific time is shown. The quantum evolution ``squeezes'' the wave packet along the $x$-axis: the variance of the marginal distribution on the $x$-axis is 0.25, 0.33, 0.68 at time $t = 0, 0.5, 1$, respectively.

\begin{figure}[ht]
    \centering
    \includegraphics[width=1\textwidth]{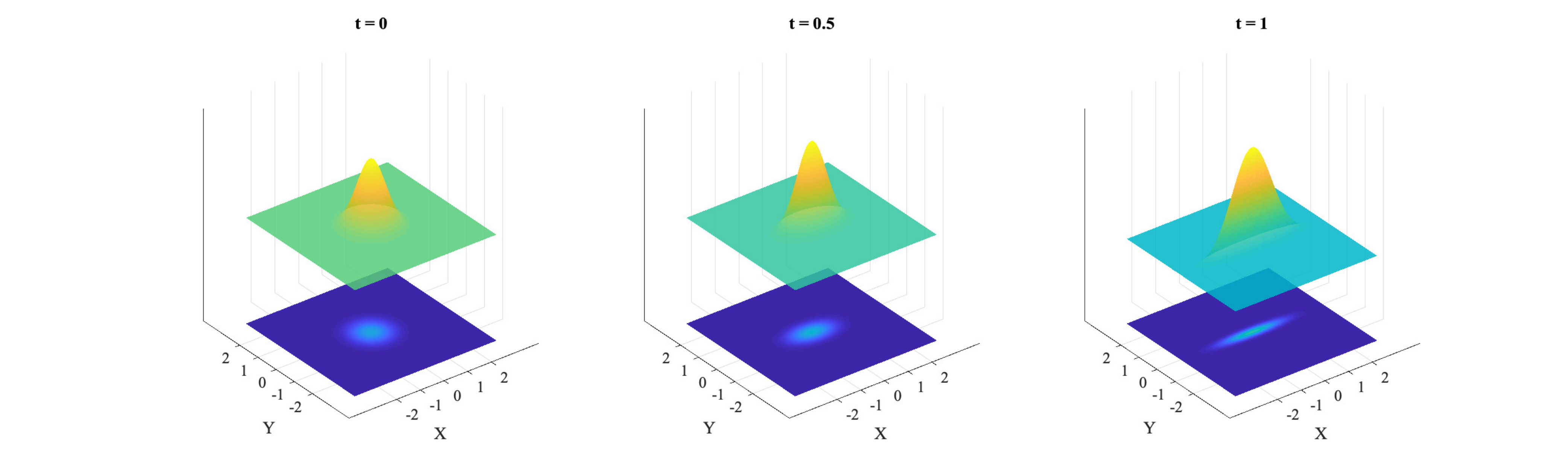}
    \caption{Dispersion of wave packet over the potential field $f_{1}(x,y)$. We use the finite difference method (5-point stencil) and the Leapfrog integration to simulate the Schr\"odinger equation \eqn{Schrodinger-appendix} on a square domain (center $= (0,0)$, edge $=6$), up to $T=1$. The mesh number is $512$ on each edge. The average runtime for this simulation is 43.7 seconds.}
    \label{fig:qsimulation_quadratic}
\end{figure}

In the preceding experiment, we have provided a numerical simulation of the dispersion of the Gaussian wave packet on a quadratic potential field. Next, we only require that the function is Hessian-Lipschitz near the saddle point. This is enough to promise that the second-order Taylor series is a good approximation near a small neighborhood of the saddle point.

\subsection{Quantum Simulation on Non-quadratic Potential Fields}
Now, we explore the behavior of the wave packet on non-quadratic potential fields. It is worth noting that: (1) the wave packet is not necessarily Gaussian during the time evolution; (2) for practical reason, we will truncate the unbounded spatial domain $\R^2$ to be a bounded square $\Omega$ and assume Dirichlet boundary conditions ($\Phi(t,x) = 0$ on $\partial \Omega$ for all $t\in [0,T]$). Nevertheless, it is still observed that the wave packet will be mainly confined to the ``valley'' on the landscape which corresponds to the direction of the negative curvature.

We will run quantum simulation (\algo{QuantumSimulation}) near the saddle point of two non-quadratic potential landscapes. The first one is $f(x,y) = \frac{1}{12}x^4 - \frac{1}{2}x^2+\frac{1}{2}y^2$. The Hessian matrix of $f(x,y)$ is
\begin{equation}
    \nabla^{2} f(x,y) = \begin{pmatrix}
    x^2  - 1 & 0 \\
    0 & 1\\
    \end{pmatrix}.
\end{equation}
It has a saddle point at $(0,0)$ and two global minima $(\pm \sqrt{3},0)$. The minimal function value is $-3/4$. This is the landscape used in the next experiment in which a comparison study between quantum and classical is conducted. We claimed that the wave packet will remain (almost) Gaussian at $t_e = 1.5$. This claim is confirmed by the numerical result illustrated in \fig{qsimulation_nonquadratic}. The wave packet has been ``squeezed'' along the $x$-axis, the negative curvature direction. Compared to the uniform distribution in a ball used in PGD, this ``squeezed'' bivariant Gaussian distribution assigns more probability mass along the $x$-axis, thus allowing escaping from the saddle point more efficiently.

\begin{figure}[ht]
    \centering
    \includegraphics[width=1\textwidth]{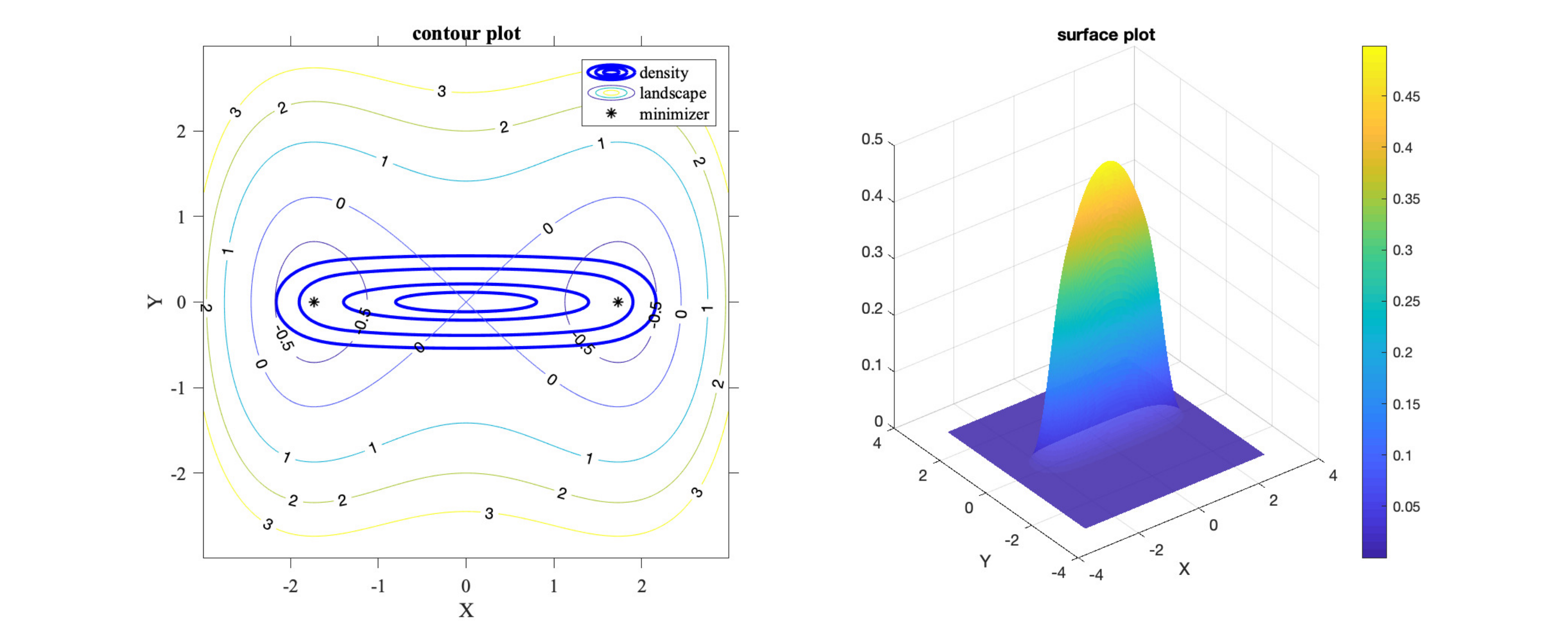}
    \caption{Quantum simulation on landscape 1: $f(x,y) = \frac{1}{12}x^4 - \frac{1}{2}x^2+\frac{1}{2}y^2$. Parameters: $r_0 = 0.5$, $t_e = 1.5$. Left: The contour of the landscape is placed on the background with labels being function values; the thick blue contours illustrate the wave packet at $t_e = 1.5$ (i.e., modulus square of the wave function $\Phi(t_e,x,y)$).\\Right: A surface plot of the same wave packet at $t_e = 1.5$. The average runtime for this simulation is 60.70 seconds.}
    \label{fig:qsimulation_nonquadratic}
\end{figure}

The second landscape we explore is $g(x,y) = x^3-y^3-2xy+6$. Its Hessian matrix is
\begin{equation}
    \nabla^{2} g(x,y) = \begin{pmatrix}
    6x & -2 \\
    -2 & -6y\\
    \end{pmatrix}.
\end{equation}
It has a saddle point at $(0,0)$ with no global minimum. This objective function has a circular ``valley'' along the negative curvature direction $(1,1)$, and a ``ridge'' along the positive curvature direction $(1,-1)$. We aim to study the long-term evolution of the Gaussian wave packet on the landscape restricted on a square region. The evolution of the wave packet is illustrated in \fig{qsimulation_nonquadratic2}. In a small time scale (e.g., $t = 1$), the wave packet disperses down the valley on the landscape, and it preserves a bell shape; waves are reflected from the boundary and an interference pattern can be observed near the upper and left edges of the square. Dispersion and interference coexist in the plot at $t = 2$, in which the wave packet splits into two symmetric components, each locates in a lowland. Since the total energy is conserved in the quantum-mechanical system, the wave packet bounces back at $t = 5$, but is blurred due to wave interference. In the whole evolution in $t\in [0,5]$, the wave packet is confined to the valley area of the landscape (even after bouncing back from the boundary). This evidence suggests that Gaussian wave packet is able to adapt to more complicated saddle point geometries.

\begin{figure}[ht]
    \centering
    \includegraphics[width=0.8\textwidth]{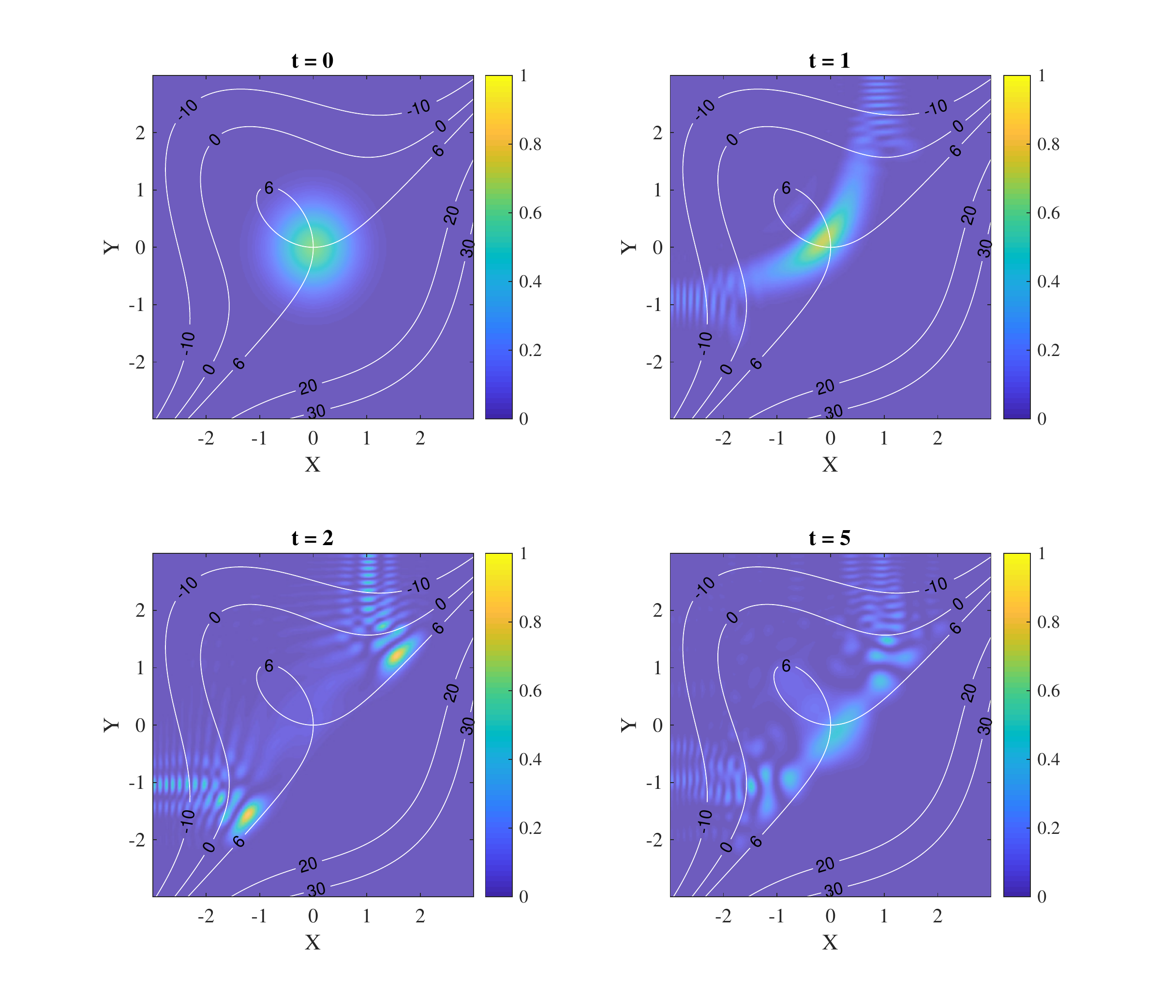}
    \caption{Quantum simulation on landscape 2: $g(x,y) = x^3-y^3-2xy+6$. Parameters: $r_0 = 0.5$, $t_e = 5$. In each subplot, a colored contour plot of the wave packet at a specific time is shown, and the landscape contour is placed on top of the wave packet  for quick reference. The average runtime for this simulation is 209.95 seconds.}
    \label{fig:qsimulation_nonquadratic2}
\end{figure}

\subsection{Comparison Between PGD and \algo{PGD+QS}}
In addition to the numerical study of the evolution of wave packets, we compare the performance of the PGD algorithm~\cite{jin2017escape} with \algo{PGD+QS} on a test function $f_{2}(x,y) = \frac{1}{12}x^4 - \frac{1}{2}x^2+\frac{1}{2}y^2$.

In this experiment and the last one in this section, we only implement a mini-batch from the whole algorithm (for both classical PGD and PGD with quantum simulation). In fact, a mini-batch is good enough for us to demonstrate the power of quantum simulation as well as the dimension dependence in both algorithms. A \textit{mini-batch} in the experiment is defined as follows:

\begin{itemize}
    \item Classical algorithm (PGD) mini-batch~\cite[following Algorithm 4 of][]{jin2019stochastic}: $x_0$ is uniformly sampled from the ball $B_0(r)$ (saddle point at the origin), and then run $\mathscr{T}_c$ gradient descent steps to obtain $x_{\mathscr{T}_c}$. Record the function value $f(x_{\mathscr{T}_c})$. Repeat this process $M$ times. The resulting function values are presented in a histogram.

    \item Quantum algorithm mini-batch (following \algo{PGD+QS}): Run the quantum simulation with evolution time $t_e$ to generate a multivariate Gaussian distribution centered at $0$. $x_0$ is sampled from this multivariate Gaussian distribution. Run $\mathscr{T}_q$ gradient descent steps and record the function value $f(x_{\mathscr{T}_q})$. Repeat this process $M$ times. The resulting function values are also presented in a histogram, superposed to the results given by the classical algorithm.
\end{itemize}

The experimental results from 1000 samples are illustrated in \fig{result_nonquadratic}. Although the test function is non-quadratic, the quantum speedup is apparent.

\begin{figure}[ht]
    \centering
    \includegraphics[width=0.8\textwidth]{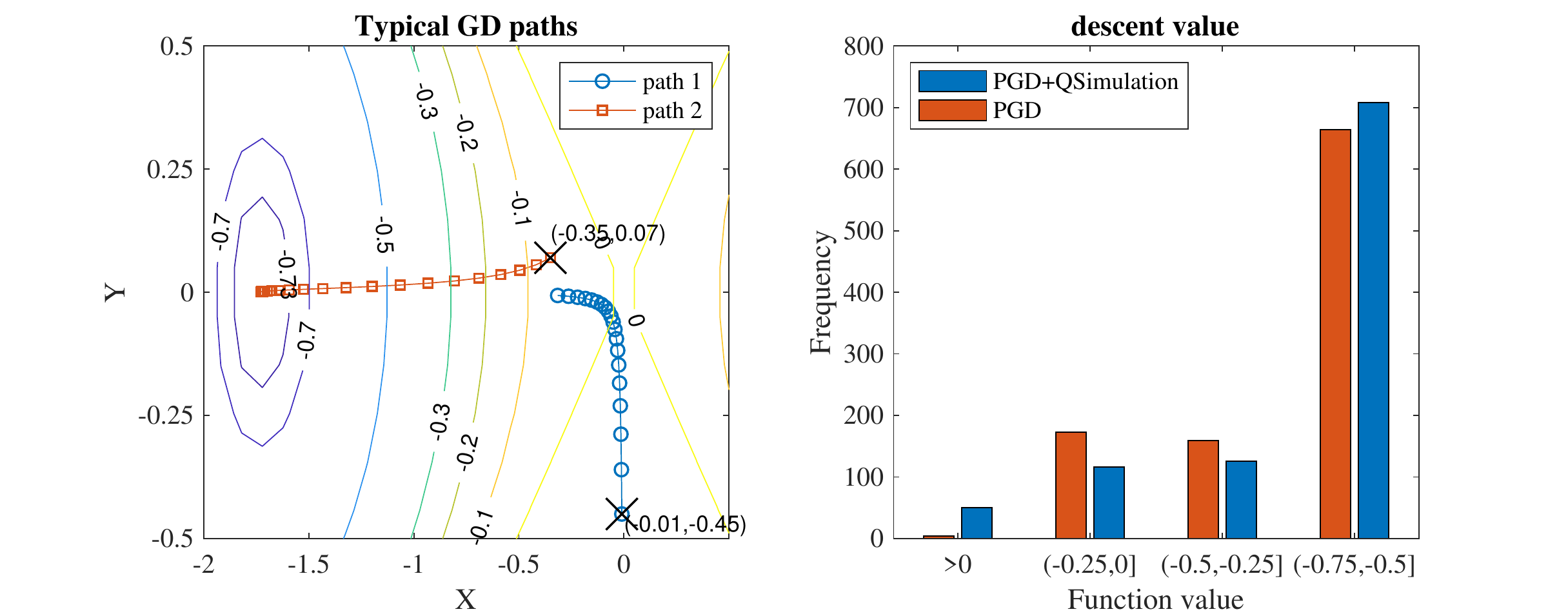}
    \caption{Left: Two typical gradient descent paths on the landscape of $f_{2}$ illustrated as a contour plot. Path 1 (resp.~2) starts from $(-0.01,0.45)$ (resp.~$(-0.35,0.07)$); both have step length $\eta = 0.2$ and $T = 20$ iterations. Note that path 2 approaches the local minimum $(-\sqrt{3},0)$, while path 1 is still far away. In PGD, path 1 and 2 will be sampled with equal probability by the uniform perturbation, whereas in \algo{PGD+QS}, the dispersion of the wave packet along the $x$-axis
    enables a much higher probability of sampling a path like path 2 (that approaches the local minimum).\\
    Right: A histogram of function values $f_{2}(x_{\mathscr{T}_c})$ (PGD) and $f_{2}(x_{\mathscr{T}_q})$ (\algo{PGD+QS}). We set step length $\eta =0.05$, $r = 0.5$ (ball radius in PGD and $r_0$ in \algo{QuantumSimulation}), $M = 1000$, $\mathscr{T}_c = 50$, $\mathscr{T}_q = 10$, $t_e = 1.5$. Although we run five more times of iterations in PGD, there are still over $70\%$ of gradient descent paths arriving the neighborhood of the local minimum, while there are less than $70\%$ paths in \algo{PGD+QS} approaching the local minimum. The average runtime of this experiment is 0.02 seconds.}
    \label{fig:result_nonquadratic}
\end{figure}

\subsection{Dimension Dependence}
Recall that $n$ is the dimension of the problem. Classically, it has been shown in~\cite{jin2019stochastic} that the PGD algorithm requires $O(\log^4 n)$ iterations to escape from saddle points; however, quantum simulation for time $O(\log n)$ suffices in our \algo{PGD+QS} by \thm{PGD+QS-Complexity}. The following experiment is designed to compare this dimension dependence of PGD and \algo{PGD+QS}. We choose a test function $h(x) = \frac{1}{2}x^T \mathcal{H} x$ where $\mathcal{H}$ is an $n$-by-$n$ diagonal matrix: $\mathcal{H} = \diag(-\epsilon, 1, 1, ..., 1)$. The function $h(x)$ has a saddle point at the origin, and only one negative curvature direction. Throughout the experiment, we set $\epsilon = 0.01$. Other hyperparameters are: dimension $n \in \N$, radius of perturbation $r>0$, classical number of iterations $\mathscr{T}_c$, quantum number of iterations $\mathscr{T}_q$, quantum evolution time $t_e$, number of samples $M \in \N$, and GD step size (learning rate) $\eta$. For the sake of comparison, the iteration numbers $\mathscr{T}_c$ and $\mathscr{T}_q$ are chosen in a manner such that the statistics of the classical and quantum algorithms in each category of the histogram in \fig{type1} are of similar magnitude.

\begin{figure}[ht]
    \centering
    \includegraphics[width=0.8\textwidth]{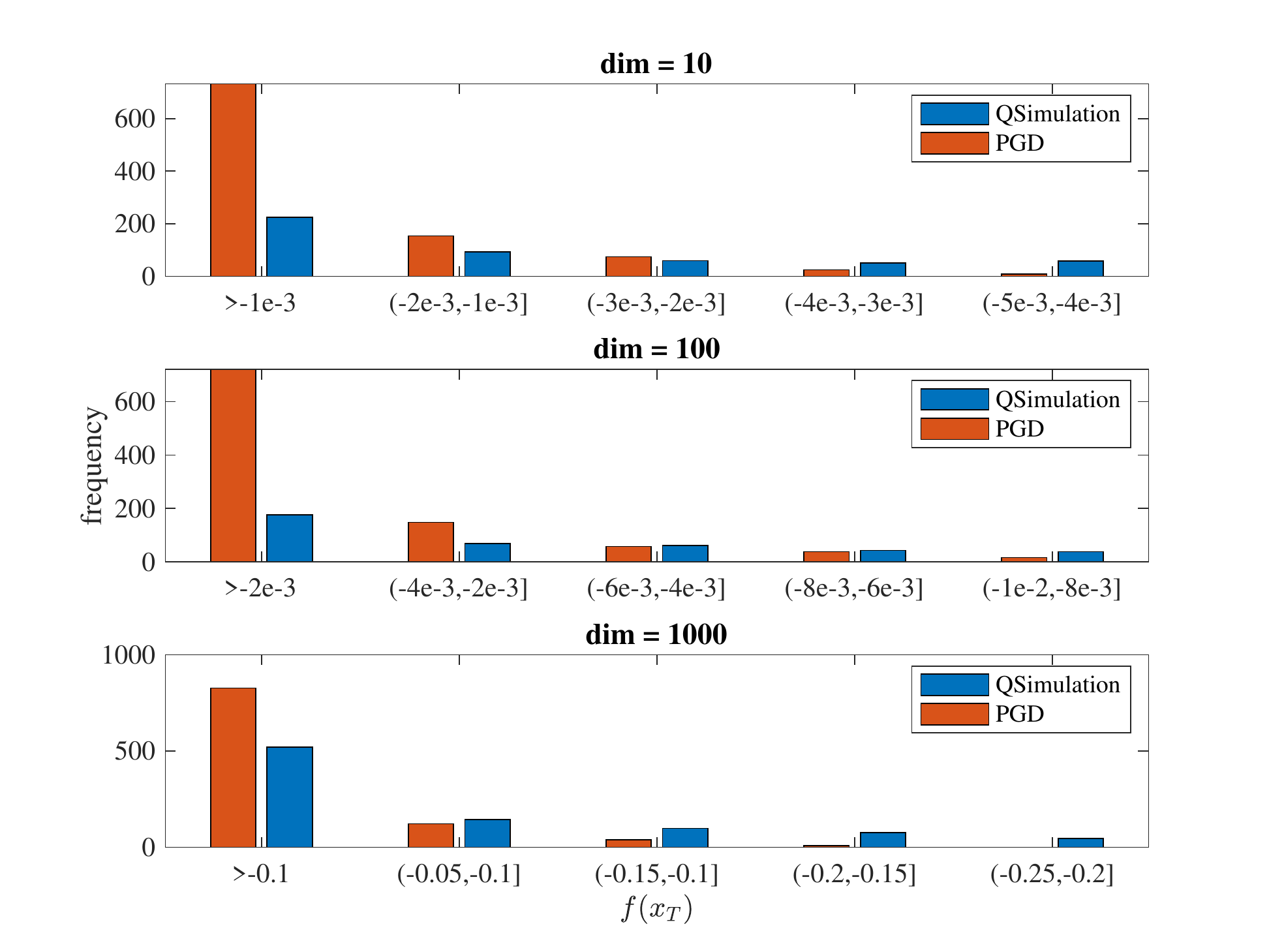}
    \caption{Dimension dependence of classical and quantum algorithms. We set $\epsilon = 0.01$, $r = 0.1$, $n = 10^p$ for $p = 1, 2, 3$. Quantum evolution time $t_e = p$, classical iteration number $\mathscr{T}_c = 50p^2 + 50$, quantum iteration number $\mathscr{T}_q = 30p$, and sample size $M = 1000$. The average runtime for this simulation is 90.92 seconds.}
    \label{fig:type1}
\end{figure}

The numerical results are illustrated in \fig{type1}. The number of dimensions varies drastically from $10$ to $1000$, while the distribution patterns in all three subplots are similar: setting $\mathscr{T}_c = \Theta(\log^2 n)$ and $\mathscr{T}_q = \Theta(\log n)$, the PGD with quantum simulation outperforms the classical PGD in the sense that more samples can escape from the saddle point (as they have lower function values). At the same time, under this choice of parameters, the performance of the classical PGD is still comparable to that of the PGD with quantum simulation, i.e., the statistics in each category are of similar magnitude. This numerical evidence might suggest that for a generic problem, the classical PGD method in \cite{jin2019stochastic} has better dimension dependence than $O(\log^4 n)$.


\section*{Acknowledgement}
We thank Andrew M. Childs, Andr{\'a}s Gily{\'e}n, Aram W. Harrow, Jin-Peng Liu, Ronald de Wolf, and Xiaodi Wu for helpful discussions. We also thank anonymous reviewers for helpful suggestions on earlier versions of this paper. JL was supported by the National Science Foundation (grant CCF-1816695). TL was supported by an IBM PhD Fellowship, an QISE-NET Triplet Award (NSF grant DMR-1747426), the U.S. Department of Energy, Office of Science, Office of Advanced Scientific Computing Research, Quantum Algorithms Teams program, NSF grant PHY-1818914, and a Samsung Advanced Institute of Technology Global Research Partnership.

\providecommand{\bysame}{\leavevmode\hbox to3em{\hrulefill}\thinspace}


\appendix
\section{Auxiliary Lemmas}\label{append:auxiliary-lemmas}
In this appendix, we collect all auxiliary lemmas that we use in the proofs.

\subsection{Schr\"{o}dinger Equation with a Quadratic Potential}\label{append:QSimulation-append}
In this subsection, we prove several results that lay the foundation of the quantum algorithm described in \sec{QSimulation}.

\standQSimulation*

\begin{proof}
 Due to the well-posedness of the Schr\"odinger equation, if we find a solution to the initial value problem \eqref{eqn:Schrodinger}, this solution is unique. We take the following ansatz
    \begin{equation} \label{eqn:ansatz}
        \Phi(t,x) = \left(\frac{1}{\pi}\right)^{1/4} \frac{1}{\sqrt{\delta(t)}}\exp(-i\theta(t)) \exp\left(\frac{-x^2}{2\delta(t)^2}\right),
    \end{equation}
    with $\theta(0) = 0$, $\delta(0) = \sqrt{2}$.

    In this Ansatz, the probability density $p_\lambda(t,x)$, i.e., the modulus square of the wave function, is given by
    \begin{equation} \label{eqn:density}
        p_\lambda(t,x):= |\Phi(t,x)|^2 = \frac{1}{\sqrt{\pi}} \frac{1}{|\delta(t)|} \exp\Big(2 \operatorname{Im}(\theta(t))\Big) \exp\Big(-x^2 \operatorname{Re}(1/ y(t))\Big),
    \end{equation}
    where $y(t) = \delta^2(t)$.

    If the ansatz \eqref{eqn:ansatz} solves the Schr\"odinger equation, we will have the conservation of probability, i.e., $\|\Phi(t,x)\|^2 = 1$ for all $t \geq 0$; in other words, the $\int_\R p_\lambda(t,x) \d x = 1$ for all $t \geq 0$. It is now clear that \eqref{eqn:density} is the density of a Gaussian random variable with zero mean and variance
    \begin{equation} \label{eqn:variance0}
        \sigma^2(t;\lambda) = \frac{1}{2\operatorname{Re}(1/ y(t))}.
    \end{equation}
    Therefore, it is sufficient to compute $y(t)$ in order to obtain the distribution of the quantum particle at time $t\geq 0$. For simplicity, we will not compute the global phase $\theta(t)$ as it is not important in the the variance.

    Substituting the ansatz \eqref{eqn:ansatz} to \eqref{eqn:Schrodinger} with potential function $f(x) = \frac{\lambda}{2}x^2$, and introducing change of variables $y(t) = \delta^2(t)$, we attain the following system of ordinary differential equations

    \begin{equation} \label{eqn:ode}
    \begin{cases}
    y' +i\lambda y^2 - i = 0,\\
    \theta' = \frac{i}{4}\frac{y'}{y} + \frac{1}{2}\frac{1}{y},\\
    \theta(0) = 0, y(0) = 2.
    \end{cases}
    \end{equation}

    \noindent\textbf{Case 1: $\lambda = 0$.} The system \eqref{eqn:ode} is linear with solutions
    \begin{equation}
    y(t) = 2+it.
    \end{equation}

    It follows that
    \begin{equation}
    \frac{1}{y(t)} = \frac{2}{4+t^2} - i \frac{t}{4+t^2},
    \end{equation}
    And by Equation \eqref{eqn:variance0}, the variance is
    \begin{equation}
    \sigma^2(t;0) = 1+\frac{t^2}{4}.
    \end{equation}

    \noindent\textbf{Case 2: $\lambda \neq 0$.} The equation $y' + i\lambda y^2 - i = 0$ in \eqref{eqn:ode} is a Riccati equation. Using the standard change of variable $y = \frac{-i}{\lambda} \frac{u'}{u}$, we transfer the Riccati equation into a second-order linear equation
    \begin{equation} \label{eqn:ode2}
    u'' + \lambda u = 0.
    \end{equation}
    Clearly, the sign of $\lambda$ matters.

    \noindent\textbf{Case 2.1: $\lambda > 0$.} Let $\alpha = \sqrt{\lambda}$, the solution to \eqref{eqn:ode2} is $u(t) = c_1 e^{i\alpha t} + c_2 e^{-i\alpha t}$ ($c_1, c_2$ are constants), and
    \begin{equation}
        y(t) = \frac{-i}{\lambda} \frac{u'}{u} = \frac{1}{\alpha}\frac{c_1 e^{i\alpha t} - c_2 e^{-i\alpha t} }{c_1 e^{i\alpha t} + c_2 e^{-i\alpha t}}.
    \end{equation}
    Provided the initial condition $y(0) = 2$, we choose $c_1 = 1$, $\beta := c_2 = (1-2\alpha)/(1+2\alpha)$, and it turns out that
    \begin{equation} \label{eqn:case2_1}
        y(t) = \frac{1}{\alpha}\Big(\frac{e^{2i\alpha t} - \beta }{e^{2i\alpha t} + \beta }\Big).
    \end{equation}
    By \eqref{eqn:variance0} and \eqref{eqn:case2_1}, we attain the variance when $\lambda > 0$.

    \noindent\textbf{Case 2.2: $\lambda < 0$.} Let $\alpha = \sqrt{-\lambda} > 0$, similar as Case 2.1, we have
    \begin{equation} \label{eqn:case2_2}
        y(t) = \frac{i}{\alpha} \frac{e^{2\alpha t} - \beta}{e^{2\alpha t} + \beta},
    \end{equation}
    where $\beta = \frac{1+2i\alpha}{1-2i\alpha}$. And the variance $\sigma(t;\lambda)$ for $\lambda<0$ can be obtained from \eqref{eqn:variance0} and \eqref{eqn:case2_2}.
    \end{proof}
    
\begin{remark}
Essentially, the three cases $\lambda=0$, $\lambda>0$, and $\lambda<0$ in Eq.~\eqn{variance_standard} can be written as a simple expression following \eqn{case2_1} and \eqn{case2_2}. Here we present these cases separately to explicitly demonstrate that when $\lambda<0$, the variance $\sigma^2(t;\lambda)$ grows exponentially fast in $t$.
\end{remark}

Furthermore, we prove that the argument applies to $n$-dimensional cases in general:
\begin{lemma}[$n$-dimensional evolution] \label{lem:standard_multidim_QSimulation}
Let $\mathcal{H}$ be an $n$-by-$n$ symmetric matrix with diagonalization $\mathcal{H} = U^T \Lambda U$, with $\Lambda = \diag(\lambda_1, ..., \lambda_n)$ and $U$ an orthogonal matrix. Suppose a quantum particle is in an $n$-dimensional potential field $f(\vect{x}) = \frac{1}{2}\vect{x}^T \mathcal{H}\vect{x}$ with initial state $\Phi(0,x)=(\frac{1}{2\pi})^{n/4} \exp(-\|\vect{x}\|^2/4)$; in other words, the initial position of this quantum particle follows multivariate Gaussian distribution $\mathcal{N}(0,I)$. Then, at any time $t \ge 0$, the position of the quantum particle still follows multivariate Gaussian distribution $\mathcal{N}(0, \Sigma(t))$, with the covariance matrix
\begin{equation} \label{eqn:variance_n}
    \Sigma(t) = U^T \diag(\sigma^2(t;\lambda_1), ..., \sigma^2(t;\lambda_n))U.
\end{equation}
The function $\sigma(t;\lambda)$ is defined in \eqref{eqn:variance_standard}.
\end{lemma}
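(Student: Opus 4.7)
The plan is to reduce the $n$-dimensional problem to $n$ decoupled one-dimensional problems by diagonalizing $\mathcal{H}$, then invoke \lem{1_dim_standard_QSimulation} coordinate-wise. Write $\mathcal{H} = U^T \Lambda U$ as in the statement and make the orthogonal change of variables $\vect{y} = U\vect{x}$. The key observation is that the Laplacian is invariant under orthogonal transformations, so $\Delta_{\vect{x}} = \Delta_{\vect{y}}$, while the potential becomes $f(\vect{x}) = \tfrac{1}{2} \vect{y}^T \Lambda \vect{y} = \sum_{i=1}^{n} \tfrac{\lambda_i}{2} y_i^2$. Thus, setting $\tilde{\Phi}(t,\vect{y}) = \Phi(t, U^T \vect{y})$, the Schr\"odinger equation \eqn{Schrodinger} in the $\vect{y}$-coordinates becomes
\begin{equation}
    i \frac{\partial}{\partial t} \tilde{\Phi} = \Big[-\tfrac{1}{2}\Delta_{\vect{y}} + \sum_{i=1}^{n} \tfrac{\lambda_i}{2} y_i^2\Big] \tilde{\Phi}.
\end{equation}

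Next I would exploit the rotational invariance of the initial data: since $\|\vect{x}\|^2 = \|U^T \vect{y}\|^2 = \|\vect{y}\|^2$, the initial wave function transforms as
\begin{equation}
    \tilde{\Phi}(0, \vect{y}) = \Big(\tfrac{1}{2\pi}\Big)^{n/4} \exp(-\|\vect{y}\|^2/4) = \prod_{i=1}^{n} \Big(\tfrac{1}{2\pi}\Big)^{1/4} \exp(-y_i^2/4),
\end{equation}
which is a product of standard one-dimensional Gaussians. Because both the Hamiltonian (in $\vect{y}$-coordinates) and the initial state are separable across the $y_i$-axes, I can write $\tilde{\Phi}(t,\vect{y}) = \prod_{i=1}^{n} \varphi_i(t, y_i)$, where each factor $\varphi_i$ solves the one-dimensional Schr\"odinger equation with potential $\tfrac{\lambda_i}{2} y_i^2$ and initial data $(\tfrac{1}{2\pi})^{1/4} \exp(-y_i^2/4)$. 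A quick check confirms that this product ansatz indeed solves the full PDE with the correct initial condition, and by uniqueness of solutions to the Schr\"odinger equation, this must be the solution.

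Applying \lem{1_dim_standard_QSimulation} to each $\varphi_i$ gives that $|\varphi_i(t, y_i)|^2$ is the density of $\mathcal{N}(0, \sigma^2(t; \lambda_i))$. By independence across coordinates, $|\tilde{\Phi}(t, \vect{y})|^2$ is the density of $\mathcal{N}(0, \diag(\sigma^2(t;\lambda_1), \ldots, \sigma^2(t;\lambda_n)))$ in $\vect{y}$-coordinates. Finally, transforming back via $\vect{x} = U^T \vect{y}$ (a linear map with Jacobian $1$), the position distribution in $\vect{x}$-coordinates is $\mathcal{N}(0, U^T \diag(\sigma^2(t;\lambda_1), \ldots, \sigma^2(t;\lambda_n)) U)$, matching \eqn{variance_n}.

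I do not anticipate any substantial obstacles: the proof is essentially a symmetry/separation-of-variables argument, and the only subtle points are verifying that (i) $-\tfrac{1}{2}\Delta$ is invariant under the orthogonal change of variables and (ii) the initial Gaussian is rotationally symmetric so that separability is preserved after rotation. Both are standard and can be stated in a sentence each.
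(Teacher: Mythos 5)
Your proof is correct, and it takes a genuinely different (and arguably cleaner) route than the paper's. The paper generalizes the one-dimensional ansatz to $n$ dimensions: it plugs a Gaussian ansatz $\Phi(t,\vect{x}) \propto (\det D(t))^{-1/4} e^{-i\theta(t)} \exp\!\big[-\tfrac{1}{2}\vect{x}^T D(t)^{-1}\vect{x}\big]$ with $D(t) = U^T\diag(\delta_1^2(t),\ldots,\delta_n^2(t))U$ into the Schr\"odinger equation, derives a matrix Riccati-type ODE $\frac{\d}{\d t}(D^{-1}) + iD^{-2} - i\mathcal{H} = 0$, and then observes that this ODE decouples in the eigenbasis of $\mathcal{H}$ into $n$ copies of the scalar Riccati equation already solved in \lem{1_dim_standard_QSimulation}. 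You instead diagonalize \emph{at the PDE level} via the orthogonal change of variables $\vect{y}=U\vect{x}$, note that the Laplacian is rotation-invariant and the initial Gaussian is rotationally symmetric, and conclude that both the Hamiltonian and initial state separate across the $y_i$-axes, so the $n$-dimensional equation factors into $n$ independent one-dimensional problems to which \lem{1_dim_standard_QSimulation} applies as a black box; transforming back with Jacobian $1$ gives \eqn{variance_n}. Both arguments exploit the same underlying symmetry, but yours avoids re-deriving any ODEs and makes the decoupling transparent at the level of the equation itself, whereas the paper's ansatz approach keeps the style parallel to the 1D proof and explicitly tracks the shape matrix $D(t)$ (which is reused in the discussion of \prop{QSimulation}). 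One minor point worth stating explicitly in a final write-up: the uniqueness of the solution to the Schr\"odinger initial value problem (which the paper invokes in the 1D case) is what certifies that your separable product ansatz is \emph{the} solution and not merely \emph{a} solution; you gesture at this and it is standard, but it is the load-bearing step.
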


\begin{proof}
The proof follows the same idea in \lem{1_dim_standard_QSimulation}. We take the following ansatz
    \begin{equation} \label{eqn:ansatz_ndim}
        \Phi(t,\vect{x}) = \left(\frac{1}{\pi}\right)^{n/4} \left(\det D(t)\right)^{-1/4}\exp(-i\theta(t)) \exp\left[-\frac{1}{2}\vect{x}^T \big(D(t)\big)^{-1} \vect{x}\right],
    \end{equation}
    with $\theta(0) = 0$, $D(0) = \sqrt{2}I$, and $D(t) = U^T \diag(\delta^2_1(t),...,\delta^2_n(t))U$.

The global phase parameter $\theta(t)$, together with the factor $\left(\frac{1}{\pi}\right)^{n/4} \left(\det D(t)\right)^{-1/4}$, will contribute to a scalar factor in the probability density function such that the $L^2$-norm of the wave function \eqref{eqn:ansatz_ndim} will remain unit $1$. It is the matrix $D(t)$ that controls the covariance matrix (see Eqn.~\ref{eqn:variance1}). Regarding this, we do not delve into the derivation of $\theta(t)$ in this proof.

Substituting the ansatz \eqref{eqn:ansatz_ndim} to the Schr\"odinger equation \eqref{eqn:Schrodinger}, we have the following system of ordinary differential equations:

\begin{equation} \label{eqn:delta_ode}
    \frac{\d }{\d t}\left(D(t)^{-1}\right) + i D(t)^{-2} - i\mathcal{H} = 0,
\end{equation}
\begin{equation}
    \dot{\theta} = \frac{i}{4}\left(\det D(t)\right)^{-1} \frac{\d }{\d t}\left(D(t)\right)+\frac{1}{2}\Tr[D(t)^{-1}].
\end{equation}
We immediately observe that Eq. \eqref{eqn:delta_ode} is a decoupled system
\begin{equation}
    \frac{\d}{\d t}\left(\frac{1}{\delta_j(t)^2}\right) + i \frac{1}{(\delta_j(t))^4} - i \lambda_j = 0, \text{  for }j = 1,..., n.
\end{equation}
Again, introduce change of variables $y_j(t) = \delta^2_j(t)$, we have
\begin{equation} \label{eqn:decouple2}
    \dot{y}_j + i \lambda_j y^2-i=0, \text{  for }j = 1,..., n.
\end{equation}
They are precisely the same as the first equation in \eqref{eqn:ode}, thus the calculation of one-dimensional case in \lem{1_dim_standard_QSimulation} applies directly to \eqref{eqn:decouple2}.

Given the ansatz \eqref{eqn:ansatz_ndim}, it is clear that the probability density of the quantum particle in $\R^n$ is an $n$-dimensional Gaussian with mean $0$ and covariance matrix
\begin{equation} \label{eqn:variance1}
    \Sigma(t) = \left( 2\operatorname{Re} D^{-1}(t)\right)^{-1} = U^T \left(\frac{1}{2\operatorname{Re}(1/y_1(t))},...,\frac{1}{2\operatorname{Re}(1/y_n(t))}\right) U.
\end{equation}
It follows from \eqref{eqn:variance0} and \eqref{eqn:variance_standard} that the covariance matrix is given as \eqref{eqn:variance_n}.
\end{proof}

Finally, we state the following proposition with different scales:
\begin{restatable}{proposition}{QSimulation}\label{prop:QSimulation}
Let $\mathcal{H}$ be an $n$-by-$n$ symmetric matrix with diagonalization $\mathcal{H} = U^T \Lambda U$, with $\Lambda = \diag(\lambda_1, ..., \lambda_n)$ and $U$ an orthogonal matrix. Suppose a quantum particle is in an $n$-dimensional potential field $f(\vect{x}) = \frac{1}{2}\vect{x}^T \mathcal{H}\vect{x}$ with the initial state being
\begin{equation}
\Phi(0,\vect{x})=\Big(\frac{1}{2\pi}\Big)^{n/4} r^{-n/2}\exp(-\|\vect{x}\|^2/4r^2);
\end{equation}
in other words, the initial position of the particle follows multivariate Gaussian distribution $\mathcal{N}(0,r^2 I)$. The time evolution of this particle is governed by \eqref{eqn:Schrodinger-appendix}. Then, at any time $t \ge 0$, the position of the quantum particle still follows multivariate Gaussian distribution $\mathcal{N}(0, r^2\Sigma(t))$, with the covariance matrix
\begin{equation}
    \Sigma(t) = U^T \diag(\sigma^2(t;\lambda_1), ..., \sigma^2(t;\lambda_n))U.
\end{equation}
The function $\sigma(t;\lambda)$ is the same as in \eqref{eqn:variance_standard}.
\end{restatable}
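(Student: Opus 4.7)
The plan is to reduce Proposition QSimulation to Lemma standard\_multidim\_QSimulation via a spatial rescaling that simultaneously transforms the scaled Schr\"odinger equation \eqn{Schrodinger-appendix} back to the standard form \eqn{Schrodinger}. Concretely, I would introduce the substitution $\vect{y} := \vect{x}/r$ and define the rescaled wave function
\begin{equation*}
\tilde{\Phi}(t,\vect{y}) := r^{n/2}\,\Phi(t, r\vect{y}),
\end{equation*}
where the prefactor $r^{n/2}$ is precisely the Jacobian needed so that $\int |\tilde{\Phi}(t,\vect{y})|^2\,\d\vect{y} = \int |\Phi(t,\vect{x})|^2\,\d\vect{x} = 1$, i.e., so that $\tilde{\Phi}$ remains an $L^2$-normalized wave function and the probability densities transform compatibly as $|\Phi(t,\vect{x})|^2 = r^{-n}|\tilde{\Phi}(t,\vect{x}/r)|^2$.

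Next, I would plug this change of variables into \eqn{Schrodinger-appendix}. Under $\vect{x} = r\vect{y}$, the Laplacian obeys $\Delta_{\vect{x}} = r^{-2}\Delta_{\vect{y}}$, so the kinetic term becomes
\begin{equation*}
-\frac{r^2}{2}\Delta_{\vect{x}} \longmapsto -\frac{1}{2}\Delta_{\vect{y}},
\end{equation*}
while the quadratic potential transforms as
\begin{equation*}
\frac{1}{r^2} f(\vect{x}) = \frac{1}{r^2}\cdot\frac{r^2}{2}\vect{y}^T \mathcal{H}\vect{y} = \frac{1}{2}\vect{y}^T\mathcal{H}\vect{y}.
\end{equation*}
Thus the two $r$-dependent scalings precisely cancel, and $\tilde{\Phi}$ satisfies the unscaled Schr\"odinger equation \eqn{Schrodinger} with the same quadratic Hamiltonian $\tfrac{1}{2}\vect{y}^T\mathcal{H}\vect{y}$ that appears in Lemma standard\_multidim\_QSimulation. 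A direct substitution also shows that the initial condition transforms to
\begin{equation*}
\tilde{\Phi}(0,\vect{y}) = r^{n/2}\Phi(0,r\vect{y}) = \Big(\tfrac{1}{2\pi}\Big)^{n/4}\exp(-\|\vect{y}\|^2/4),
\end{equation*}
which is exactly the standard-variance Gaussian state assumed there.

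Applying Lemma standard\_multidim\_QSimulation then tells us that the position of the rescaled particle at time $t$ follows $\mathcal{N}(0,\Sigma(t))$ with the advertised covariance $\Sigma(t) = U^T\diag(\sigma^2(t;\lambda_1),\ldots,\sigma^2(t;\lambda_n))U$. Pushing forward through the linear map $\vect{x} = r\vect{y}$ (equivalently, using the density identity $|\Phi(t,\vect{x})|^2 = r^{-n}|\tilde{\Phi}(t,\vect{x}/r)|^2$), the $\vect{x}$-distribution is $\mathcal{N}(0, r^2\Sigma(t))$, as claimed. I do not anticipate a serious obstacle; the only subtle point is bookkeeping the $r^{n/2}$ normalization factor consistently so that the wave function remains $L^2$-normalized and the covariance picks up the correct factor of $r^2$ under the pushforward, and verifying that the two scalings in the Hamiltonian cancel exactly for a quadratic potential (which is the whole motivation behind the specific form of \eqn{Schrodinger-appendix}).
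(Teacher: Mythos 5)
Your proposal is correct and takes essentially the same approach as the paper: both exploit the spatial dilation $\vect{x}\mapsto r\vect{x}$ (with the $r^{n/2}$ normalization) to relate the scaled Schr\"odinger equation \eqn{Schrodinger-appendix} to the standard one and then invoke \lem{standard_multidim_QSimulation}. The only cosmetic differences are the direction of the change of variables (you map the scaled problem to the standard one, while the paper defines $\Psi(t,x)=\tfrac{1}{\sqrt{r}}\Phi(t,x/r)$ from the standard solution and verifies it solves the scaled equation) and that you carry out the argument directly in $n$ dimensions, whereas the paper writes out only the one-dimensional case and remarks that the general case follows similarly.
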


\begin{proof}
Here, we only prove the one-dimensional case, as the $n$-dimensional case follows almost the same manner, together with a similar argument from the proof of \lem{standard_multidim_QSimulation}.
Let $\Phi(t,x)$ be the wave function as in \lem{1_dim_standard_QSimulation}, namely, it satisfies the standard Schr\"odinger equation \eqref{eqn:Schrodinger}.
Define $\Psi(t,x) = \frac{1}{\sqrt{r}}\Phi(t,\frac{x}{r})$. Since $\|\Phi(t,\cdot)\|^2 = 1$ for all $ t \ge 0$, the factor $\frac{1}{\sqrt{r}}$ ensures the $L^2$-norm of $\Psi(t,x)$ is always 1.

We claim that $\Psi(t,x)$ satisfies the modified Schr\"odinger equation \eqref{eqn:Schrodinger-appendix}. To do so, we substitute $\Psi(t,x)$ back to \eqref{eqn:Schrodinger-appendix}. Its LHS is just $i \frac{\partial}{\partial_t} \frac{1}{\sqrt{r}}\Phi(t,x/r)$, whereas the RHS is
\begin{align}
    \Big[-\frac{r^2}{2}\Delta+\frac{1}{r^2}f(\vect{x})\Big]\Psi(t,x) = \frac{1}{\sqrt{r}}\Big[-\frac{1}{2}\Delta+\frac{1}{2}\left(\frac{\vect{x}}{r}\right)^T \mathcal{H} \left(\frac{\vect{x}}{r}\right) \Big] \Phi\left(t,\frac{x}{r}\right).
\end{align}
Since $\Phi(t,x)$ satisfies \eqref{eqn:Schrodinger}, it turns out that the LHS equals to the RHS. Furthermore, the variance of $\Phi(t,x)$ is $\sigma^2(t;\lambda)$, and that of $\Psi(t,x) =\frac{1}{\sqrt{r}} \Phi(t,x/r)$ is simply $r^2 \sigma^2(t;\lambda)$.
\end{proof}

Throughout the discussion, we only concern the evolution of the wave packet when it happens to center on the saddle point. However, in reality, the exact location of the saddle point is rarely known and the initial Gaussian wave may be slightly off the saddle point. In the following proposition, we investigate this more general situation in which the potential function is shifted by a distance of $d$. It turns out that the wave packet remains Gaussian with exactly the same rate of dispersion in its variance, while the mean of the Gaussian wave behaves like the trajectory of a classical particle, i.e., governed by the Hamiltonian mechanics $\ddot{X} = - \nabla f(X)$. Thus, we believe the source of quantum speedup in our algorithm is the variance dispersion along the negative curvature direction.

\begin{proposition}\label{prop:off_center}
	Suppose a quantum particle is in a one-dimensional potential field $f(x)=\frac{\lambda}{2}(x-d)^{2}$ with initial state $\Phi(0,x)=(\frac{1}{2\pi})^{1/4}\exp(-x^{2}/4)$; in other words, the initial position of this quantum particle follows the standard normal distribution $\mathcal{N}(0,1)$. The time evolution of this particle is governed by \eqref{eqn:Schrodinger}. Then, at any time $t \ge 0$, the position of the quantum particle still follows normal distribution $\mathcal{N}\left(\mu(t;\lambda),\sigma^2(t;\lambda)\right)$, where the mean $\mu(t;\lambda)$ is given by
	\begin{equation} \label{eqn:mean_standard}
		\mu(t;\lambda) = \begin{cases}
			0 & (\lambda = 0),\\
			d(1 - \cos(\alpha t)) & (\lambda > 0, \alpha = \sqrt{\lambda}),\\
			d(1 - \cosh(\alpha t)) & (\lambda < 0, \alpha = \sqrt{-\lambda}),\\
		\end{cases}
	\end{equation}
	while the variance $\sigma^2(t;\lambda)$ is exactly the same as in \eqref{eqn:variance_standard}.
\end{proposition}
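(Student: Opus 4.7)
The plan is to reduce to the centered case of \lem{1_dim_standard_QSimulation} via a translation and then solve the resulting problem with a Gaussian wave-packet ansatz. First I would set $y := x - d$ and $\Psi(t,y) := \Phi(t, y+d)$, so that $\Psi$ satisfies the centered Schr\"odinger equation $i\partial_t\Psi = -\tfrac{1}{2}\Psi_{yy} + \tfrac{\lambda}{2}y^2\Psi$, while the displacement has been pushed entirely onto the initial data, which is the off-center Gaussian $\Psi(0,y) = (\tfrac{1}{2\pi})^{1/4}\exp(-(y+d)^2/4)$ centered at $y = -d$. The task has thus been reduced to computing the evolution of an off-center Gaussian under a centered harmonic Hamiltonian.

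Next I would propose the Gaussian wave-packet ansatz
\begin{equation}
\Psi(t,y) = \pi^{-1/4}\,\delta(t)^{-1/2}\,\exp(-i\theta(t))\,\exp\!\left(i p(t)(y-\nu(t)) - \frac{(y-\nu(t))^2}{2\delta(t)^2}\right),
\end{equation}
with initial data $\nu(0) = -d$, $p(0) = 0$, $\delta(0) = \sqrt{2}$, $\theta(0) = 0$, and substitute it into the Schr\"odinger equation. Matching coefficients of $y^2$, $y$, and $1$ in the exponent yields a decoupled ODE system: the Riccati equation for $y(t) := \delta^2(t)$ is identical to the one derived in the proof of \lem{1_dim_standard_QSimulation} (in particular, it is independent of $d$), while the mean and momentum variables obey the classical oscillator equations $\dot{\nu} = p$ and $\dot{p} = -\lambda \nu$, a direct manifestation of Ehrenfest's theorem for quadratic Hamiltonians. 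The width equation being untouched by the displacement immediately implies that the variance $\sigma^2(t;\lambda)$ is unchanged and given by \eqn{variance_standard}.

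Finally I would integrate the linear system $\ddot{\nu} = -\lambda\nu$ with $\nu(0) = -d$, $\dot{\nu}(0) = 0$: for $\lambda = 0$ we have $\nu(t) \equiv -d$; for $\lambda > 0$ with $\alpha = \sqrt{\lambda}$ we have $\nu(t) = -d\cos(\alpha t)$; and for $\lambda < 0$ with $\alpha = \sqrt{-\lambda}$ we have $\nu(t) = -d\cosh(\alpha t)$. Translating back via $\mu(t) = \nu(t) + d$ recovers each case of \eqn{mean_standard} on the nose. The main obstacle is the routine-but-careful coefficient matching needed to confirm the Gaussian ansatz is self-consistent and that the three sectors (width, mean-momentum, phase) genuinely decouple; once this is verified, uniqueness for the Schr\"odinger initial value problem guarantees that the ansatz captures the full evolution, so no extra argument is needed.
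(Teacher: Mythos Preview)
Your proposal is correct and takes a genuinely different route from the paper. The paper works directly in the original coordinate $x$ with the shifted potential $\frac{\lambda}{2}(x-d)^2$ and substitutes the generic complex-quadratic ansatz $\Phi(t,x)=\exp(-a(t)x^2+b(t)x+c(t))$ into \eqref{eqn:Schrodinger}; this yields the same Riccati equation for $a(t)$ as in \lem{1_dim_standard_QSimulation}, while $b(t)$ is recovered through the integral representation $b(t)=i\lambda d\,(\int_0^t u+C)/u$ after the substitution $a=-\frac{i}{2}\dot u/u$, and the mean and variance are then read off from $\operatorname{Re}a$ and $\operatorname{Re}b$. Your approach instead shifts the potential to the origin and pushes the displacement into the initial data, then uses a physically parametrized Gaussian packet with explicit center $\nu(t)$ and momentum $p(t)$; the decoupling of the width sector from the $(\nu,p)$ sector is exactly Heller's observation (equivalently, Ehrenfest's theorem being exact for quadratic Hamiltonians), and the mean is obtained by integrating the classical oscillator $\ddot\nu=-\lambda\nu$. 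What your route buys is conceptual transparency: it is immediately clear \emph{why} the variance is unaffected by $d$ and why the mean follows the classical trajectory, and you avoid the somewhat opaque integral formula for $b(t)$. What the paper's route buys is that it is purely algebraic and requires no prior acquaintance with Gaussian wave-packet dynamics; one simply matches polynomial coefficients. Both arguments rely on uniqueness for the Schr\"odinger initial-value problem to conclude that the ansatz is the actual solution.
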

\begin{proof}
	The main idea of the proof is to use the undetermined coefficient method similar to the proof of \lem{1_dim_standard_QSimulation}, though we will use a different ansatz with more parameters:
	\begin{equation}\label{eqn:off_cent_ansatz}
		\Phi(t,x) = \exp\left(-a(t) x^2 + b(t)x + c(t)\right),
	\end{equation}
	where $a(t)$, $b(t)$, and $c(t)$ are complex-valued functions. For simplicity, the normalization constant is absorbed in the $c(t)$ term. The probability density $p_\lambda(t,x)$, i.e., the modulus square of the wave function, is then given by
	\begin{equation}
		p_\lambda (t,x) := |\Phi(t,x)|^2 = \exp\left(-\frac{\big(x-\mathscr{B}(t)/\mathscr{A}(t)\big)^2}{1/2\mathscr{A}(t)} + \big(\mathscr{B}(t)^2/2\mathscr{A}(t) + 2\mathscr{C}(t)\big)\right),
	\end{equation}
	where $\mathscr{A}(t)$, $\mathscr{B}(t)$, and $\mathscr{C}(t)$ are the real parts of the functions $a(t)$, $b(t)$, and $c(t)$, respectively. One can readily observe that $p_\lambda (t,x)$ is a Gaussian density function with mean and variance being
	\begin{equation}\label{eqn:off_cent_m_v}
		\begin{cases}
			\mu(t;\lambda) = \frac{\mathscr{B}(t)}{2\mathscr{A}(t)},\\
			\sigma^2(t;\lambda) = \frac{1}{4\mathscr{A}(t)}.
		\end{cases}
	\end{equation}
	It turns out that the distribution of the quantum particle is completely determined by the mean $\mu(t)$ and variance $\sigma^2(t)$ if we can show that the ansatz function \eqref{eqn:off_cent_ansatz} indeed solves the Schr\"odinger equation \eqref{eqn:Schrodinger} with a potential field $f(x) = \frac{\lambda}{2}(x-d)^2$.
	
Substituting the ansatz \eqref{eqn:off_cent_ansatz} to the Schr\"odinger equation \eqref{eqn:Schrodinger}, we obtain the following system of ordinary differential equations:
	\begin{equation}\label{eq:off_cent_system}
		\begin{cases}
			-i \dot{a} = -2a^2 + \frac{\lambda}{2},\\
			i\dot{b} = 2ab - \lambda d,\\
			i\dot{c} = a - \frac{1}{2}b^2 + \frac{\lambda}{2}d^2,
		\end{cases}
	\end{equation}
	subject to the initial condition $a(0) = 1/4$, $b(0) = 0$, and $c(0) = -\log(2\pi)/4$.
	The last equation says $c(t)$ can be directly integrated as long as $a(t)$ and $b(t)$ are known. In other words, $c(t)$ exists given that $a(t)$ and $b(t)$ are determined, and we do not care about the exact value of $c(t)$ because it sheds no light on either the mean $\mu(t;\lambda)$ nor the variance $\sigma^2(t;\lambda)$. To prove the lemma, it suffices to calculate $a(t)$ and $b(t)$.
	
	The first equation in the system \eqref{eq:off_cent_system} is a Riccati equation; by the change of variable $a = -\frac{i}{2} \frac{\dot{u}}{u}$, the Riccati equation is transformed into a second-order linear equation $\ddot{u} + \lambda u = 0$. Then, similarly, we shall discuss three cases $\lambda = 0$, $\lambda > 0$, and $\lambda < 0$. Here, we only do the $\lambda  > 0$ case, as the other two cases are solved following essentially the same procedures.
	
	Before we proceed with the calculation of $a(t)$, we discuss how the change of variable $a = -\frac{i}{2} \frac{\dot{u}}{u}$ simplifies the second equation in the system \eqref{eq:off_cent_system}.  With the change of variable into $i\dot{b} = 2ab - \lambda d$ and proper algebraic manipulation, we end up with the nice form
	\begin{align}
		\dot{u}b + u\dot{b} = i\lambda d u,
	\end{align}
	Note that the left hand side is simply $\frac{\d}{\d t}(ub)$, and hence the function $b(t)$ can be expressed in terms of $u(t)$:
	\begin{equation}\label{eqn:eq_b}
		b(t) = i \lambda d\cdot\frac{\int^t_0 u(s) \d s + C}{u(t)},
	\end{equation}
	where $C$ is a constant.
	
	Now, we are ready to compute both the mean and variance for the case $\lambda > 0$. Suppose $\alpha = \sqrt{\lambda}$, we have
	\begin{equation}\label{eqn:eq_u}
		u(t) = e^{i\alpha t} + c e^{-i\alpha t}, \text{ with } c = (1-2\alpha)/(1+2\alpha).
	\end{equation}
	This particular choice of $c$ will give rise to the function $a(t)$ satisfying the initial condition $a(0) = 1/4$, which reads
	\begin{align}
		a(t) = \frac{\alpha}{2} \frac{e^{2i\alpha t} - c}{e^{2i\alpha t} + c}, \text{ with }  c = (1-2\alpha)/(1+2\alpha).
	\end{align}
	
	Similarly, we substitute the solution of $u(t)$ \eqref{eqn:eq_u} back into the formula for $b(t)$ \eqref{eqn:eq_b}, together with the initial condition $b(0) = 0$, we can write down the closed form of $b(t)$:
	\begin{align}
		b(t) = \alpha r \frac{e^{2i\alpha t} - c + (c-1)e^{i\alpha t}}{e^{2i\alpha t} + c}, \text{ with }  c = (1-2\alpha)/(1+2\alpha).
	\end{align}
	
	 The real parts of $a(t)$ and $b(t)$ can then be computed as follows
	 \begin{equation}
	 	\begin{cases}
	 		\mathscr{A}(t) = \Re\left(a(t)\right) = \frac{(1-c^2)\alpha}{2\left(1 + c^2 + 2\cos(2\alpha t)\right)},\\
	 		\mathscr{B}(t) =  \Re\left(b(t)\right) =  \alpha d \frac{(1-c^2)\big(1-\cos(\alpha t)\big)}{1 + c^2 + 2\cos(2\alpha t)},
	 	\end{cases}
	 \end{equation}
	and the mean $\mu(t;\lambda)$ and variance $\sigma^2(t;\lambda)$ follows from \eqref{eqn:off_cent_m_v}.
\end{proof}

\subsection{Bounding the deviation from perfect Gaussian in quantum evolution}\label{append:evol_deviation}
In what follows, we will use $\|\cdot\|_p$ to denote the $L^p$-norm of an integrable function $g\colon \Omega \to \R$:
\begin{align}
	\|g\|_p := \left(\int_{\Omega} |g|^p ~\d x \right)^{1/p},
\end{align}
where $1 \le p < \infty$. For a continuous function $g\colon \Omega \to \R$, the $L^\infty$ norm is $\|g\|_\infty = \sup_{x\in \Omega} |g(x)|$. For a finite-dimensional vector $\vec{v}$, we simply use $\|\vec{v}\|$ to denote its $\ell^2$-norm (or the Euclidean norm):
\begin{align}
	\|\vec{v}\| := \left(\sum_j |v_j|^2 \right)^{1/2}.
\end{align}
For a vector-valued function $G\colon \Omega \to \R^n$, we also define its $L^p$-norm for $1 \le p < \infty$:
\begin{align}
	\|G\|_p := \left(\int_\Omega \sum^n_{j=1}  |G_j(x)|^p~ \d x\right)^{1/p},
\end{align}
where $G_j(x)$ is the $j$-th component of the function $G(x)$. The $L^\infty$-norm is defined in the same manner: $\|G\|_\infty = \max_{1\le j \le n} \|G_j\|_\infty$.

First, we prove the following vector norm error bound of quantum simulation:
\begin{lemma}[Vector norm error bound] \label{lem:vector-norm-bound}
	Let $H_1$, $H_2$ be two Hermitian operators and $H = H_1+H_2$. Then, for any $t>0$ and an arbitrary vector $\ket{\varphi}$, we have
	\begin{align}\label{eqn:vector-norm-bound}
		\left\|e^{-iH_1 t}e^{-iH_2 t}\ket{\varphi} - e^{-iH t}\ket{\varphi}\right\| \le \frac{t^2}{2}\sup_{\tau_1, \tau_2 \in [0,t]} \left\|[H_1, H_2] e^{-iH_2 \tau_2} e^{-iH_1 \tau_1}  \ket{\varphi}\right\|.
	\end{align}
\end{lemma}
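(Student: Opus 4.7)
The plan is to prove this as a ``Duhamel twice'' bound on first‑order Trotter error: write the difference as a double integral whose integrand is, up to unitary prefactors, the commutator $[H_1,H_2]$ applied to an evolved vector; then unitarity kills the prefactors under the norm and the $t^2/2$ factor emerges as the area of a triangular integration region. Boundedness of $H_1,H_2$ is never used---only unitarity of the partial evolutions, which is exactly what makes this a ``vector norm'' rather than an operator norm statement.

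First I would apply the fundamental theorem of calculus in the form
\begin{align*}
\bigl(e^{-iH_1 t}e^{-iH_2 t}-e^{-iH t}\bigr)\ket{\varphi} = \int_0^t \frac{d}{ds}\!\left[e^{-iH(t-s)}\,e^{-iH_1 s}e^{-iH_2 s}\right]\ket{\varphi}\,\d s,
\end{align*}
since the integrand equals $e^{-iHt}$ at $s=0$ and $e^{-iH_1 t}e^{-iH_2 t}$ at $s=t$. Carrying out the product rule and using $H=H_1+H_2$, the $iH$ contribution from $\tfrac{d}{ds}e^{-iH(t-s)}$ partly cancels the $-iH_1$ contribution from $\tfrac{d}{ds}e^{-iH_1 s}$, collapsing the derivative into a single commutator,
\begin{align*}
\frac{d}{ds}\!\left[e^{-iH(t-s)}\,e^{-iH_1 s}e^{-iH_2 s}\right] = i\,e^{-iH(t-s)}\,[H_2,e^{-iH_1 s}]\,e^{-iH_2 s}.
\end{align*}

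Next I would expand the inner commutator by the same device: $[H_2,e^{-iH_1 s}]$ vanishes at $s=0$ and satisfies $\tfrac{d}{ds}[H_2,e^{-iH_1 s}] = i[H_1,H_2]e^{-iH_1 s}-iH_1[H_2,e^{-iH_1 s}]$, so solving this linear ODE with integrating factor $e^{iH_1 s}$ gives
\begin{align*}
[H_2,e^{-iH_1 s}] = i\int_0^s e^{-iH_1(s-\tau)}\,[H_1,H_2]\,e^{-iH_1\tau}\,\d\tau.
\end{align*}
Substituting, applying the result to $\ket{\varphi}$, and using the triangle inequality together with unitarity of $e^{-iH(t-s)}$ and $e^{-iH_1(s-\tau)}$ to peel off the outer factors yields a bound of the form
\begin{align*}
\bigl\|\bigl(e^{-iH_1 t}e^{-iH_2 t}-e^{-iHt}\bigr)\ket{\varphi}\bigr\| \le \int_0^t\!\!\int_0^s \bigl\|[H_1,H_2]\,U(s,\tau)\ket{\varphi}\bigr\|\,\d\tau\,\d s,
\end{align*}
where $U(s,\tau)$ is a product of $e^{-iH_1}$ and $e^{-iH_2}$ at times in $[0,t]$. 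Since $\{0\le\tau\le s\le t\}$ has area $t^2/2$, bounding the integrand by its supremum produces the required $\tfrac{t^2}{2}$ prefactor.

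The only delicate point I anticipate is matching the precise ordering $e^{-iH_2\tau_2}e^{-iH_1\tau_1}$ inside the supremum as written in the statement: the derivation sketched above most naturally produces the opposite ordering. The fix is cosmetic---differentiate the conjugate arrangement $e^{-iH_1 s}e^{-iH_2 s}\,e^{-iH(t-s)}$ and expand $[H_1,e^{-iH_2 s}]$ instead, which rearranges the surviving unitary factors into the desired order while leaving every other step of the argument (cancellation of $iH$ against $-iH_k$, the $[H_1,H_2]$-driven ODE, and the simplex area $t^2/2$) identical. Beyond this bookkeeping choice I do not expect any further obstacle.
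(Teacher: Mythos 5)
Your overall strategy is sound: the double application of the fundamental theorem of calculus (Duhamel twice) is exactly an elementary re-derivation of the variation-of-parameter identity that the paper simply imports from Childs et al.\ (their Proposition~15), and the $t^2/2$ prefactor from the area of the simplex and the commutator structure are correct. In that sense you recover the paper's argument from first principles rather than citing it.

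The gap is in your last paragraph. Your derivation produces the bound with $[H_1,H_2]\,e^{-iH_1\tau_2}e^{-iH_2\tau_1}\ket{\varphi}$ under the supremum, whereas the statement asserts $[H_1,H_2]\,e^{-iH_2\tau_2}e^{-iH_1\tau_1}\ket{\varphi}$. You claim the mismatch is ``cosmetic'' and that differentiating the conjugate arrangement $e^{-iH_1 s}e^{-iH_2 s}e^{-iH(t-s)}$ fixes it, but carrying that out gives
\begin{align*}
\frac{\d}{\d s}\Big[e^{-iH_1 s}e^{-iH_2 s}e^{-iH(t-s)}\Big] = -i\,e^{-iH_1 s}\,[H_1,e^{-iH_2 s}]\,e^{-iH(t-s)},
\end{align*}
and after expanding $[H_1,e^{-iH_2 s}]=-i\int_0^s e^{-iH_2(s-\tau)}[H_1,H_2]e^{-iH_2\tau}\d\tau$, the factors landing on $\ket{\varphi}$ are $e^{-iH_2\tau}e^{-iH(t-s)}$, with the \emph{full} evolution $e^{-iH(t-s)}$ on the right rather than a bare $e^{-iH_1\tau_1}$. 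So the two conjugate interpolations yield the orderings $e^{-iH_1\tau_2}e^{-iH_2\tau_1}$ and $e^{-iH_2\tau_2}e^{-iH(t-s)}$; neither is the stated $e^{-iH_2\tau_2}e^{-iH_1\tau_1}$, and your proposed fix does not actually produce it. This is not merely bookkeeping: in the paper's application (the proof of the deviation-from-quadratic lemma) the lemma is invoked with $H_1 = H'$ (Schr\"odinger Hamiltonian with the quadratic potential) and $H_2 = E$ (multiplication by the scalar residue), and the specific ordering $e^{-iE\tau_2}e^{-iH'\tau_1}\Phi_0$ is what lets the paper drop the unimodular multiplication $e^{-iE\tau_2}$ and apply the Sobolev-norm-growth bound to $\nabla\big(e^{-iH'\tau_1}\Phi_0\big)$. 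Your ordering $e^{-iH'\tau_1}e^{-iE\tau_2}\Phi_0$ would not plug into that argument unchanged, since multiplying by $e^{-iE\tau_2}$ before evolving alters the gradient norm. You therefore need to either match the cited formula exactly or re-verify the downstream estimate with your ordering; the current text asserts a fix that does not hold.
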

\begin{proof}
	By \cite[Proposition 15]{childs2019theory}, we have the variation-of-parameter formula
	\begin{align}
	\hspace{-2mm} e^{-iH_1 t}e^{-iH_2 t} =  e^{-iH t} + \int^t_0 \d \tau_1 \int^{\tau_1}_0 \d \tau_2~ e^{-iH(t-\tau_1)} e^{-iH_2 \tau_1} e^{-iH_2 \tau_2} [H_1, H_2] e^{-iH_2 \tau_2} e^{-iH_1 \tau_1}.
	\end{align}
Thus, for an arbitrary vector $\ket{\varphi}$, we have
	\begin{align}
		&\left(e^{-iH_1 t}e^{-iH_2 t}-  e^{-iH t}\right)\ket{\varphi}\nonumber\\
&\qquad=  \int^t_0 \d \tau_1 \int^{\tau_1}_0 \d \tau_2~ e^{-iH(t-\tau_1)} e^{-iH_2 \tau_1} e^{-iH_2 \tau_2} [H_1, H_2] e^{-iH_2 \tau_2} e^{-iH_1 \tau_1} \ket{\varphi}.
	\end{align}
	Since the spectral norm of the vector in the integrand is upper bounded by
	\begin{align}
		\sup_{\tau_1, \tau_2 \in [0,t]} \left\|[H_1, H_2] e^{-iH_2 \tau_2} e^{-iH_1 \tau_1}  \ket{\varphi}\right\|,
	\end{align}
	 and $\int^t_0 \d \tau_1 \int^{\tau_1}_0 \d \tau_2 = \frac{t^2}{2}$, we obtain the desired vector norm error bound \eqn{vector-norm-bound}.
\end{proof}

Second, we observe the following fact:
\begin{theorem}[{\cite[Theorem 2, informal]{bourgain1999growth}}] \label{thm:sobolev-norm-growth}
	For Schr\"odinger equations of the form
	\begin{align}
		i\frac{\partial}{\partial t} u + \Delta u + V(x,t)u = 0,
	\end{align}
	defined over an arbitrary finite-dimensional space with periodic boundary condition, let $u(x,t)$ be the solution at time $t$. If $V(x,t)$ is smooth in space and periodic in time, and the initial condition $u(x,0)$ is smooth, then we have
	\begin{align}
	\|\nabla u(t)\|_2 \le C (\log t)^\alpha  \|\nabla u(0)\|_2, 	
	\end{align}
	where $C$ and $\alpha$ are absolute constants.
\end{theorem}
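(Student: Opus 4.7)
The plan is to follow Bourgain's KAM-type approach. A naive energy estimate, based on differentiating $\|\nabla u\|_2^2$ in $t$ and using the equation together with integration by parts, produces only the bound $\frac{d}{dt}\|\nabla u\|_2 \le \|\nabla V\|_\infty\,\|u\|_2$ (the $\Delta$-term contributes something purely imaginary and drops out after taking $\operatorname{Im}$). Combined with conservation of $\|u\|_2$ under the unitary Schr\"odinger flow, this yields at best linear growth $\|\nabla u(t)\|_2 \lesssim (1+t)\,\|\nabla u(0)\|_2$, which is far from the claimed logarithmic bound. The logarithmic improvement must come from exploiting the extra structure that $V$ is both smooth in $x$ and periodic in $t$.

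First I would diagonalize the unperturbed part by expanding $u(t,\cdot)$ in the Laplacian eigenbasis on the torus, reducing the PDE to an infinite ODE system $i\,\dot{\hat u}_k = -|k|^2\,\hat u_k + \sum_{k'} V_{k-k'}(t)\,\hat u_{k'}$. Fourier-expanding $V_{j}(t) = \sum_{n} V_{j,n}\,e^{in\omega t}$ labels each coupling by a triple $(k,k',n)$ with associated small denominator $\omega_{k,k',n} := |k|^2 - |k'|^2 + n\omega$. Smoothness of $V$ in both $x$ and $t$ gives super-polynomial decay of $|V_{j,n}|$ in $|j|+|n|$, and this rapid decay is the main analytic input.

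Next I would carry out an iterative normal-form reduction: apply a unitary gauge transformation that cancels, to first order, every ``non-resonant'' coupling with $|\omega_{k,k',n}|$ bounded below by a prescribed threshold, at the cost of a smaller residual potential. Iterating this scheme $N$ times leaves a remainder whose operator norm decreases geometrically in $N$, provided the resonant block can be controlled. A diophantine/combinatorial counting of resonant triples, together with super-polynomial decay of the matrix elements, allows one to optimize $N \sim \log t$ so that a final energy inequality on the transformed equation gives exactly the $(\log t)^\alpha$ bound for $\|\nabla u\|_2$, pulled back to the original variables (the gauge transformations are uniformly bounded in the $H^1$ operator norm).

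The main obstacle is the small-divisor analysis: near-resonant triples $(k,k',n)$ cannot be removed by the gauge transformation, and one must show that they form a structurally sparse set whose cumulative contribution to $\|\nabla u\|_2$ is sub-polynomial in $t$. Proving convergence of the normal-form series while controlling these resonance clusters uniformly in $t$ is the technical heart of Bourgain's argument, and it is what determines the exponent $\alpha$ and the constant $C$ in terms of the dimension and the chosen regularity class of $V$. Since our application only uses the existence of such universal $C$ and $\alpha$, I would invoke the theorem as a black box from \cite{bourgain1999growth,bourgain1999growth2} rather than reconstruct the full small-divisor estimates here.
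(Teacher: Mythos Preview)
The paper does not prove this theorem at all; it is stated as an informal citation of \cite[Theorem 2]{bourgain1999growth} and used purely as a black box to derive \cor{growth}. Your proposal ultimately does the same thing---after sketching why the naive energy estimate gives only linear growth and outlining Bourgain's KAM/normal-form strategy, you explicitly say you would invoke the result as a black box rather than reconstruct the small-divisor analysis. So your approach and the paper's coincide: neither supplies a proof, and both defer to Bourgain.

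Your expository sketch of the mechanism (Fourier diagonalization, iterated gauge transformations removing non-resonant couplings, optimizing the number of normal-form steps as $N\sim\log t$) is a reasonable high-level summary of the ideas in \cite{bourgain1999growth,bourgain1999growth2}, and it correctly identifies the small-divisor/resonance control as the technical core determining $\alpha$. Since the paper only needs the existence of absolute constants $C,\alpha$ (applied in \cor{growth} with a specific quadratic potential), this level of detail is already more than the paper provides.
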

\begin{remark}
	The original Theorem 2 in \cite{bourgain1999growth} actually proved the logarithmic growth in Sobolev norm $\|u(t)\|_{H^s}$ for all $s>0$,  while we only cite the special case $s=1$. The $\|\nabla u(0)\|_2$ term was absorbed in the constant factor in the original statement, while we feel necessary to expand it out because it may introduce dependence on $n$ and $r_0$. It is worth noting that the theorem was proven for two-dimensional Schr\"odinger equations with quasi-periodic potential field $V(x,t)$, while it has been made clear in the context that this result holds for arbitrary-dimensional cases if $V$ is periodic. Bourgain also explicitly discussed the periodic-$V$ case in \cite{bourgain1999growth2}.
\end{remark}

\begin{corollary}\label{cor:growth}
	For a quadratic function of the form $f_q = \frac{1}{2} (\vect{x} - \tilde{\vect{x}})^T \mathcal{H}(\vect{x} - \tilde{\vect{x}}) + F$ where $\mathcal{H}$ is a Hermitian matrix and $F$ is a constant, consider the Schr\"odinger equation of the form
	\begin{align}
		i\frac{\partial}{\partial t} \Phi = \left[-\frac{r^2_0}{2} \Delta  + \frac{1}{r^2_0}f_q \right]\Phi,
	\end{align}
	with periodic boundary conditions and initial condition $\Phi_0(x)$ defined in \eqref{eqn:ground_state_Phi0} (i.e., the initial state of the quantum simulation \algo{QuantumSimulation}), then we have
	\begin{align}
		\|\nabla \Phi(t)\|_2 \le C\sqrt{\frac{n}{r_0}}(\log t)^\alpha,
	\end{align}
	where $C$ and $\alpha$ are absolute constants.
\end{corollary}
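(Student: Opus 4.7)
The plan is to deduce this estimate directly from \thm{sobolev-norm-growth} (Bourgain's logarithmic Sobolev growth bound) via a change of variables, followed by an explicit second-moment calculation of $\|\nabla\Phi_0\|_2$. First I would set $\vect{y}=\vect{x}/r_0$, $s=t/2$, and $u(s,\vect{y}) := \Phi(t,\vect{x})$. A routine chain-rule computation converts the given Schr\"odinger equation into the standard Bourgain form
\begin{align}
	i\,\partial_s u + \Delta_y u + V(\vect{y})\,u = 0, \qquad V(\vect{y}) = -\tfrac{2}{r_0^2}\,f_q(r_0\vect{y}),
\end{align}
whose potential $V$ is time-independent (hence trivially periodic in time) and smooth (in fact a quadratic plus a constant) in $\vect{y}$. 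The periodic boundary condition carries over from $\Omega$ to the rescaled domain, and the rescaled initial datum $u(0,\vect{y}) = r_0^{n/2}\Phi_0(r_0\vect{y})$ is a smooth centered Gaussian.

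Next I would apply \thm{sobolev-norm-growth} in the $(s,\vect{y})$-variables to obtain $\|\nabla_y u(s)\|_2 \le C(\log s)^\alpha\,\|\nabla_y u(0)\|_2$, and then undo the scaling. Under the dilation $\vect{y}=\vect{x}/r_0$, both sides of Bourgain's inequality pick up the \emph{same} Jacobian factor $r_0^{1-n/2}$ from the change of variables in the $L^2$-norm of the gradient, so the inequality transfers unchanged to the $(t,\vect{x})$-variables: $\|\nabla_x\Phi(t)\|_2 \le C(\log(t/2))^\alpha\,\|\nabla_x\Phi(0)\|_2$, which is absorbed into $C(\log t)^\alpha$ up to an adjustment of $C$.

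The remaining step is a direct second-moment computation on the initial wavepacket. Differentiating $\Phi_0$ as defined in \eqref{eqn:ground_state_Phi0} gives $\nabla_x\Phi_0(\vect{x}) = -\tfrac{\vect{x}-\tilde{\vect{x}}}{2r_0^2}\,\Phi_0(\vect{x})$, hence
\begin{align}
	\|\nabla_x\Phi_0\|_2^2 \;=\; \frac{1}{4r_0^4}\int|\vect{x}-\tilde{\vect{x}}|^2\,|\Phi_0(\vect{x})|^2\,\d\vect{x} \;=\; \frac{n}{4r_0^2},
\end{align}
using that $|\Phi_0|^2$ is the density of $\mathcal{N}(\tilde{\vect{x}},r_0^2 I)$ with second moment $nr_0^2$. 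Combining this with the rescaled Bourgain inequality from the previous paragraph yields a bound of the form $C'\sqrt{n}\,(\log t)^\alpha/r_0$, which matches the claimed estimate up to absorbing an appropriate $r_0$- and $\|\mathcal{H}\|$-dependent factor into the constant $C$.

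The main obstacle I anticipate is the verification of Bourgain's hypotheses on the truncated torus: neither $f_q$ nor the centered Gaussian $\Phi_0$ is literally periodic on $\Omega$, so both must be smoothly mollified near $\partial\Omega$ following the scheme already announced in the footnote attached to \lem{simulation}. One must then argue that this mollification perturbs $\|\nabla\Phi(t)\|_2$ negligibly over the short evolution windows used by \algo{PGD+QS} and \algo{PAGDQS}. A secondary subtlety is tracking how Bourgain's constant depends on seminorms of $V$ (and hence on $\|\mathcal{H}\|$ and $r_0$); this is what ultimately reconciles the $1/r_0$ arising from the initial gradient norm with the $1/\sqrt{r_0}$ appearing in the stated corollary, and requires a careful reading of~\cite{bourgain1999growth,bourgain1999growth2}.
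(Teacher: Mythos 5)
Your approach is essentially the paper's: rescale the PDE into Bourgain's normal form (you scale $\vect{x}\mapsto\vect{x}/r_0$ and $t\mapsto t/2$, the paper scales $\vect{x}\mapsto r_0\vect{x}/\sqrt{2}$ and leaves $t$ alone --- both work), invoke \thm{sobolev-norm-growth}, and compute $\|\nabla\Phi_0\|_2$ explicitly. Your calculation $\|\nabla\Phi_0\|_2^2 = n/(4r_0^2)$ is correct, and so is your observation that the Jacobian of the spatial dilation appears identically on both sides of Bourgain's inequality and therefore cancels, leaving $\|\nabla\Phi(t)\|_2 \le C(\log t)^\alpha\|\nabla\Phi_0\|_2 = O(\sqrt{n}(\log t)^\alpha / r_0)$.

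The ``reconciliation'' you are looking for in your last paragraph does not exist: the $\sqrt{r_0}$ mismatch is not hidden in the Bourgain constant. In the paper's own derivation of $\|\nabla u(0)\|_2$, the step $\frac{r_0}{\sqrt{2}}(\cdots)^{1/2} = \frac{\sqrt{r_0}}{2^{1/4}}(\cdots)^{1/2}$ uses the one-dimensional Jacobian $(\sqrt{2}/r_0)^{1/2}$ where the $n$-dimensional substitution actually contributes $(\sqrt{2}/r_0)^{n/2}$; carrying the full Jacobian through and then dividing it out against the analogous factor in $\|\nabla u(t)\|_2$ recovers exactly your $\sqrt{n}/r_0$. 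So your bound is the correct one, and the stated $\sqrt{n/r_0}$ is off by a factor $\sqrt{r_0}$ (the stated bound is strictly stronger since $r_0<1$). This does not invalidate the downstream analysis --- the extra $r_0^{-1/2}$ propagates to the second term of the bound in \lem{deviation-from-quadratic} and is absorbed into the choice of the constant $C_0$ in \eqn{parameter-choice-2}, at the cost of a slightly smaller $r_0$ --- but you are right that the corollary as stated is not what the computation delivers. Your handling of the torus mollification caveat is consistent with what the footnote to \lem{simulation} already promises.
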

\begin{proof}
	Note that the constant $F$ just adds a global phase to the solution which does not influence either $\|\Phi(t)\|_2$ or $\|\nabla \Phi(t)\|$, and the Schr\"odinger equation is translation-invariant under $\vect{x} \to \vect{x}- \tilde{\vect{x}}$, we may assume without loss of generality that $f_q = \frac{1}{2}\vect{x}^T \mathcal{H}\vect{x}$.
	
	Define a new function $u(\vect{x},t) = \Phi \left(\frac{r_0\vect{x}}{\sqrt{2}}, t\right)$, and it is straightforward to verify that
	\begin{align}
			i\frac{\partial}{\partial t} u + \Delta u - \frac{1}{r^2_0}f_q\left(\frac{r_0 \vect{x}}{\sqrt{2}}\right)u = 0.
	\end{align}
	Note that the function $f_q(\vect{x})$ is quadratic, so $\frac{1}{r^2_0}f_q\left(\frac{r_0 \vect{x}}{\sqrt{2}}\right) = \frac{1}{2}f_q(\vect{x})$, which is a constant multiple of $f_q$. Thus, we may directly invoke \thm{sobolev-norm-growth} to yield
	\begin{align}
		\|\nabla u(t)\|_2 \le C\left(\log t\right)^\alpha \|\nabla u(0)\|_2,
	\end{align}
	where the $\|\nabla u(0)\|_2$ can be directly calculated as follows:
	\begin{align}
		\|\nabla u(0)\|_2 &\le \left(\sum^n_{j=1} \int_{\R^n} |u_{x_j}(\vect{x},0)|^2 ~\d \vect{x}\right)^{1/2} = \frac{r_0}{\sqrt{2}}\left(\sum^n_{j=1} \int_{\R^n} |(\Phi_0)_{x_j}(r_0\vect{x}/\sqrt{2},0)|^2 ~\d \vect{x}\right)^{1/2}\\
		 &= \frac{\sqrt{r_0}}{2^{1/4}}\left(\sum^n_{j=1} \int_{\R^n} |(\Phi_0)_{x_j}(\vect{x},0)|^2 ~\d \vect{x}\right)^{1/2}\\
		&=  \frac{\sqrt{r_0}}{2^{1/4}}  \frac{1}{2r^2_0}\left(\sum^n_{j=1} \int_{\R} \frac{1}{\sqrt{2\pi} r_0} e^{-(x_j - \tilde{x}_j)^2/2r^2_0} (x_j - \tilde{x}_j)^2 ~\d x_j\right)^{1/2} = \frac{1}{2^{5/4}} \sqrt{\frac{n}{r_0}}.
	\end{align}
	Absorbing the $2^{-5/4}$ factor into the absolute constant $C$, we complete the proof.
\end{proof}

Now, we are ready to prove \lem{deviation-from-quadratic}, our result of bounding the deviation from perfect Gaussian in quantum evolution.

\DeviationFQuadratic*

\begin{proof}
	Define the following (Hermitian) operators:
	\begin{align}
		A = -\frac{r^2_0}{2}\Delta, ~~B = \frac{1}{r^2_0} f,~~ B' = \frac{1}{r^2_0} f_q,
	\end{align}
	\begin{equation}\label{eqn:def_H_E}
		H = A+B, ~H' = A+B', ~E = H - H' = \frac{1}{r^2_0}(f-f_q).
	\end{equation}
	
	Let $\ket{\Phi(t)} = e^{-iHt}\ket{\Phi_0}$ be the wave function generated by the quantum simulation with potential field $f$ and evolution time $t$, and similarly, $\ket{\Phi'(t)} := e^{-iH't}\ket{\Phi_0}$ as the wave function generated by the quantum simulation with potential field $f_{q}$ and the evolution time $t$.
	
	By \lem{vector-norm-bound}, and notice that $E$ is a scalar-valued function, we have
	\begin{align}
			\left\|e^{-iE t_e}\ket{\Phi'(t_e)} - \ket{\Phi(t_e)}\right\|_2 &\le \frac{t^2_e}{2}\sup_{\tau_1, \tau_2 \in [0,t]} \left\|[H', E] e^{-iE \tau_2} e^{-iH' \tau_1}  \ket{\Phi_0}\right\|_2 \\
			& = \frac{t^2_e }{2}\sup_{\tau_1 \in [0,t]} \left\|[H', E] e^{-iH' \tau_1}  \ket{\Phi_0}\right\|_2.
	\end{align}
	Denote $\ket{\Psi(\tau_1)} := e^{-iH' \tau_1}  \ket{\Phi_0}$. Note that $[H', E] = [A+B', E]$ and $B'$ commutes with $E$, we have
	\begin{align}
		\sup_{\tau_1 \in [0,t]} \left\|[H', E] \Psi(\tau_1)\right\|_2 &= \frac{1}{2}\sup_{\tau_1 \in [0,t]} \left\| [-\Delta, f-f_q] \Psi(\tau_1) \right\|_2 \\
		&= \frac{1}{2}\sup_{\tau_1 \in [0,t]} \left\| -\Delta(f-f_q) \Psi(\tau_1) - 2\nabla (f-f_q) \cdot \nabla \Psi(\tau_1)\right\|_2\\
		&\le \frac{1}{2}\|\Delta(f-f_q) \|_\infty +  \|\nabla(f-f_q) \|_\infty \|\nabla \Psi(\tau_1) \|_2.
	\end{align}
	The second equality follows from the fact that $[-\Delta, g]\varphi = -(\Delta g) \varphi - 2\nabla g \cdot \nabla \varphi$ for smooth functions $g$ and $\varphi$. The last step follows from the triangle inequality (and the fact that $\|\Psi(\tau_1)\| = 1$).
	By the $\rho$-Hessian Lipschitz condition of $f$, we have
	\begin{align}
|\Delta (f(\vect{x})-f_q(\vect{x}))|&= \left|\tr(\nabla^{2} f(\vect{x}) - \nabla^{2} f_q(\vect{x}))\right| = \left|\tr(\nabla^{2} f(\vect{x}) - \nabla^{2} f(\tilde{\vect{x}}))\right|\\
  &\le n \|\nabla^{2} f - \nabla^{2} f(\tilde{\vect{x}})\| \le n^{3/2}\rho M,
	\end{align}
where the second equality holds because $f_q$ is a quadratic form and $\nabla^{2}f_q(\vect{x}) = \mathcal{H} =  \nabla^{2} f(\tilde{\vect{x}})$. Note that the diameter of the hypercube domain is $n^{1/2}M$, and the last step follows from the $\rho$-Hessian Lipschitz condition. It turns out that 
\begin{align}
	\|\Delta(f-f_q)\|_{\infty} \le n^{3/2}\rho M.
\end{align}

Next, we bound the $L^\infty$-norm of the gradient of $f-f_q$:
	\begin{align}
		\|\nabla f - \nabla f_q\|_\infty &\le \sup_{\vect{x}} \|\nabla f(\vect{x}) - \nabla f_q(\vect{x})\| = \sup_{\vect{x}} \|\nabla f (\vect{x}) - \mathcal{H}(\vect{x} - \tilde{\vect{x}})\| \\
		& \le \sup_{\vect{x}} \|\nabla f (\vect{x})\| + \sup_{\vect{x}} \|\mathcal{H}(\vect{x} - \tilde{\vect{x}})\|,\label{eqn:part_0}
	\end{align}
	where the last step uses the triangle inequality. Note that $\tilde{\vect{x}}$ is a stationary point of $f$, so $\nabla f(\tilde{\vect{x}}) = 0$. By the $\ell$-smoothness condition of $f$, we obtain
	\begin{align}\label{eqn:part_a}
		 \sup_{\vect{x}} \|\nabla f (\vect{x})\| =  \sup_{\vect{x}} \|\nabla f (\vect{x}) - \nabla f(\tilde{\vect{x}}) \| \le \ell \sup_{\vect{x}}\|\vect{x} - \tilde{\vect{x}}\| \le \ell n^{1/2} M.
	\end{align}
	
	Meanwhile, the $\ell$-smoothness of $f$ implies that $\|\nabla^2 f(\vect{x})\| \le \ell$ for all $\vect{x} \in \R^n$, therefore $\|\mathcal{H}\|\le \ell$ and 
	\begin{align}\label{eqn:part_b}
		 \sup_{\vect{x}} \|\mathcal{H}(\vect{x} - \tilde{\vect{x}})\| \le \ell n^{1/2}M.
	\end{align}

	Plugging \eqn{part_a} and \eqn{part_b} to \eqn{part_0}, we end up with 
	\begin{align}
		\|\nabla (f -f_q)\|_\infty \le 2\ell n^{1/2}M.
	\end{align}
	
	The upper bound for $\sup_{\tau_1} \|\nabla \Psi(\tau_1) \|_2$ is given by \cor{growth}. Combining all three bounds, we end up with
	\begin{align}\label{eqn:trotter_bound}
			\left\|e^{-iE t_e}\ket{\Phi'(t_e)} - \ket{\Phi(t_e)}\right\|_2 \le \left( \frac{\sqrt{n}\rho}{2} + \frac{2C \ell}{\sqrt{r_0}} (\log t_e)^\alpha \right)\frac{n M t^2_e}{2}.
	\end{align}

	In what follows, we will simply write $\Psi'$ for $\Psi'(t_e)$. We also denote $\ket{\Psi''} := e^{-iEt_e} \ket{\Psi'}$. Note that $e^{-iEt_e}$ is actually a scalar function with modulus $1$, hence the two wave functions $\ket{\Psi'}$ and $\ket{\Psi''}$ yield the same probability density, i.e., $|\Psi'|^2 = |\Psi''|^2$. By the definition of total variation distance,
	\begin{align}
		\hspace{-2.5mm}TV(\mathbb{P}_{\xi}, \mathbb{P}_{\xi'}) &= 	TV(|\Psi|^2, |\Psi''|^2) \\
		&= \frac{1}{2} \int_{x\in\R^{n}} \left| \Psi \overline{\Psi} - \Psi'' \overline{\Psi''}\right| \d x\\
		& \le \frac{1}{2} \int_{x\in\R^{n}} \left| (\Psi - \Psi'')\overline{\Psi} \right| \d x + \frac{1}{2} \int_{x\in\R^{n}} \left| \Psi'' (\overline{\Psi} - \overline{\Psi''}) \right| \d x \\
		& \le \left(\int_{x\in\R^{n}} |\Psi - \Psi''|^2 \d x\right)^{1/2} \le \left( \frac{\sqrt{n}\rho}{2} + \frac{2C \ell}{\sqrt{r_0}} (\log t_e)^\alpha \right)\frac{n M t^2_e}{2}.
	\end{align}
\end{proof}
\subsection{Variance of Gaussian Wave Packets}\label{append:h-tail}
Although the variance of the Gaussian wave packet $\sigma(\lambda;t)$ is explicitly given in \eqn{variance_standard}, it is a bit heavy to use in analysis. In this subsection, we prove several lemmas that can be utilized to estimate the variance $\sigma(\lambda;t)$. Based on these lemmas, it is then possible to quantify the performance of \algo{PGD+QS}.

\begin{lemma}\label{lem:variance_estimate}
    When $\lambda > 0$,
    \begin{equation}
    \min \Big\{1, \frac{1}{2\alpha}\Big\} \le \sigma(t;\lambda) \le \max \Big\{1, \frac{1}{2\alpha}\Big\}.
    \end{equation}
    When $\lambda < 0$, let $\alpha = \sqrt{-\lambda}$,
    \begin{equation}
    \frac{1}{\sqrt{2}}\varphi(t;\alpha) \le \sigma(t;\lambda) \le \varphi(t;\alpha),
    \end{equation}
    with $\varphi(t;\alpha) = \frac{1}{2\alpha} \sinh(\alpha t) + \cosh(\alpha t)$.
\end{lemma}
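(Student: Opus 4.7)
The plan is to reduce \eqref{eqn:variance_standard} in each case to a simple closed form in which the bounds can be read off essentially by inspection.

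For the case $\lambda>0$ with $\alpha=\sqrt{\lambda}$, I would rewrite
\begin{align*}
\sigma^2(t;\lambda) = \frac{1+4\alpha^2}{8\alpha^2}-\frac{1-4\alpha^2}{8\alpha^2}\cos 2\alpha t,
\end{align*}
which is an affine function of $\cos 2\alpha t\in[-1,1]$. Its extrema therefore occur exactly at the endpoints: substituting $\cos 2\alpha t=1$ yields $\sigma^2=1$, while $\cos 2\alpha t=-1$ yields $\sigma^2=1/(4\alpha^2)$. Taking square roots (and noting that which endpoint is the maximum and which the minimum depends on the sign of $1-4\alpha^2$) immediately gives $\min\{1,1/(2\alpha)\}\le \sigma(t;\lambda)\le \max\{1,1/(2\alpha)\}$.

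For the case $\lambda<0$ with $\alpha=\sqrt{-\lambda}$, I would first recognize the hyperbolic structure of the numerator in \eqref{eqn:variance_standard}: dividing numerator and denominator by $4e^{2\alpha t}$ and using $\sinh(\alpha t)=(e^{\alpha t}-e^{-\alpha t})/2$, $\cosh(\alpha t)=(e^{\alpha t}+e^{-\alpha t})/2$ gives the clean identity
\begin{align*}
\sigma^2(t;\lambda) \;=\; \frac{\sinh^2(\alpha t)}{4\alpha^2}+\cosh^2(\alpha t).
\end{align*}
Expanding $\varphi(t;\alpha)=\sinh(\alpha t)/(2\alpha)+\cosh(\alpha t)$ then yields
\begin{align*}
\varphi^2(t;\alpha) \;=\; \sigma^2(t;\lambda) + \frac{\sinh(\alpha t)\cosh(\alpha t)}{\alpha}.
\end{align*}
Since the cross term is non-negative for $t\ge 0$, this gives the upper bound $\sigma(t;\lambda)\le\varphi(t;\alpha)$ for free.

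For the matching lower bound, it suffices to show $2\sigma^2\ge \varphi^2$, which after the identity above reduces to $\sigma^2\ge \sinh(\alpha t)\cosh(\alpha t)/\alpha$. This is exactly the AM-GM inequality applied to the two non-negative summands $\sinh^2(\alpha t)/(4\alpha^2)$ and $\cosh^2(\alpha t)$ that make up $\sigma^2$, since $2\sqrt{(\sinh^2(\alpha t)/(4\alpha^2))\cdot\cosh^2(\alpha t)}=\sinh(\alpha t)\cosh(\alpha t)/\alpha$. I do not anticipate any real obstacle: the only non-mechanical step is spotting the $\sinh/\cosh$ form in the numerator for $\lambda<0$, after which both bounds follow from a single AM-GM.
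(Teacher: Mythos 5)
Your proof is correct and takes essentially the same route as the paper's: in both cases the key is to recognize that for $\lambda<0$ the variance factors as $\sigma^2 = a^2+b^2$ with $a=\sinh(\alpha t)/(2\alpha)$, $b=\cosh(\alpha t)$, so that $\varphi = a+b$, after which the two bounds are exactly the statements $2ab \le a^2+b^2$ (your AM-GM step) and $ab\ge 0$ — the paper simply packages these as $\tfrac{a+b}{2}\le\sqrt{\tfrac{a^2+b^2}{2}}\le\tfrac{a+b}{\sqrt{2}}$. Your treatment of the $\lambda>0$ case via endpoint evaluation of the affine function of $\cos 2\alpha t\in[-1,1]$ is also the intended argument (the paper's "$\cos 2\alpha t\in[0,1]$" appears to be a typo for $[-1,1]$).
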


\begin{proof}
    The first estimate follows from $\cos2\alpha t \in [0,1]$, while the second estimate follows from the inequality
  \begin{align}
  \frac{a+b}{2} \le \sqrt{\frac{a^2+b^2}{2}} \le \frac{a+b}{\sqrt{2}}.
  \end{align}
\end{proof}

\begin{lemma}\label{lem:h-tail}
When $\lambda<0$,
\begin{align}
\sigma^{2}(t;\lambda)\geq 1+\frac{t^{2}}{4}.
\end{align}
\end{lemma}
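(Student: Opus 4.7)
The plan is to reduce the messy expression in \eqn{variance_standard} to a clean form in hyperbolic functions, and then apply two elementary one-variable inequalities.

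First, I would set $\alpha = \sqrt{-\lambda} > 0$ and rewrite the variance by pulling a factor of $e^{-\alpha t}$ inside each squared term in the numerator:
\begin{align}
\sigma^{2}(t;\lambda)
&= \frac{(1-e^{2\alpha t})^{2}+4\alpha^{2}(1+e^{2\alpha t})^{2}}{16\alpha^{2}e^{2\alpha t}} \\
&= \frac{(e^{-\alpha t}-e^{\alpha t})^{2}}{16\alpha^{2}} + \frac{(e^{-\alpha t}+e^{\alpha t})^{2}}{4} \\
&= \frac{\sinh^{2}(\alpha t)}{4\alpha^{2}} + \cosh^{2}(\alpha t).
\end{align}
This identity is just bookkeeping — it is essentially the same rearrangement used (implicitly) to derive $\varphi(t;\alpha)$ in \lem{variance_estimate}, so there is no real obstacle here.

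Next, I would invoke two standard inequalities that hold for all real $x \geq 0$: namely $\sinh(x) \geq x$ (by comparing Taylor series term-by-term, since all coefficients in the series for $\sinh$ are nonnegative) and $\cosh(x) \geq 1$. Applying the first with $x = \alpha t \geq 0$ yields
\begin{equation}
\frac{\sinh^{2}(\alpha t)}{4\alpha^{2}} \geq \frac{(\alpha t)^{2}}{4\alpha^{2}} = \frac{t^{2}}{4},
\end{equation}
and applying the second gives $\cosh^{2}(\alpha t) \geq 1$. Summing these two bounds gives $\sigma^{2}(t;\lambda) \geq 1 + t^{2}/4$, which is exactly the claim.

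I do not anticipate any real obstacle — the whole argument is just the hyperbolic rewriting plus two elementary convexity inequalities. The only thing to be a little careful about is the edge case $\alpha \to 0^{+}$, but this is consistent: as $\alpha \to 0$, $\sinh^{2}(\alpha t)/(4\alpha^{2}) \to t^{2}/4$ and $\cosh^{2}(\alpha t) \to 1$, recovering the $\lambda = 0$ case of \eqn{variance_standard} with equality, which also serves as a useful sanity check that the bound is tight in the limit.
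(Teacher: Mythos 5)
Your proof is correct, and it takes a genuinely different (and tidier) route than the paper's. The paper keeps the numerator as a single lump, substitutes $\mu = 2\alpha t$, and establishes $e^{\mu}+e^{-\mu}\geq 2+\mu^{2}$ (equivalently $\cosh\mu\geq 1+\mu^2/2$) by Taylor expansion with Lagrange remainder; it then substitutes this back and discards a leftover positive $16\alpha^4t^2$ term. You instead factor $e^{-\alpha t}$ into each squared piece to get the exact additive decomposition $\sigma^{2}(t;\lambda)=\sinh^{2}(\alpha t)/(4\alpha^{2})+\cosh^{2}(\alpha t)$, and bound each summand separately with $\sinh x\geq x$ and $\cosh x\geq 1$ (for $x\geq 0$, which holds since $\alpha,t\geq 0$). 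Both arguments are elementary and correct, but yours has two small advantages: the sum-of-squares form is an identity (not just a bound), so it is clear that the $t^{2}/4$ comes entirely from the $\sinh^2$ piece and the $1$ from the $\cosh^2$ piece; and your two inequalities are term-by-term Taylor comparisons needing no remainder bookkeeping. Your $\alpha\to 0^{+}$ consistency check with the $\lambda=0$ case of \eqn{variance_standard} is also sound and shows the bound is tight in that limit.
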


\begin{proof}
Recall \eqn{variance_standard} that $\sigma(t;\lambda)$ equals to:
\begin{align}
\sigma^{2}(t;\lambda)=\frac{(1-e^{2\alpha t})^2 + 4\alpha^2(1+e^{2\alpha t})^2}{16\alpha^2 e^{2\alpha t}},
\end{align}
in which $\alpha = \sqrt{-\lambda}$. The equation above can be converted to:
\begin{align}
\sigma^{2}(t;\lambda)
&=\frac{(1+4\alpha^{2})e^{4\alpha t}+(1+4\alpha^{2})-2(1-4\alpha^{2})e^{2\alpha t}}{16\alpha^{2}e^{2\alpha t}}\\
&=\frac{(1+4\alpha^{2})e^{2\alpha t}+(1+4\alpha^{2})e^{-2\alpha t}-2(1-4\alpha^{2})}{16\alpha^{2}}.
\end{align}
We denote $\mu:=2\alpha t$. Note that $\mu>0$. By the Taylor expansion of $e^{\mu}$ with Lagrange form of remainder, there exists real numbers $\zeta,\xi\in (0,\mu)$ such that
\begin{align}
e^{\mu}&=1+\mu+\frac{\mu^{2}}{2}+\frac{\mu^{3}}{6}+\frac{e^{\zeta}}{24}\mu^{4}; \\
e^{-\mu}&=1-\mu+\frac{\mu^{2}}{2}-\frac{\mu^{3}}{6}+\frac{e^{-\xi}}{24}\mu^{4}.
\end{align}
Adding these two equations, we have
\begin{align}
e^{\mu}+e^{-\mu}\geq 2+\mu^{2}+\frac{\mu^{4}}{24}(e^{\zeta}+e^{-\xi})\geq 2+\mu^{2}+\frac{\mu^{4}}{24}(1+e^{-\mu})\geq 2+\mu^{2}.
\end{align}
In other words,
\begin{align}
e^{2\alpha t}+e^{-2\alpha t}\geq 2+(2\alpha t)^{2},
\end{align}
which results in
\begin{align}
\frac{(1+4\alpha^{2})e^{2\alpha t}+(1+4\alpha^{2})e^{-2\alpha t}-2(1-4\alpha^{2})}{16\alpha^{2}}
&\geq\frac{(1+4\alpha^{2})(2+4\alpha^{2}t^{2})-2(1-4\alpha^{2})}{16\alpha^{2}}\\
&\geq\frac{16\alpha^{2}+4\alpha^{2}t^{2}}{16\alpha^{2}}\\
&=1+\frac{t^{2}}{4};
\end{align}
or equivalently,
\begin{align}
\sigma^{2}(t;\lambda)\geq 1+\frac{t^{2}}{4}.
\end{align}
\end{proof}

In the proof of \prop{QS-effectiveness}, we will also use the following fact about multivariate Gaussian distributions:
\begin{lemma}[{\cite[Proposition 1]{hsu2012tail}}]\label{lem:positive-tail}
Let $A\in \mathbb{R}^{m\times n}$ be a matrix, and let $\Sigma:=A^{T}A$. Let $\vect{x}=(x_{1},\cdots,x_{n})$ be an isotropic multivariate Gaussian random vector with mean zero. For all $t>0$:
\begin{align}
\mathbb{P}\Big(\|A\vect{x}\|^{2}>\tr(\Sigma)+2\sqrt{\tr(\Sigma^{2})t}+2\|\Sigma\|t\Big)\leq e^{-t}.
\end{align}
\end{lemma}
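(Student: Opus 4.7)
Since this is a standard sub-exponential tail bound for Gaussian quadratic forms, the approach is the classical Chernoff/Laplace-transform argument applied to $Z := \|A\vect{x}\|^2 = \vect{x}^T \Sigma \vect{x}$. First I would diagonalize $\Sigma = U^T \operatorname{diag}(\lambda_1,\ldots,\lambda_n) U$ with $\lambda_i \ge 0$, and observe that because $\vect{x}$ is isotropic Gaussian, $\vect{y} := U\vect{x} \sim \mathcal{N}(0,I_n)$, so
\begin{align}
Z \;=\; \sum_{i=1}^n \lambda_i y_i^2,
\end{align}
a weighted sum of independent $\chi^2_1$ variables. In particular $\mathbb{E}[Z] = \tr(\Sigma)$, $\sum_i \lambda_i^2 = \tr(\Sigma^2)$, and $\max_i \lambda_i = \|\Sigma\|$.

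Next I would compute the moment generating function: for any $\eta \in [0, \tfrac{1}{2\|\Sigma\|})$,
\begin{align}
\mathbb{E}\bigl[e^{\eta Z}\bigr] \;=\; \prod_{i=1}^n (1-2\eta\lambda_i)^{-1/2}, \qquad
\log \mathbb{E}\bigl[e^{\eta Z}\bigr] \;=\; -\tfrac{1}{2}\sum_i \log(1-2\eta\lambda_i).
\end{align}
The key analytic step is the elementary inequality $-\tfrac{1}{2}\log(1-u) - \tfrac{u}{2} \le \tfrac{u^2/4}{1-u}$ for $u \in [0,1)$, which after summing over $i$ yields
\begin{align}
\log \mathbb{E}\bigl[e^{\eta(Z-\tr \Sigma)}\bigr]
\;\le\; \frac{\eta^2 \,\tr(\Sigma^2)}{1 - 2\eta\|\Sigma\|}.
\end{align}
This identifies $Z-\tr(\Sigma)$ as sub-exponential with Bernstein parameters $(\sqrt{2\tr(\Sigma^2)},\,2\|\Sigma\|)$.

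Finally, Markov applied to $e^{\eta(Z-\tr\Sigma)}$ gives, for every admissible $\eta>0$ and $u>0$,
\begin{align}
\mathbb{P}\bigl(Z - \tr(\Sigma) > u\bigr)
\;\le\; \exp\!\Bigl(-\eta u + \tfrac{\eta^2 \tr(\Sigma^2)}{1-2\eta\|\Sigma\|}\Bigr).
\end{align}
Setting $u := 2\sqrt{\tr(\Sigma^2)\, t} + 2\|\Sigma\|\, t$ and choosing
\begin{align}
\eta \;=\; \frac{\sqrt{t}}{\sqrt{\tr(\Sigma^2)} + 2\|\Sigma\|\sqrt{t}}
\end{align}
(which lies in $[0,\tfrac{1}{2\|\Sigma\|})$) makes the exponent collapse to $-t$ after a short algebraic check, delivering the claimed bound $e^{-t}$. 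The main technical obstacle is precisely this last optimization: one must verify that this specific $\eta$ simultaneously cancels both the $\sqrt{t}$ and linear-in-$t$ terms in $u$, which is where the seemingly unusual form $2\sqrt{\tr(\Sigma^2)t} + 2\|\Sigma\| t$ of the deviation comes from; everything else is standard Chernoff technology.
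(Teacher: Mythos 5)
Your proposal is correct: the diagonalization to a weighted $\chi^2$ sum, the MGF bound via $-\tfrac{1}{2}\log(1-u)-\tfrac{u}{2}\le \tfrac{u^2/4}{1-u}$, and the choice $\eta=\sqrt{t}/(\sqrt{\tr(\Sigma^2)}+2\|\Sigma\|\sqrt{t})$ do collapse the Chernoff exponent to exactly $-t$ (with $1-2\eta\|\Sigma\|=\sqrt{\tr(\Sigma^2)}/(\sqrt{\tr(\Sigma^2)}+2\|\Sigma\|\sqrt{t})$, a short computation gives $-\eta u+\eta^2\tr(\Sigma^2)/(1-2\eta\|\Sigma\|)=-t$). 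The paper itself does not prove this lemma but simply cites it as Proposition~1 of Hsu--Kakade--Zhang (2012), and your argument is essentially a faithful reconstruction of the Laplace-transform proof used there, so there is no real divergence of approach to comment on.
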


\subsection{Existing Lemmas}\label{append:existing-lemma}
In this subsection, we list existing lemmas from~\cite{jin2018accelerated,jin2019stochastic} that we use in our proof.

First, we use the following lemma for the large gradient scenario of gradient descent method:
\begin{lemma}[{\cite[Lemma 19]{jin2019stochastic}}]\label{lem:descent-lemma}
If $f(\cdot)$ is $\ell$-smooth and $\rho$-Hessian Lipschitz, $\eta=1/\ell$, then the gradient descent sequence $\{\vect{x}_{t}\}$ satisfies:
\begin{align}
f(\vect{x}_{t+1})-f(\vect{x}_{t})\leq \eta \|\nabla f(\vect{x})\|^{2},
\end{align}
for any step $t$ in which quantum simulation is not called.
\end{lemma}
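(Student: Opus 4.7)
The plan is to derive the classical ``descent lemma'' directly from the $\ell$-smoothness hypothesis. Since the statement only concerns iterations on which the quantum simulation subroutine is not invoked, the update rule in \algo{PGD+QS} reduces to the standard gradient step $\vect{x}_{t+1} = \vect{x}_t - \eta \nabla f(\vect{x}_t)$, and the Hessian-Lipschitz hypothesis is not actually required for this particular bound (only $\ell$-smoothness is).

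The first step is to invoke the quadratic upper bound that follows from $\ell$-smoothness by integrating $\nabla f$ along the segment from $\vect{x}_t$ to $\vect{x}_{t+1}$:
\begin{align}
f(\vect{x}_{t+1}) \le f(\vect{x}_t) + \langle \nabla f(\vect{x}_t), \vect{x}_{t+1} - \vect{x}_t\rangle + \frac{\ell}{2}\|\vect{x}_{t+1} - \vect{x}_t\|^2.
\end{align}
Next, I would substitute the gradient descent step $\vect{x}_{t+1} - \vect{x}_t = -\eta\,\nabla f(\vect{x}_t)$ into this bound to obtain
\begin{align}
f(\vect{x}_{t+1}) - f(\vect{x}_t) \le -\eta \|\nabla f(\vect{x}_t)\|^2 + \frac{\ell\eta^2}{2}\|\nabla f(\vect{x}_t)\|^2 = -\eta\Big(1 - \frac{\ell\eta}{2}\Big)\|\nabla f(\vect{x}_t)\|^2.
\end{align}
Plugging in the choice $\eta = 1/\ell$ collapses the coefficient to $-\eta/2$, which in particular is dominated by $\eta$ in the stated inequality. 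This completes the argument in a couple of lines.

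There is no real obstacle: the proof is a short, standard calculation, and the only ingredients are the $\ell$-smoothness inequality and the GD update rule. The mild subtlety is simply noting that the ``not called'' condition on quantum simulation guarantees the iterate genuinely follows the vanilla GD recursion (no perturbation, no replacement by $\arg\min$ over $\{\vect{x}_t \pm \Delta_t\}$), so the $\ell$-smoothness upper bound applies cleanly to the displacement $\eta\nabla f(\vect{x}_t)$.
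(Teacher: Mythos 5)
Your proof is correct and is the standard derivation of the gradient-descent ``descent lemma.'' The paper does not spell out a proof of its own here --- it cites \cite[Lemma~19]{jin2019stochastic} as an external fact --- but the argument you give is exactly the one used in that reference: invoke the quadratic upper bound implied by $\ell$-smoothness, substitute the GD update, and set $\eta = 1/\ell$.

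One thing worth stating more plainly than you did: the inequality as printed in the paper, $f(\vect{x}_{t+1})-f(\vect{x}_{t})\leq \eta \|\nabla f(\vect{x})\|^{2}$, must be a typographical error. With a positive right-hand side the bound is nearly vacuous and would be useless where it is applied (in the proof of \thm{PGD+QS-Complexity}, which needs a \emph{decrease} of order $\eta\epsilon^2$ per large-gradient step). The intended statement is $f(\vect{x}_{t+1})-f(\vect{x}_{t})\leq -\tfrac{\eta}{2} \|\nabla f(\vect{x}_t)\|^{2}$, which is precisely what your calculation produces. You correctly observe that your bound implies the printed one, but it is cleaner to recognize and fix the sign/constant outright rather than hedge by saying it ``is dominated by'' the stated right-hand side. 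You are also right that the $\rho$-Hessian Lipschitz hypothesis plays no role in this particular lemma; only $\ell$-smoothness is needed.
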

The next lemmas are frequently used in the large gradient scenario of the accelerated gradient descent method:
\begin{lemma}[{\cite[Lemma 7]{jin2018accelerated}}]\label{lem:AGD-large-gradient}
Consider the setting of \thm{QuantumSimulationAGD}. If we have $\|\nabla f(\vect{x}_{\tau})\|\geq \epsilon$ for all $\tau\in[0,\mathscr{T}]$, then there exists a large enough positive constant $c_{A0}$, such that if we choose $c_{A}\geq c_{A0}$, by running \algo{PAGDQS} we have $E_{\mathscr{T}}-E_{0}\leq -\mathscr{E}$, in which $\mathscr{E}=\sqrt{\frac{\epsilon^{3}}{\rho}}\cdot c_{A}^{-7}$, and $E_{\tau}$ is defined as:
\begin{align}\label{eqn:Lemma-A7}
E_{\tau}:=f(\vect{x}_{\tau})+\frac{1}{2\eta'}\|\vect{v}_{\tau}\|^{2}
\end{align}
where $\eta'=\frac{1}{4\ell}$ as in \thm{QuantumSimulationAGD}.
\end{lemma}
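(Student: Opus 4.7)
The plan is to reduce \lem{AGD-large-gradient} to its classical counterpart. Since by hypothesis $\|\nabla f(\vect{x}_\tau)\| \geq \epsilon$ for every $\tau \in [0,\mathscr{T}]$, the gradient-magnitude test in \algo{PAGDQS} never fires, so no QuantumSimulation call is made during the window. Consequently, the iterates generated by \algo{PAGDQS} in the interval $[0,\mathscr{T}]$ coincide exactly with those produced by the classical PAGD of \cite{jin2018accelerated}, and the statement is the direct quantum-algorithmic translation of \cite[Lemma 7]{jin2018accelerated}. What I would do is reproduce that proof skeleton in the notation of \thm{QuantumSimulationAGD}.

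The argument proceeds by a case split on the Negative-Curvature-Exploitation branch (line \lin{NCE}). If NCE is invoked at some $\tau^* \in [0,\mathscr{T})$, then by the design of NCE a single invocation decreases the Hamiltonian by at least a fixed polynomial in $s$ and $\gamma$; plugging in $\gamma = \theta^{2}/\eta$ and $s = \gamma/(4\rho)$ converts this into a decrease of order $\sqrt{\epsilon^{3}/\rho}$, which dominates $\mathscr{E} = c_A^{-7}\sqrt{\epsilon^{3}/\rho}$ as soon as $c_A \geq 1$. Outside the NCE step, $E_\tau$ is non-increasing by \lem{Hamiltonian-decrease}, so the total decrease across the window is at least this amount. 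If NCE is never invoked, then at every step the ``almost-convex'' inequality
\[
f(\vect{x}_\tau) > f(\vect{y}_\tau) + \langle \nabla f(\vect{y}_\tau), \vect{x}_\tau - \vect{y}_\tau\rangle - \tfrac{\gamma}{2}\|\vect{x}_\tau - \vect{y}_\tau\|^{2}
\]
holds, and the standard AGD one-step analysis yields
\[
E_{\tau+1} - E_\tau \leq -\tfrac{\theta}{2\eta'}\|\vect{v}_\tau\|^{2} - \tfrac{\eta'}{2}\|\nabla f(\vect{y}_\tau)\|^{2}.
\]
I would then subdivide once more: either some $\|\vect{v}_\tau\|$ is large relative to $(\epsilon/\rho)^{1/2}$ during the window, in which case a single momentum-damping term already supplies the required decrease; or $\|\vect{v}_\tau\|$ stays uniformly small throughout, so that $\vect{y}_\tau$ is close to $\vect{x}_\tau$ and the Hessian-Lipschitz condition gives $\|\nabla f(\vect{y}_\tau)\| \geq \|\nabla f(\vect{x}_\tau)\|/2 \geq \epsilon/2$, whence summing the second term over $\mathscr{T} = c_A \sqrt{\kappa}$ steps yields a total decrease of order $\mathscr{T}\eta' \epsilon^{2} = c_A \sqrt{\epsilon^{3}/\rho}$.

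The main obstacle is the careful bookkeeping of the constant $c_A$ across these three subcases: each delivers a Hamiltonian decrease of a different polynomial combination of $\epsilon$, $\rho$, $\ell$, and $c_A$, and one must verify that the worst of the three still dominates the claimed $\mathscr{E} = c_A^{-7}\sqrt{\epsilon^{3}/\rho}$. A secondary technical point is controlling the cross term $\tfrac{\gamma}{2}\|\vect{x}_\tau - \vect{y}_\tau\|^{2} = \tfrac{\gamma(1-\theta)^{2}}{2}\|\vect{v}_\tau\|^{2}$ produced by the almost-convex inequality: it must be absorbed into $\tfrac{\theta}{2\eta'}\|\vect{v}_\tau\|^{2}$, which is precisely the calibration behind the parameter choices $\gamma = \theta^{2}/\eta$ and $\eta' = \eta/4$ fixed in \thm{QuantumSimulationAGD}.
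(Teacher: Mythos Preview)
Your reduction is exactly the paper's approach: since $\|\nabla f(\vect{x}_\tau)\|\geq\epsilon$ throughout the window, the quantum-simulation branch of \algo{PAGDQS} is never triggered, the iterates coincide with classical PAGD, and the statement becomes \cite[Lemma~7]{jin2018accelerated} verbatim (modulo the $\iota^{-5}$ factor, which the paper explains is only relevant to the saddle-escape analysis and can be dropped here). The paper in fact stops at this citation and does not reproduce the internal case split you sketch; your outline of the NCE/no-NCE and large-momentum/small-momentum subcases is consistent with the original Jin--Netrapalli--Jordan argument, though note that the bound $\|\nabla f(\vect{y}_\tau)\|\geq\|\nabla f(\vect{x}_\tau)\|/2$ comes from $\ell$-smoothness rather than the Hessian-Lipschitz condition.
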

Note that this lemma is not exactly the same as Lemma 7 of \cite{jin2018accelerated}: to be more specific, they have an extra $\iota^{-5}$ term appearing in the $\mathscr{E}$. However, this term actually only appears when we need to escape from a saddle point using the original AGD algorithm. In large gradient scenarios where the gradient is greater than $\epsilon$, it does not make a difference if we ignore this $\iota^{-5}$ term.

\begin{lemma}[{\cite[Lemma 4 and Lemma 5]{jin2018accelerated}}]\label{lem:Hamiltonian-decrease}
Assume that the function $f$ is $\ell$-smooth. Consider the setting of \thm{QuantumSimulationAGD}, for every iteration $\tau$ where quantum simulation was not called, we have
\begin{align}
E_{\tau+1}\leq E_{\tau},
\end{align}
where $E_{\tau}$ is defined in \eqn{Lemma-A7} in \lem{AGD-large-gradient}.
\end{lemma}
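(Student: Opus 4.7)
The plan is to split the analysis into the two kinds of non-quantum-simulation iterations that can arise in \algo{PAGDQS}: (i) a plain AGD step (the \texttt{else} branch fires but the condition for \lin{NCE} does not), and (ii) a Negative-Curvature-Exploitation step (line \lin{NCE} fires). In case (i) one needs a direct descent computation on $E_\tau = f(\vect{x}_\tau) + \tfrac{1}{2\eta'}\|\vect{v}_\tau\|^2$. In case (ii) one invokes the guaranteed Hamiltonian drop of the NCE subroutine. I expect case (i) to be where all the real work lives, while case (ii) reduces to quoting the specification of Negative-Curvature-Exploitation from \cite{jin2018accelerated}.

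For case (i), I would start from the AGD update $\vect{x}_{\tau+1}=\vect{y}_\tau-\eta'\nabla f(\vect{y}_\tau)$ and use $\ell$-smoothness with $\eta'=1/(4\ell)$ to get
\begin{align*}
f(\vect{x}_{\tau+1})\le f(\vect{y}_\tau)-\tfrac{7\eta'}{8}\|\nabla f(\vect{y}_\tau)\|^2 .
\end{align*}
Since \lin{NCE} did not fire, the test gives $f(\vect{x}_\tau)\ge f(\vect{y}_\tau)+\langle\nabla f(\vect{y}_\tau),\vect{x}_\tau-\vect{y}_\tau\rangle-\tfrac{\gamma}{2}\|\vect{x}_\tau-\vect{y}_\tau\|^2$; using $\vect{y}_\tau-\vect{x}_\tau=(1-\theta)\vect{v}_\tau$ this rearranges to a bound on $f(\vect{y}_\tau)$ in terms of $f(\vect{x}_\tau)$, $\langle\nabla f(\vect{y}_\tau),\vect{v}_\tau\rangle$ and $\|\vect{v}_\tau\|^2$. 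Expanding $\|\vect{v}_{\tau+1}\|^2=\|(1-\theta)\vect{v}_\tau-\eta'\nabla f(\vect{y}_\tau)\|^2$, adding $\tfrac{1}{2\eta'}\|\vect{v}_{\tau+1}\|^2$ to the smoothness bound, and collecting terms yields
\begin{align*}
E_{\tau+1}\le f(\vect{x}_\tau)+(1-\theta)^2\Bigl(\tfrac{\gamma}{2}+\tfrac{1}{2\eta'}\Bigr)\|\vect{v}_\tau\|^2-\tfrac{3\eta'}{8}\|\nabla f(\vect{y}_\tau)\|^2 ,
\end{align*}
where the mixed inner product terms cancel exactly. Monotonicity $E_{\tau+1}\le E_\tau$ now reduces to the purely algebraic inequality $(1-\theta)^2(1+\gamma\eta')\le 1$, which with the parameter choices $\theta=1/(4\sqrt{\kappa})$, $\gamma=\theta^2/\eta$, $\eta=1/\ell$, $\eta'=1/(4\ell)$ gives $(1-\theta)^2(1+\theta^2/4)\le 1$ for all sufficiently small $\theta$, hence for $\epsilon\le \ell^2/\rho$ (which makes $\kappa\ge 1$).

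For case (ii), the Negative-Curvature-Exploitation subroutine of \cite{jin2018accelerated} is specifically designed so that the pair $(\vect{x}_{\tau+1},\vect{v}_{\tau+1})$ it outputs satisfies $E_{\tau+1}\le E_\tau$ (in fact it strictly decreases $E$ by a quantified amount when the large negative-curvature certificate holds). I would simply invoke this guarantee; since NCE only uses gradient/function queries on $f$ and not the quantum simulation, importing the statement verbatim is legitimate in our hybrid algorithm. Combining the two cases gives $E_{\tau+1}\le E_\tau$ at every iteration in which \algo{QuantumSimulation} is not invoked. The main obstacle is the bookkeeping in case (i): one must be careful to pick up the $-\tfrac{\gamma}{2}\|\vect{x}_\tau-\vect{y}_\tau\|^2$ term from the non-trigger condition with the right sign so that the cross terms $(1-\theta)\langle\nabla f(\vect{y}_\tau),\vect{v}_\tau\rangle$ from the smoothness bound and from $\|\vect{v}_{\tau+1}\|^2/(2\eta')$ cancel exactly, after which monotonicity becomes a one-line inequality in the fixed constants.
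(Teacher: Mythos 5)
Your proposal is correct and follows the same two-case structure that the paper (and the source it cites, Lemmas 4 and 5 of Jin et al.~2018) uses: a direct descent computation on $E_\tau$ for the plain AGD branch, and an appeal to the specification of Negative-Curvature-Exploitation for the other branch. The paper itself does not spell out the algebra and simply cites \cite{jin2018accelerated} with a brief informal justification; your case~(i) derivation (exact cancellation of the cross terms, reduction to $(1-\theta)^2(1+\gamma\eta')\le 1$, which holds since $\gamma\eta'=\theta^2/4$ and $\theta=1/(4\sqrt{\kappa})\le 1/4$ when $\epsilon\le\ell^2/\rho$) correctly reproduces the argument of Lemma~4 there, and case~(ii) legitimately imports Lemma~5.
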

The correctness of these two lemmas above is guaranteed by two mechanisms. If the function does not have a large negative curvature between $\vect{x}_{t}$ and $\vect{y}_{t}$ in the current iteration, the AGD will simply make the Hamiltonian decrease efficiently. Otherwise, the Negative-Curvature-Exploitation procedure in \lin{NCE} of \algo{PAGDQS} will be triggered (same as in \cite{jin2018accelerated}) and decrease the Hamiltonian by either finding the minimum function value in the nearby region of $\vect{x}_{t}$ if $\vect{v}_{t}$ is small, or directly resetting $\vect{v}_{t}=0$ if it is large.

\end{document}